%% file: main_arxiv.tex
\documentclass[11pt]{article}

\input{preamble/common}
\input{preamble/arxiv}

\title{Universal Machine-learning Molecular Dynamics at the Speed of Empirical Potentials}

\author{
    \begin{minipage}{0.95\textwidth}
        \centering
        Tiancheng Li$^{1,2}$ \quad
        Jianming Xue$^{1,3,*}$ \quad
        Linfeng Zhang$^{2,4,*}$\\
        Duo Zhang$^{2,4,5,*}$ \quad
        Han Wang$^{3,6,*}$ \\[0.6em]
        {\normalfont\small
            $^{1}$State Key Laboratory of Nuclear Physics and Technology, School of Physics, Peking University, Beijing 100871, China\\
            $^{2}$AI for Science Institute, Beijing 100080, P. R. China\\
            $^{3}$HEDPS, CAPT, College of Engineering, Peking University, Beijing 100871, P. R. China\\
            $^{4}$DP Technology, Beijing 100080, P. R. China\\
            $^{5}$Academy for Advanced Interdisciplinary Studies, Peking University, Beijing 100871, P. R. China\\
            $^{6}$National Key Laboratory of Computational Physics, Institute of Applied Physics and Computational Mathematics, Fenghao East Road 2, Beijing 100094, P. R. China\\
            $^{*}$Correspondence: \href{mailto:jmxue@pku.edu.cn}{jmxue@pku.edu.cn};
            \href{mailto:linfeng.zhang.zlf@gmail.com}{linfeng.zhang.zlf@gmail.com};
            \href{mailto:zhduodyx@pku.edu.cn}{zhduodyx@pku.edu.cn};
            \href{mailto:wang_han@iapcm.ac.cn}{wang\_han@iapcm.ac.cn}
        }
    \end{minipage}
}

\date{}

\begin{document}

\maketitle

\addtocontents{toc}{\protect\setcounter{tocdepth}{-1}}

\input{chap/abstract}
\input{chap/introduction}
\input{chap/results}
\input{chap/discussion}
\input{chap/methods}
\input{chap/acknowledgments}
\input{chap/data_availability}

\bibliographystyle{naturemag-doi}
\bibliography{ref}

\input{chap/si}

\end{document}

%% file: preamble/common.tex
\usepackage[utf8]{inputenc}
\usepackage[T1]{fontenc}
\usepackage{amsmath}
\usepackage{amsthm}
\usepackage{amsfonts}
\usepackage{amssymb}
\usepackage{mathtools}
\usepackage{xcolor}
\usepackage{graphicx}
\usepackage[version=4]{mhchem}
\usepackage[flushleft]{threeparttable}
\usepackage{booktabs}
\usepackage{makecell}
\usepackage{siunitx}
\usepackage{setspace}
\usepackage{array}
\usepackage{enumitem}
\usepackage{bm}
\usepackage{wrapfig}
\usepackage{placeins}
\usepackage{fix-cm}
\usepackage{tocloft}
\usepackage{listings}

\newcommand{\PaperTableStyle}{%
    \centering
    \footnotesize
    \setlength{\tabcolsep}{3pt}%
    \renewcommand{\arraystretch}{1.10}%
}
\newcommand{\PaperTableNotesStyle}{\scriptsize}

\lstdefinestyle{LAMMPSInput}{%
    basicstyle=\ttfamily\scriptsize,
    columns=fullflexible,
    keepspaces=true,
    showstringspaces=false,
    breaklines=true,
    frame=tb,
    framerule=0.4pt,
    rulecolor=\color{black},
    captionpos=t,
    aboveskip=0.75em,
    belowskip=0.75em
}

\renewcommand{\thefootnote}{\fnsymbol{footnote}}

\newcommand{\PreserveBackslash}[1]{\let\temp=\\#1\let\\=\temp}
\newcolumntype{C}[1]{>{\PreserveBackslash\centering}p{#1}}
\newcolumntype{R}[1]{>{\PreserveBackslash\raggedleft}p{#1}}
\newcolumntype{L}[1]{>{\PreserveBackslash\raggedright}p{#1}}

\providecommand{\Orth}{\operatorname{O}}
\providecommand{\SO}{\operatorname{SO}}
\providecommand{\STF}{\operatorname{STF}}

\providecommand{\sinc}{\operatorname{sinc}}
\providecommand{\SiLU}{\operatorname{SiLU}}
\providecommand{\SwiGLU}{\operatorname{SwiGLU}}

\providecommand{\vech}{\operatorname{vech}}
\providecommand{\tr}{\operatorname{tr}}
\providecommand{\clip}{\operatorname{clip}}

\providecommand{\MLP}{\operatorname{MLP}}
\providecommand{\tp}{^{\mathsf{T}}}

\providecommand{\R}{\mathbb{R}}

\providecommand{\Sph}{\mathbb{S}^{2}}

\providecommand{\bA}{\bm{A}}
\providecommand{\bB}{\bm{B}}
\providecommand{\bD}{\bm{D}}
\providecommand{\bF}{\bm{F}}
\providecommand{\bG}{\bm{G}}
\providecommand{\bI}{\bm{I}}
\providecommand{\bJ}{\bm{J}}
\providecommand{\bQ}{\bm{Q}}
\providecommand{\bR}{\bm{R}}

\providecommand{\bU}{\bm{U}}
\providecommand{\bW}{\bm{W}}
\providecommand{\bX}{\bm{X}}
\providecommand{\bZ}{\bm{Z}}

\providecommand{\bff}{\bm{f}}
\providecommand{\bg}{\bm{g}}
\providecommand{\bh}{\bm{h}}
\providecommand{\bq}{\bm{q}}
\providecommand{\br}{\bm{r}}
\providecommand{\bu}{\bm{u}}
\providecommand{\bv}{\bm{v}}
\providecommand{\bz}{\bm{z}}

\providecommand{\bbeta}{\bm{\beta}}
\providecommand{\bgamma}{\bm{\gamma}}
\providecommand{\bpsi}{\bm{\psi}}
\providecommand{\bPi}{\bm{\Pi}}
\providecommand{\bXi}{\bm{\Xi}}

\DeclareSIUnit{\Angstrom}{\text{\AA}}
\providecommand{\Ang}{\unit{\Angstrom}}

\providecommand{\rcut}{r_{\mathrm{c}}}

\providecommand{\Nb}{\mathcal{N}}

\theoremstyle{plain}
\newtheorem{theorem}{Theorem}[section]
\newtheorem{proposition}[theorem]{Proposition}

\theoremstyle{definition}

\theoremstyle{remark}

\newcommand{\beginsupplement}{%
    \clearpage
    \setcounter{section}{0}%
    \renewcommand{\thesection}{S\arabic{section}}%
    \renewcommand{\theHsection}{S.\arabic{section}}%
    \setcounter{equation}{0}%
    \renewcommand{\theequation}{S\arabic{equation}}%
    \renewcommand{\theHequation}{S.\arabic{equation}}%
    \setcounter{figure}{0}%
    \renewcommand{\thefigure}{S\arabic{figure}}%
    \renewcommand{\theHfigure}{S.\arabic{figure}}%
    \setcounter{table}{0}%
    \renewcommand{\thetable}{S\arabic{table}}%
    \renewcommand{\theHtable}{S.\arabic{table}}%
    \setcounter{lstlisting}{0}%
    \renewcommand{\thelstlisting}{S\arabic{lstlisting}}%
}

%% file: preamble/arxiv.tex
\usepackage[letterpaper]{geometry}
\usepackage[superscript,biblabel,nomove]{cite}
\usepackage[colorlinks=true,allcolors=blue]{hyperref}
\usepackage{mathptmx}
\usepackage{microtype}

\makeatletter
\renewcommand{\section}{%
    \@startsection{section}{1}{\z@}%
    {-2.0ex \@plus -0.5ex \@minus -0.2ex}%
    {1.5ex \@plus 0.3ex \@minus 0.2ex}%
    {\large\bfseries\raggedright}%
}
\renewcommand{\subsection}{%
    \@startsection{subsection}{2}{\z@}%
    {-1.8ex \@plus -0.5ex \@minus -0.2ex}%
    {0.8ex \@plus 0.2ex}%
    {\normalsize\bfseries\raggedright}%
}
\renewcommand{\subsubsection}{%
    \@startsection{subsubsection}{3}{\z@}%
    {-1.5ex \@plus -0.5ex \@minus -0.2ex}%
    {0.5ex \@plus 0.2ex}%
    {\normalsize\bfseries\raggedright}%
}
\renewcommand{\paragraph}{%
    \@startsection{paragraph}{4}{\z@}%
    {1.5ex \@plus 0.5ex \@minus 0.2ex}%
    {-1em}%
    {\normalsize\bfseries}%
}

\newcommand{\@toptitlebar}{%
    \hrule height 4pt
    \vskip 0.25in
    \vskip -\parskip
}
\newcommand{\@bottomtitlebar}{%
    \vskip 0.29in
    \vskip -\parskip
    \hrule height 1pt
    \vskip 0.09in
}

\renewcommand{\maketitle}{%
    \par
    \begingroup
    \renewcommand{\thefootnote}{\fnsymbol{footnote}}%
    \renewcommand{\@makefnmark}{\hbox to \z@{$^{\@thefnmark}$\hss}}%
    \long\def\@makefntext##1{%
        \parindent 1em\noindent
        \hbox to 1.8em{\hss $\m@th ^{\@thefnmark}$}##1%
    }%
    \thispagestyle{empty}%
    \vbox{%
        \hsize\textwidth
        \linewidth\hsize
        \vskip 0.1in
        \@toptitlebar
        \centering
        {\LARGE\bfseries \@title\par}
        \@bottomtitlebar
        \def\And{%
            \end{tabular}\hfil\linebreak[0]\hfil
            \begin{tabular}[t]{c}\bfseries\rule{\z@}{24pt}\ignorespaces
        }%
        \def\AND{%
            \end{tabular}\hfil\linebreak[4]\hfil
            \begin{tabular}[t]{c}\bfseries\rule{\z@}{24pt}\ignorespaces
        }%
        \begin{tabular}[t]{c}\bfseries\rule{\z@}{24pt}\@author\end{tabular}%
        \vskip 0.3in \@minus 0.1in
    }%
    \@thanks
    \endgroup
    \let\maketitle\relax
    \let\thanks\relax
}

\renewenvironment{abstract}%
{%
    \vskip 0.075in
    \centerline{\large\bfseries Abstract}
    \vspace{0.5ex}
    \begin{quote}
}
{
    \par
    \end{quote}
    \vskip 1ex
}
\makeatother

%% file: chap/abstract.tex
\begin{abstract}
    No interatomic potential has offered universality across chemistry, near-first-principles accuracy and the speed of empirical potentials at once.
    Here we introduce DPA4C, an equivariant potential whose architecture and compressed CUDA operators are co-designed under deployment constraints to pursue accuracy and efficiency together.
    Five variants spanning a 49-fold parameter range form the high-throughput end of the measured accuracy--throughput frontier.
    The largest variant approaches the accuracy of the MACE-Omat models at about two orders of magnitude higher measured throughput.
    The most compact reduces the energy, force and stress errors of the fastest existing universal MLIP by 61.4\%, 48.1\% and 34.3\% at 1.92 times its saturated throughput.
    All five variants complete multimillion-atom simulations on a single GPU and run molecular dynamics for 2.048 billion atoms on 1,024 16-GB NVIDIA V100 GPUs at 83.3--91.2\% weak-scaling efficiency.
    Compared with the MEAM empirical potential, DPA4C-Nano reaches 1.8 and 2.5 times the saturated throughput in single-GPU scans on the same V100 hardware for diamond carbon and FCC copper, respectively.
    DPA4C therefore brings quantum-trained universal accuracy into a regime of speed and system size previously associated with empirical potentials.
\end{abstract}

%% file: chap/introduction.tex
\section{Introduction}
\label{sec:introduction}

No available interatomic potential simultaneously delivers universality across chemistry, near-first-principles accuracy and the speed of empirical potentials.
Accuracy and scale have been achieved together by system-specific machine-learning interatomic potentials (MLIPs)~\cite{behler2007generalized,bartok2010gaussian,thompson2015spectral,shapeev2016moment,drautz2019atomic,zhang2018deep}, which exceed one hundred million atoms in optimized implementations~\cite{jia2020pushing} and approach empirical-potential cost after tabulated compression~\cite{lu2022dp}.
Yet each such potential is bound to one composition, whereas the processes that demand large-scale molecular dynamics (MD) are intrinsically multi-component, such as segregation at grain-boundary networks, multi-principal-element alloys and reactive interfaces.
Retraining for every new system is a data-generation campaign of weeks to months~\cite{zhang2019active,podryabinkin2017active,Vandermause2022} that cannot follow the combinatorial growth of composition space.
Universal MLIPs amortize this effort into a single pre-training across broad chemistry~\cite{chen2022universal,deng2023chgnet,batatia2023foundation,yang2024mattersim}, pairing universality with accuracy.
Their computational cost, however, confines them to sizes and rates far below the empirical-potential regime in which those processes occur.

Universality and accuracy are pursued together by equivariant neural networks.
Message-passing models such as NequIP~\cite{batzner2022nequip}, MACE~\cite{batatia2022mace}, Equiformer~\cite{liao2022equiformer,liao2023equiformerv2}, eSEN~\cite{fu2025esen} and DPA4~\cite{li2026dpa4} iterate learned tensor features over the atomistic graph, so the feature width enters communication and persistent memory during MD.
Strictly local variants remove the inter-atomic propagation.
Allegro, for example, maintains learned equivariant tensors on every edge and updates them through layered tensor products~\cite{musaelian2022learning}.
Recent foundation models in this family reach the leading inference speeds among equivariant universal potentials through software optimization of a fixed architecture, combining compiled graphs, accelerated tensor-product kernels and mixed-precision techniques~\cite{kavanagh2026fast}.
Even so, their reported single-GPU capacities remain orders of magnitude below the empirical-potential regime~\cite{kavanagh2026fast}.
Pruning the message-passing depth of foundation models and partitioning the graph across GPUs accelerates deployment, yet the gap remains at orders of magnitude~\cite{kong2026scalable}.
Universality and scale are pursued together by NEP89, which extends the GPU-oriented neuroevolution potential to 89 elements~\cite{fan2021nep,liang2026nep89}, at markedly higher errors than the equivariant models.
Completing the universality--accuracy--speed triangle therefore requires a different architecture, not only a faster implementation.

\begin{figure}[t]
    \centering
    \includegraphics[width=\linewidth]{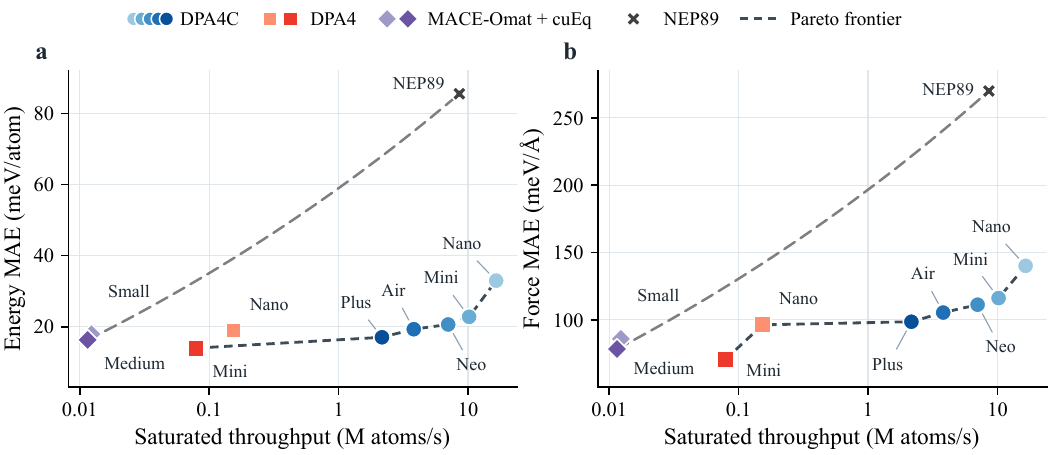}
    \caption{%
        Accuracy and saturated throughput on OMat24.
        \textbf{a}, Energy-per-atom MAE; \textbf{b}, force MAE on the OMat24 validation set, each plotted against saturated throughput on one NVIDIA H20.
        Lower error and higher throughput are preferred.
        Dark dashed lines mark the measured Pareto frontier; medium-gray dashed curves connect MACE-Omat-Medium and NEP89, the trade-off available before this work.
        DPA4C and NEP89 throughputs are calculated from complete NVT molecular-dynamics steps in LAMMPS/Kokkos and GPUMD, respectively, using the same initial diamond-carbon geometry at each system size.
        DPA4 and MACE-Omat throughputs are the largest mean rates found by a diamond-supercell size scan of ASE energy, force and stress evaluations, with MACE on the cuEquivariance-accelerated path~\cite{ase-paper,nvidia_cuequivariance,li2026dpa4}.
        DPA4 and MACE-Omat accuracy values are taken from the DPA4 study~\cite{li2026dpa4}, whereas the released NEP89 model was evaluated independently in this work~\cite{liang2026nep89}; its complete evaluation protocol is given in Supplementary Note~\ref{si:nep89-omat24-evaluation}.
        Because the backends differ, these comparisons summarize the stated engine-specific deployment measurements rather than hardware-independent model efficiency.
    }
    \label{fig:accuracy-throughput}
\end{figure}

Here we introduce DPA4C, a compact equivariant potential co-designed with its compressed CUDA operators, and demonstrate that one architecture can hold all three corners at deployment scale.
The design rests on two architectural constraints.
Every learned function evaluated on an edge depends only on the interatomic distance and the element pair, so at deployment the learned model collapses into an interpolation table and a finite cache~\cite{lu2022dp}.
Every learned state is local to one atom and is processed in fixed-size tiles, so the memory that grows with the system is dominated by the neighbor graph and the physical outputs.
In addition, DPA4C keeps a single message-passing layer that consumes only neighbor coordinates and types, so distributed runs exchange no learned features between GPUs.
The suffix C names the resulting pair of properties: compact and compressible by construction.
\FloatBarrier

We benchmark five variants, spanning a 49-fold range of parameter counts, on OMat24~\cite{barroso2024open}, MatPES~\cite{kaplan2025matpes} and OMol25~\cite{levine2025open}.
On OMat24, the five variants establish the high-throughput end of the measured accuracy--throughput Pareto frontier (Fig.~\ref{fig:accuracy-throughput}).
The largest variant, DPA4C-Plus, approaches the accuracy of the MACE-Omat models at about two orders of magnitude higher measured throughput.
The most compact variant, DPA4C-Nano, reduces the energy, force and stress errors of the independently evaluated NEP89 model by 61.4\%, 48.1\% and 34.3\%, respectively, while delivering about twice its saturated throughput on one NVIDIA H20.
Training each OMat24 variant requires 6.8--35.2 H20 GPU-hours.
The same compact, compressible model family extends to the charged and open-shell molecular systems of OMol25, broadening its chemical scope beyond materials.

Every variant completes multimillion-atom simulations on a single GPU, and on 1,024 16-GB NVIDIA V100 GPUs the five variants advance 2.048 billion atoms at 83.3--91.2\% weak-scaling efficiency.
Across complete single-GPU size scans on the same V100 hardware, Nano reaches 1.8 and 2.5 times the saturated throughput of MEAM for diamond carbon and FCC copper, respectively~\cite{baskes1992meam}, placing universal-potential MD within the empirical-potential range.
Together, these results show that one architecture now holds all three corners at once: universality across chemistry, near-first-principles accuracy and the speed of empirical potentials.

%% file: chap/results.tex
\section{Results}
\label{sec:results}

\subsection{The DPA4C architecture and compressed execution}
\label{sec:architecture}

\begin{figure}[!t]
    \centering
    \includegraphics[width=1.\textwidth]{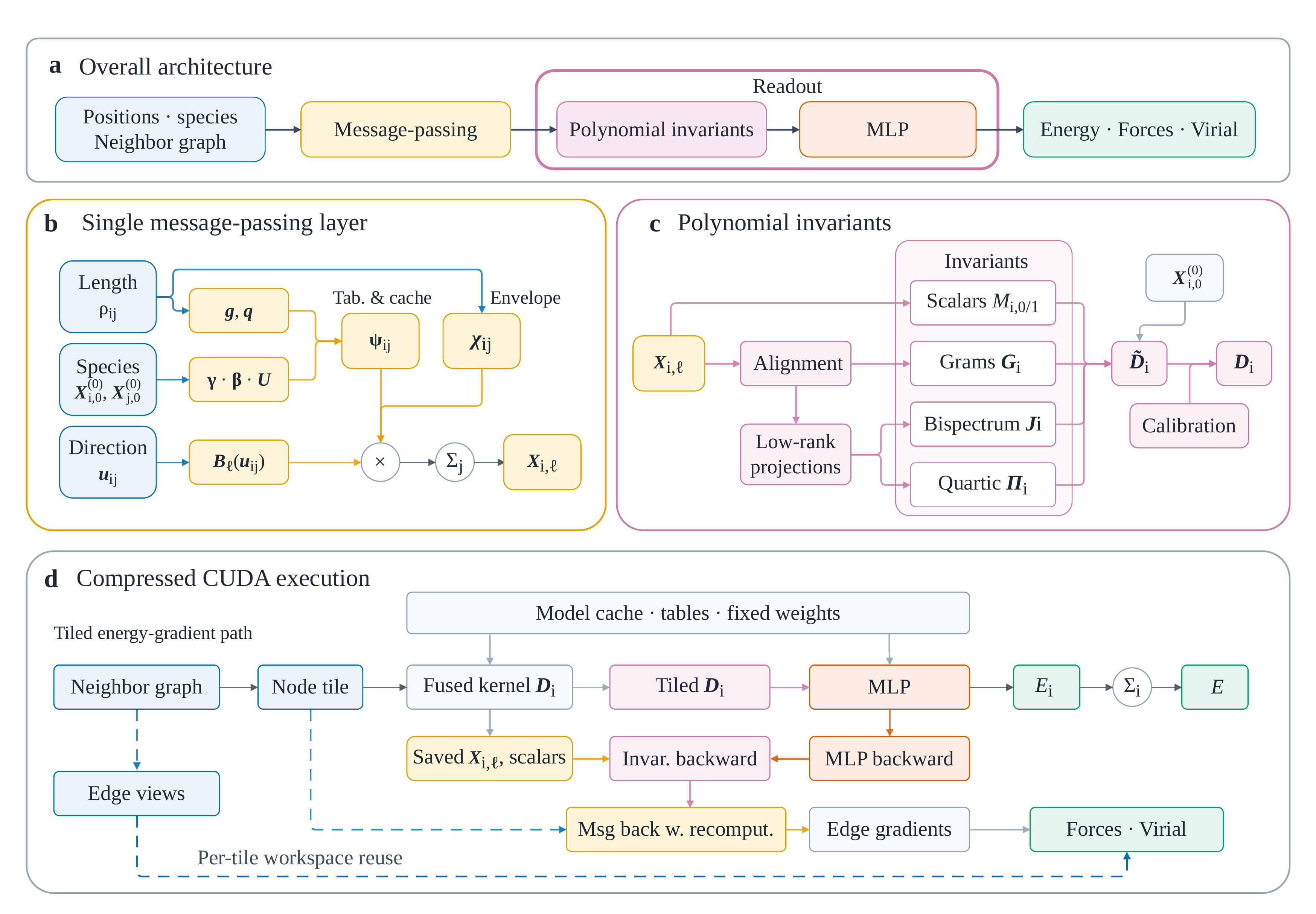}
    \caption{%
        DPA4C architecture and compressed CUDA execution.
        \textbf{a}, Overall architecture.
        \textbf{b}, The single message-passing layer: factorized edge messages and one aggregation over the neighborhood.
        \textbf{c}, Polynomial invariants and their fixed calibration (Methods~\S\ref{sec:methods-output}).
        \textbf{d}, Compressed CUDA execution: tiled forward--backward evaluation with message recomputation and force--virial assembly.
    }
    \label{fig:architecture}
\end{figure}

DPA4C is an equivariant graph network designed around the two constraints stated in the introduction.
From the atomic positions and species, a single message-passing layer builds equivariant features on each atom, and a nonlinear readout maps these features to the atomic energy (Fig.~\ref{fig:architecture}a).
Forces and the virial follow by differentiation.

Each atom enters the message-passing layer with an initial feature \(\bX^{(0)}_{i,0}\) encoding its species, and one aggregation over its neighborhood produces its updated equivariant features (Fig.~\ref{fig:architecture}b; Methods~\S\ref{sec:methods-moments}).
The message on an edge \(j\to i\) carries a learned amplitude \(\bpsi_{ij}\), which depends on the interatomic distance \(\rho_{ij}\) through a shared radial map and on the initial features of the two endpoints through cached modulation coefficients.
The amplitude is multiplied by the real Cartesian harmonics \(\bB_\ell(\bu_{ij})\) of the edge direction \(\bu_{ij}\) and by a smooth cutoff envelope \(\chi_{ij}\).
The degree-\(\ell\) harmonic block forms an irreducible representation of \(\Orth(3)\), for all degrees up to the maximum \(L\).
The aggregation accumulates all degrees and channels of the messages in one reduction over the neighborhood.
The same reduction accumulates the two sums that define the smooth neighborhood normalizers \(M_{i,0}\) and \(M_{i,1}\), which rescale the features degree by degree.
The resulting node features \(\bX_{i,\ell}\) transform equivariantly but remain local to atom \(i\).
No second layer exists to propagate them.

The readout proceeds in two steps.
Fixed polynomial contractions collect \(\Orth(3)\) invariants of the features into the fixed-length invariant feature vector \(\bD_i\) (Fig.~\ref{fig:architecture}c), and a residual multilayer perceptron (MLP) maps \(\bD_i\) to the atomic energy (Methods~\S\ref{sec:methods-output}).
The only learned operations in the first step are linear maps acting on the channel indices.
For each non-scalar degree the readout retains the complete channel Gram matrix \(\bG_{i,\ell}\), so no channel direction is discarded at quadratic order.
Learned degree-wise low-rank projections select the channel subspaces that enter the third-order bispectrum \(\bJ_i\) and the fourth-order projected quartic invariant \(\bPi_i\).
These invariants are concatenated with the scalar features, the two neighborhood normalizers and the initial feature \(\bX^{(0)}_{i,0}\) of the center.
A fixed componentwise calibration of the concatenation yields \(\bD_i\).
The requirement guiding this choice is that the invariants determine the node features up to a global rotation or reflection.
How far the retained set meets this requirement is stated precisely in Methods~\S\ref{sec:methods-readout}.
Because the MLP acts on invariants alone, the symmetry of the energy is exact by construction rather than learned.

The factorization of the message makes its learned content directly compressible (Fig.~\ref{fig:architecture}b; Methods~\S\ref{sec:methods-compression}; Supplementary Note~\ref{si:quintic}).
The radial maps depend on one scalar and are replaced by a quintic Hermite interpolation table.
The ordered-pair coefficients belong to a finite cache.
The envelope, the Cartesian harmonics, the aggregation and the polynomial invariants remain analytic.
Compression therefore changes how the learned distance function is evaluated, not the symmetry or the polynomial form of the invariant feature vector.
The number of shared radial modes \(R\) increases the edge arithmetic and the cache width but leaves both the flat feature width and the width of the invariant feature vector unchanged.

A single CUDA kernel carries each atom from its edge interval to its invariant feature vector \(\bD_i\), fusing the construction of the messages, their aggregation and the evaluation of the polynomial invariants, rather than executing them as a sequence of framework tensor operators (Fig.~\ref{fig:architecture}d; Supplementary Note~\ref{si:compressed-algorithm}).
The graph is first converted to a canonical form in which the edges are sorted by destination atom and indexed by a compressed sparse row (CSR) pointer, so each atom owns one contiguous edge interval.
One warp scans this interval, evaluates the radial table, the pair modulation, the envelope and the harmonics in registers, and accumulates the features and the two normalizer sums on the fly.
No message is ever written to global memory.
The same kernel then normalizes the accumulated features and evaluates the polynomial invariants, writing the feature vector \(\bD_i\) of each atom in the tile.
The framework tensor implementation must materialize the message of every edge, occupying memory proportional to the number of edges times the feature width.
By contrast, the fused kernel keeps only a per-destination working set in registers.

Forces require the gradient of the energy with respect to every edge displacement.
Storing the messages of the forward pass would make these gradients cheap to obtain, but would reintroduce a learned state on every edge.
DPA4C instead retains only the aggregated node features and the two neighborhood normalizers.
The backward pass differentiates the MLP and the polynomial invariants, then revisits each edge and recomputes its message while forming \(\partial E/\partial\br_{ij}\), spending arithmetic to save memory (Fig.~\ref{fig:architecture}d).
A separate kernel assembles the forces and the virial from these edge gradients, traversing the destination- and source-sorted edge lists so that each atom accumulates its contributions in a fixed order, without atomic floating-point additions (Methods~\S\ref{sec:methods-compression}).

What remains width dependent is the per-atom work of the readout: the invariant feature vector, the MLP activations and their gradients.
DPA4C processes atoms in contiguous node tiles of at most 131,072 atoms, completing the feature-vector evaluation, the MLP and its backward pass within one tile before the workspace is reused for the next (Fig.~\ref{fig:architecture}d).
Memory that depends on the model width is therefore bounded by the tile size, while only the graph and the physical outputs grow with the numbers of edges and atoms.
Changing the channel width \(C_0\), the maximum angular degree \(L\), the number of radial modes \(R\) or the MLP width changes the arithmetic within a tile, not any array that spans the system.

\subsection{OMat24 accuracy and throughput}
\label{sec:accuracy-efficiency}

\begin{table}[!t]
    \PaperTableStyle
    \caption{OMat24 accuracy and computational cost.
    Energy, force and stress are MAEs on the validation set; throughput is the number of atoms, in millions, advanced by one simulation step per second of wall time on a single H20, measured at system sizes large enough to saturate the GPU; training time is the total training cost in equivalent H20 GPU-hours.
    Lower errors, higher throughput and lower training cost are preferred.}
    \label{tab:omat24-accuracy}
    \begin{threeparttable}
        \begin{tabular*}{\linewidth}{@{\extracolsep{\fill}} l c c c c S[table-format=1.2e1,scientific-notation=true,retain-zero-exponent=true,round-mode=places,round-precision=2] c @{}}
            \toprule
            \textbf{Model}                     &
            \textbf{Energy}~$\downarrow$\tnote{a}           &
            \textbf{Force}~$\downarrow$\tnote{a}            &
            \textbf{Stress}~$\downarrow$\tnote{a}           &
            \textbf{Params}                    &
            \multicolumn{1}{c}{\textbf{Throughput}~$\uparrow$\tnote{a,b}} &
            \textbf{Train. time}~$\downarrow$\tnote{c}       \\
            \midrule
            NEP89\tnote{d}                       & 85.5 & 269.9 & 6.7 & 0.976M & 8.5674  & --      \\
            MACE-Omat-Small\tnote{e}             & 17.9 & 85.9  & 3.5 & 8.222M & 0.0123    & --      \\
            MACE-Omat-Medium\tnote{e}            & 16.3 & 78.4  & 3.3 & 9.063M & 0.0115    & --      \\
            EquiformerV3, $L_{\max}=4$~\cite{liao2026equiformerv3,li2026dpa4} & 10.4 & 43.5 & 2.6 & 30M & 0.00043 & 7,281.7 \\
            \addlinespace[0.25em]
            \midrule
            \addlinespace[0.15em]
            DPA4-Nano~\cite{li2026dpa4} & 18.9 & 96.3 & 3.4 & 0.480M & 0.153  & 80.3    \\
            DPA4-Mini  & 14.0 & 70.7 & 2.9 & 0.655M & 0.0792 & 153.2   \\
            DPA4-Pro   & 9.4  & 42.7 & 2.4 & 25.2M  & 0.00136 & 2,244.5 \\
            \addlinespace[0.25em]
            \midrule
            \multicolumn{7}{l}{\textit{DPA4C (this work)}} \\
            DPA4C-Nano & 33.0 & 140.1 & 4.4 & 0.030M & 16.483 & 6.8  \\
            DPA4C-Mini & 22.8 & 116.1 & 3.8 & 0.146M & 10.1919 & 9.6  \\
            DPA4C-Neo  & 20.6 & 111.3 & 3.7 & 0.342M & 7.0181  & 13.7 \\
            DPA4C-Air  & 19.3 & 105.4 & 3.6 & 0.434M & 3.8059  & 17.8 \\
            DPA4C-Plus & 17.0 & 98.6  & 3.4 & 1.457M & 2.1617  & 35.2 \\
            \bottomrule
        \end{tabular*}
        \begin{tablenotes}[flushleft]
            \PaperTableNotesStyle
            \item[a] Energy, force, stress and throughput are reported in meV/atom, meV/\Ang{}, meV/\Ang\(^{3}\) and M atoms/s, respectively.
            Energy is normalized by the number of atoms, while force and stress are componentwise MAEs.
            \item[b] DPA4C and NEP89 throughputs use complete molecular-dynamics steps in LAMMPS/Kokkos and GPUMD, respectively; DPA4, EquiformerV3 and MACE-Omat use ASE energy, force and stress evaluations~\cite{li2026dpa4}.
            \item[c] Train. time is reported in H20 GPU-hours; a dash indicates that no comparable total training cost is available.
            \item[d] NEP89 incorporates D3(BJ) interactions; consistent with its published data convention, PBE-D3(BJ) contributions were added to the OMat24 reference labels.
            Its parameter count is calculated from the released configuration using the expression of Liang \textit{et al.
            }~\cite{liang2026nep89}.
            The complete protocol is given in Supplementary Note~\ref{si:nep89-omat24-evaluation}.
            \item[e] MACE-OMAT-0 supplies the released checkpoints~\cite{batatia2025crosslearning}; their parameter counts are obtained directly from the checkpoints, and their OMat24 validation MAEs were obtained by the independent evaluation reported in the DPA4 study~\cite{li2026dpa4}.
        \end{tablenotes}
    \end{threeparttable}
\end{table}

We assessed the accuracy--throughput range of DPA4C on OMat24 by training five variants on the published training split and evaluating them on the validation set.
Table~\ref{tab:omat24-accuracy} reports their MAEs, parameter counts, saturated single-H20 throughputs and training costs alongside the reference models.
The variants span 29,809 to 1,456,961 trainable parameters through coordinated changes in channel width, angular degree, radial modes and MLP width.
Complete configurations are given in Supplementary Tables~\ref{si:tab-model-variants} and~\ref{si:tab-omat24-training}.
Among the references, the released MACE-Omat checkpoints represent the widely used equivariant universal potentials~\cite{batatia2025crosslearning}, and the three DPA4 models define the accuracy--throughput frontier reached by equivariant message-passing architectures~\cite{li2026dpa4}, with EquiformerV3 as an independently developed model at their accuracy end~\cite{liao2026equiformerv3}.
NEP89 is the fastest universal MLIP available before this work~\cite{liang2026nep89}.

The five variants establish the high-throughput end of the measured accuracy--throughput Pareto frontier.
On both the energy and the force panel of Fig.~\ref{fig:accuracy-throughput}, every variant lies on the frontier, the DPA4 models hold its accuracy end, and neither NEP89 nor MACE-Omat lies on it.
From Nano to Plus, all three MAEs decrease monotonically while the saturated throughput falls from 16.48 to 2.16~M atoms/s (Table~\ref{tab:omat24-accuracy}).

At nearly matched parameter count, DPA4C-Air keeps all three MAEs within 10\% of DPA4-Nano while delivering 24.9 times its measured throughput at 77.8\% lower reported training cost.
DPA4C-Plus closes the remaining accuracy gap.
Relative to DPA4-Nano, its energy MAE is about 10\% lower, its force MAE is 2.4\% higher and its stress MAE is the same at the reported precision.
Plus reaches this accuracy at 14.1 times the measured DPA4-Nano throughput, and its reported training cost is less than half.
Relative to NEP89, DPA4C-Nano is simultaneously more accurate and faster.
With 3.1\% as many trainable parameters, it lowers the energy, force and stress MAEs by 61.4\%, 48.1\% and 34.3\%, respectively, at 1.92 times the throughput.
DPA4C-Mini lowers the three MAEs further, by 73.3\%, 57.0\% and 43.3\%, while remaining 19\% faster.

\subsection{MatPES R2SCAN materials benchmark}
\label{sec:matpes-accuracy}

\begin{table}[!htbp]
    \PaperTableStyle
    \caption{MatPES accuracy and training cost, on the R2SCAN-2025.2 test split.
    Energy, force and stress are MAEs; training time is the total training cost in equivalent H20 GPU-hours.
    Lower errors and lower training cost are preferred.}
    \label{tab:matpes-accuracy}
    \begin{threeparttable}
        \begin{tabular*}{\linewidth}{@{\extracolsep{\fill}} l c c c c c @{}}
            \toprule
            \textbf{Model}                     &
            \textbf{Energy}~$\downarrow$\tnote{a}           &
            \textbf{Force}~$\downarrow$\tnote{a}            &
            \textbf{Stress}~$\downarrow$\tnote{a}           &
            \textbf{Params}                    &
            \textbf{Train. time}~$\downarrow$\tnote{b}       \\
            \midrule
            DPA4-Nano~\cite{li2026dpa4} & 30.0                    & 142.5                    & 5.2                    & 0.480M & 4.6 \\
            DPA4-Mini  & 20.7                    & 108.5                    & 3.7                    & 0.655M & 9.2 \\
            \addlinespace[0.25em]
            \midrule
            \multicolumn{6}{l}{\textit{DPA4C (this work)}} \\
            DPA4C-Nano & 51.4                    & 193.6                    & 7.0                    & 0.030M & 1.5 \\
            DPA4C-Mini & 33.9                    & 164.4                    & 5.6                    & 0.146M & 1.5 \\
            DPA4C-Neo  & 30.1                    & 159.7                    & 5.3                    & 0.342M & 1.6 \\
            DPA4C-Air  & 27.7                    & 153.3                    & 5.0                    & 0.434M & 1.9 \\
            DPA4C-Plus & 24.5                    & 152.9                    & 4.7                    & 1.457M & 2.8 \\
            \bottomrule
        \end{tabular*}
        \begin{tablenotes}[flushleft]
            \PaperTableNotesStyle
            \item[a] Values are MAEs in meV/atom for energy, meV/\Ang{} for force and meV/\Ang\(^{3}\) for stress; energy is normalized by the number of atoms, while force and stress are componentwise MAEs.
            \item[b] Train. time is reported in H20 GPU-hours for the complete protocol used by each run; the training lengths are given in Supplementary Table~\ref{si:tab-matpes-training}.
        \end{tablenotes}
    \end{threeparttable}
\end{table}

We evaluated the same five DPA4C variants on the MatPES R2SCAN-2025.2 test split, probing the model family at a different density-functional level and on a training set more than two orders of magnitude smaller than OMat24~\cite{kaplan2025matpes}.
Complete training configurations are given in Supplementary Table~\ref{si:tab-matpes-training}.
Table~\ref{tab:matpes-accuracy} compares the resulting MAEs with the DPA4 references.

The five variants retain their OMat24 ordering, with all three MAEs decreasing monotonically from Nano to Plus.
As on OMat24, DPA4C-Plus improves on DPA4-Nano in energy but not in force.
Its energy MAE is 18.3\% lower and its stress MAE is also lower, its force MAE is 7.3\% higher, and training uses 39.1\% fewer reported H20 GPU-hours.
At the efficiency end of the family, DPA4C-Nano delivers about two orders of magnitude higher measured throughput than DPA4-Nano (Fig.~\ref{fig:accuracy-throughput}).
Its force and stress MAEs exceed the DPA4-Nano values by about 36\%, and its energy MAE by about 71\%.
DPA4-Mini retains the lowest errors overall and defines the accuracy-oriented end of this comparison.

\subsection{OMol25 molecular benchmark}
\label{sec:omol25-accuracy}

\begin{table}[!t]
    \PaperTableStyle
    \caption{OMol25 accuracy and training cost, on the OMol-0 out-of-distribution composition validation split.
    Energy and force are MAEs; training time is the total training cost in equivalent H20 GPU-hours.
    Lower errors and lower training cost are preferred.}
    \label{tab:omol25-accuracy}
    \begin{threeparttable}
        \begin{tabular*}{\linewidth}{@{\extracolsep{\fill}} l c c c c @{}}
            \toprule
            \textbf{Model}                         &
            \textbf{Energy}~$\downarrow$\tnote{a}     &
            \textbf{Force}~$\downarrow$\tnote{a}      &
            \textbf{Params}                        &
            \textbf{Train. time}~$\downarrow$\tnote{b} \\
            \midrule
            eSEN-sm-cons.~\cite{fu2025esen,levine2025open}       & 1.77  & 0.190 & 6.3M        & --    \\
            MACE-OMol-L-0~\cite{batatia2022mace,levine2025open} & 4.56  & 0.250 & --          & --    \\
            \addlinespace[0.25em]
            \midrule
            DPA4-Nano~\cite{li2026dpa4} & 8.02 & 0.776 & 0.480M & 283.2 \\
            DPA4-Mini                         & 4.97 & 0.502 & 0.655M & 656.3 \\
            \addlinespace[0.25em]
            \midrule
            \multicolumn{5}{l}{\textit{DPA4C (this work)}} \\
            DPA4C-Nano & 40.97 & 2.485 & 0.035M & 33.6  \\
            DPA4C-Mini & 28.58 & 1.881 & 0.200M & 56.7  \\
            DPA4C-Neo  & 24.74 & 1.731 & 0.539M & 91.2  \\
            DPA4C-Air  & 21.37 & 1.597 & 0.630M & 142.1 \\
            DPA4C-Plus & 16.77 & 1.383 & 2.189M & 340.9 \\
            \bottomrule
        \end{tabular*}
        \begin{tablenotes}[flushleft]
            \PaperTableNotesStyle
            \item[a] Values are MAEs in kcal/mol for total energy and kcal/mol/\Ang{} for force. Values reported in meV and meV/\Ang{} are converted using $1~\mathrm{meV}=0.0230605~\mathrm{kcal/mol}$ and $1~\mathrm{meV}/\Ang{}=0.0230605~\mathrm{kcal/mol}/\Ang{}$, respectively.
            \item[b] Train. time is reported in H20 GPU-hours; a dash indicates that no comparable total training cost is available.
        \end{tablenotes}
    \end{threeparttable}
\end{table}

We evaluated all five DPA4C variants on the OMol25 OMol-0 out-of-distribution composition validation split, providing a molecular counterpart to the two materials benchmarks~\cite{levine2025open}.
Complete training configurations are given in Supplementary Table~\ref{si:tab-omol25-training}.

On the molecular benchmark the accuracy gap to the DPA4 models widens substantially relative to the two materials benchmarks.
Scaling from Nano to Plus still lowers the total-energy and force MAEs monotonically, by 59.1\% and 44.3\% overall.
At 340.9 H20 GPU-hours, DPA4C-Plus uses 20.4\% more reported training compute than DPA4-Nano yet has a 109.1\% higher total-energy MAE and a 78.2\% higher force MAE.
The other OMol25 baselines reported in the same units lie further ahead.
The inference-throughput advantage measured on the materials benchmark (Fig.~\ref{fig:accuracy-throughput}) applies equally to molecular systems, because the inference cost of either architecture depends on the numbers of atoms and neighbors rather than on the elements or the chemical domain.

\subsection{Compressed deployment}
\label{sec:compression-fidelity}

With accuracy established, we turn to deployment.
Compressed execution preserves accuracy at the reported precision while bounding width-dependent memory and reducing the fixed costs of a complete molecular-dynamics step.

To test the complete deployed path, we exported a compressed and an uncompressed model from the same checkpoint for each OMat24 variant and evaluated both on all 1,074,643 validation structures.
At the 0.002~\Ang{} spacing used for deployment, the largest compressed--uncompressed MAEs among the five variants were \(3.52\times10^{-4}~\mathrm{meV/atom}\) for energy, \(1.18\times10^{-3}~\mathrm{meV/\Ang}\) for force and \(3.13\times10^{-5}~\mathrm{meV/\Ang^3}\) for stress.
A sweep of four spacings spanning the hundredfold range from 0.0005 to 0.05~\Ang{} further bounds the numerical effect (Supplementary Table~\ref{si:tab-compression-fidelity}).
The discrepancies remain similarly small up to 0.01~\Ang{}, and a clear increase in force and stress RMSE appears only at 0.05~\Ang{}.
Even across this sweep, every MAE against the reference labels remains unchanged at the reported precision.

A paired ablation isolates the effect of node tiling.
With the model, graph, precision and MD input fixed, tiling raises the largest completed system by 27.3\% for Nano and up to 155.0\% for Plus, while changing throughput by only \(-2.0\%\) to \(+0.3\%\) (Supplementary Table~\ref{si:tab-node-tiling}).
The memory benefit therefore grows with model width without an appreciable throughput penalty.

At Nano's speed the model is no longer the only bottleneck.
The fixed costs of the MD step itself, namely graph construction and the energy reduction, occupy a visible fraction of the step time.
Rebuilding these two operators saves 3.43~ms per step, 5.4\% of the complete Nano step and 0.7--3.4\% for the four larger variants (Supplementary Table~\ref{si:tab-whole-step-ablation}).

\subsection{Single-GPU performance benchmarks}
\label{sec:inference-scaling}

\begin{figure}[t!]
    \centering
    \includegraphics[width=\linewidth]{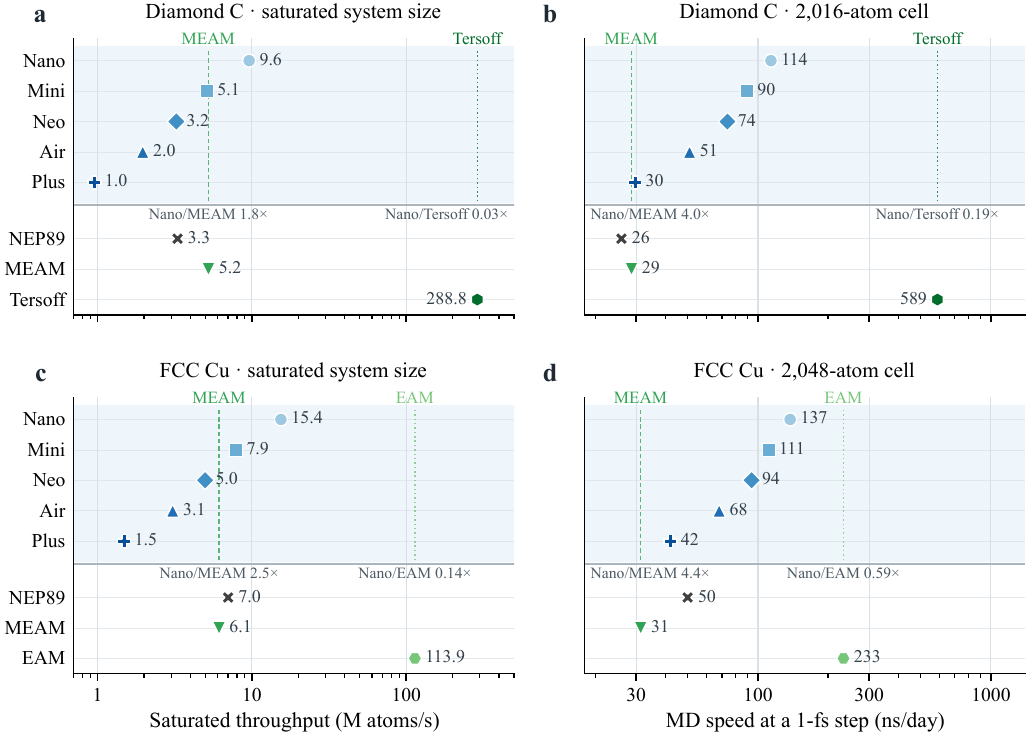}
    \caption{%
        Single-GPU performance of DPA4C, NEP89 and the empirical potentials on a 16-GB NVIDIA Tesla V100-SXM2 GPU.
        Rows show diamond carbon (\textbf{a},\textbf{b}; 158 neighbors per atom at the 6-\Ang{} learned-model cutoff) and FCC copper (\textbf{c},\textbf{d}; 78 neighbors per atom).
        \textbf{a},\textbf{c}, Saturated throughput: the largest mean throughput along each model's system-size scan, where each point averages three independent runs.
        \textbf{b},\textbf{d}, MD speed of a 2,016-atom (carbon) or 2,048-atom (copper) cell at a 1-fs time step.
        The shaded band holds the five DPA4C variants, dashed vertical lines carry the empirical-potential values into the band, and the annotations report the ratios of DPA4C-Nano to the empirical potentials.
        DPA4C and the empirical potentials run in LAMMPS/Kokkos, NEP89 in its native GPUMD; each empirical potential retains its native cutoff.
        The complete throughput curves are shown in Supplementary Fig.~\ref{si:fig-v100-crystal-scan}.
    }
    \label{fig:empirical-reference}
\end{figure}

We benchmarked the performance of the five DPA4C variants, NEP89~\cite{liang2026nep89} and a set of empirical potentials in diamond-carbon and FCC-copper crystals on one 16-GB NVIDIA Tesla V100-SXM2 GPU (Fig.~\ref{fig:empirical-reference}).
Every measurement times complete NVT molecular-dynamics steps and reports throughput: the number of atoms advanced by one step per second of wall time.
The empirical potentials are MEAM~\cite{baskes1992meam} and Tersoff~\cite{tersoff1989multicomponent} for carbon, and MEAM and the Mishin EAM potential~\cite{mishin2001copper} for copper.
These element-specific potentials serve as computational references rather than accuracy-matched baselines for the multi-element DPA4C models.
We scanned the system size from about one hundred atoms until each model ran out of GPU memory, and every throughput curve has the same shape (Supplementary Fig.~\ref{si:fig-v100-crystal-scan}; Supplementary Tables~\ref{si:tab-v100-carbon-scan} and~\ref{si:tab-v100-copper-scan}).
At small sizes the wall time per step stays near a fixed, implementation-specific floor, so throughput grows with atom count.
Once the system is large enough to fill the GPU, the time per step grows in proportion to the atom count and the throughput saturates.
The two limits of this curve correspond to the two ways molecular dynamics is deployed.
The saturated throughput sets the cost of simulating large systems, as in segregation at grain-boundary networks or reactive interfaces that require millions of atoms.
The speed of a cell of about two thousand atoms, expressed in simulated nanoseconds per day, sets the trajectory length affordable for a small system, as in studies of nucleation, defect kinetics or slow structural relaxation.

In the saturated regime, DPA4C overlaps, and in several cases exceeds, the throughput of MEAM and NEP89, while the fastest empirical potential in each crystal stays ahead (Fig.~\ref{fig:empirical-reference}a,c).
In diamond carbon, DPA4C-Nano reaches 1.84 times the saturated throughput of MEAM and 2.91 times that of NEP89.
In FCC copper, DPA4C-Nano reaches 2.52 times the MEAM throughput and 2.20 times that of NEP89, and DPA4C-Mini also exceeds both.
Tersoff remains 30.1 times faster than Nano in carbon, and EAM 7.38 times faster in copper.
The lower local coordination of copper at the common \(6~\Ang\) learned-model cutoff, 78 neighbors per atom versus 158 in carbon, reduces the work per atom and raises the throughput of every DPA4C variant relative to carbon.

In the small-cell regime the wall time per step approaches each implementation's fixed floor, so speed is set by per-step latency rather than by work per atom (Fig.~\ref{fig:empirical-reference}b,d).
This reverses part of the saturated ranking.
In diamond carbon every DPA4C variant advances more nanoseconds per day than MEAM and NEP89, and in FCC copper all five exceed MEAM while all but Plus exceed NEP89.
Nano reaches 4.0 times the MEAM speed in carbon and 4.4 times in copper, up from 1.84 and 2.52 at saturation.
The gap to Tersoff and EAM narrows to 0.19 and 0.59 times, from 0.03 and 0.14.
The compressed DPA4C step issues a short, fixed sequence of kernels (Methods~\S\ref{sec:methods-compression}), which keeps its floor low.
The floors of Tersoff and EAM remain lower still, but by a far smaller margin than their advantage in work per atom, and the gap narrows accordingly.

Memory capacity orders the models differently (Supplementary Tables~\ref{si:tab-v100-carbon-scan} and~\ref{si:tab-v100-copper-scan}).
On the 16-GB V100, the five DPA4C variants complete 2.10 million atoms in diamond carbon.
In FCC copper, Nano through Air complete 4.20 million atoms, whereas the Plus capacity boundary falls below this size, with 3.92 million atoms completed.
The tiled execution path keeps the width-dependent workspace fixed per tile rather than proportional to the atom count (Supplementary Eq.~\eqref{si:eq-tile-memory}).
This fixed term is largest for Plus, the widest variant, and on a 16-GB device it is what separates Plus from the other four.
NEP89 completes 1.97 million atoms in both crystals.
MEAM completes 5.27 million atoms in carbon and 8.39 million in copper, and Tersoff and EAM each complete about 29.4 million.
From carbon to copper the DPA4C capacity roughly doubles as the neighbor count halves.

\subsection{Distributed scaling to 1,024 GPUs}
\label{sec:multigpu-scaling}

\begin{figure}[t!]
    \centering
    \includegraphics[width=\linewidth]{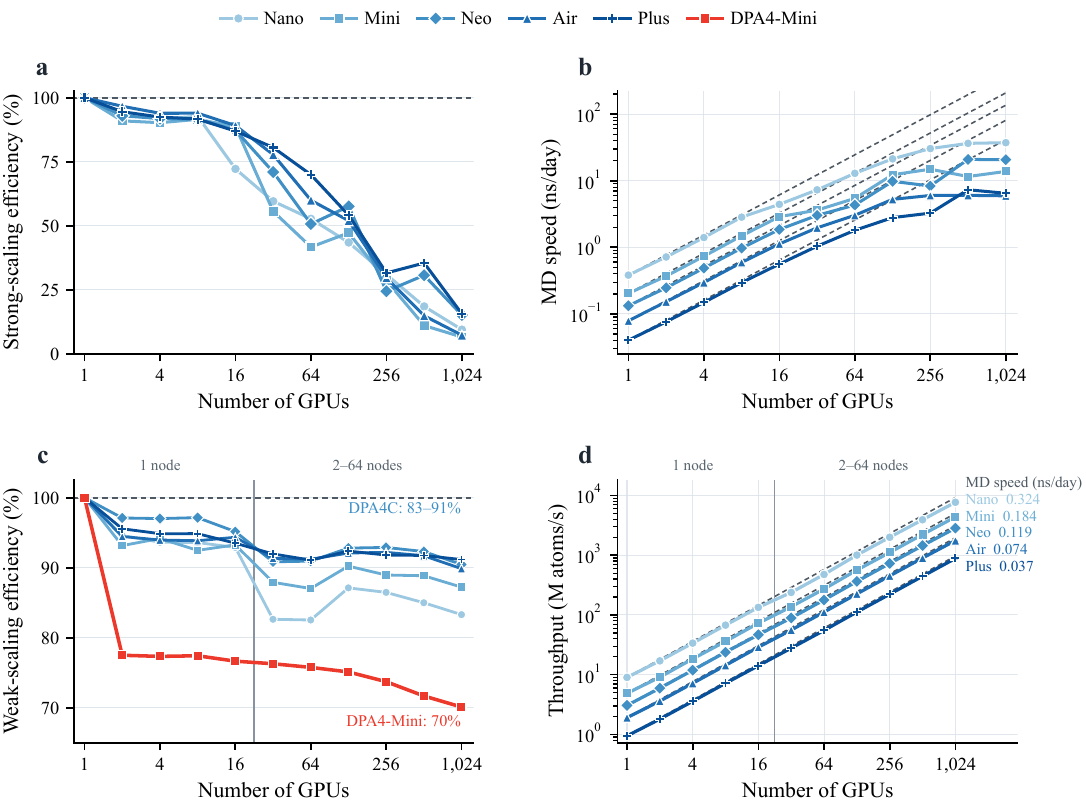}
    \caption{%
        Strong and weak scaling of complete NVT molecular-dynamics steps on 16-GB NVIDIA Tesla V100-SXM2 GPUs.
        \textbf{a}, Strong-scaling efficiency of the five DPA4C variants for a fixed 2,000,376-atom system.
        \textbf{b}, Corresponding measured MD speed for the same fixed system; thin dark-grey dashed lines give ideal linear scaling from each variant's one-GPU MD speed.
        \textbf{c}, Weak-scaling efficiency for the five DPA4C variants at 2,000,376 atoms per GPU and DPA4-Mini at 5,832 atoms per GPU, each normalized independently to the corresponding one-GPU rate.
        \textbf{d}, Aggregate DPA4C weak-scaling throughput at 2,000,376 atoms per GPU; thin dark-grey dashed lines give ideal linear scaling from each variant's one-GPU throughput, and endpoint labels identify the variant and measured MD speed for 2,048,385,024 atoms on 1,024 GPUs.
        In \textbf{c} and \textbf{d}, the vertical line separates single-node allocations, up to 16 GPUs, from multi-node allocations of 32 GPUs and beyond.
        Lines show means of ten independent allocations in \textbf{a,b} and medians of three in \textbf{c,d}; the full repeat ranges and supporting timing fractions are reported in Supplementary Figs.~\ref{si:fig-v100-strong-detail} and~\ref{si:fig-v100-comparison-detail}.
    }
    \label{fig:multigpu-scaling}
\end{figure}

We measured DPA4C on 1--1,024 16-GB NVIDIA Tesla V100-SXM2 GPUs, covering 64 nodes (Fig.~\ref{fig:multigpu-scaling}).
Strong scaling fixes the same cubic 2,000,376-atom system for all five variants and follows efficiency and measured MD speed with increasing GPU count (Fig.~\ref{fig:multigpu-scaling}a,b).
At eight GPUs, the variants retain 91.5--93.9\% efficiency.
Because Nano does the least work per atom, its local computation is the first to become too small to hide communication and other fixed step costs.
At 16 GPUs it retains 72.2\%, whereas Mini through Plus retain 87.0--89.1\%.
At 1,024 GPUs, each process owns about 1,953 atoms.
The efficiencies are 6.6--15.7\% and the measured MD speeds are 5.93--37.22~ns/day.
Doubling the allocation from 512 to 1,024 GPUs changes the MD speed by only \(-12\%\) to \(+19\%\) across the five variants, marking the throughput plateau at low local work.

Weak scaling fixes a cubic local domain of 2,000,376 atoms per GPU from one to 1,024 GPUs.
The 1,024-GPU endpoint contains 2,048,385,024 atoms and retains 83.3--91.2\% of the single-GPU throughput across the five variants (Fig.~\ref{fig:multigpu-scaling}c).
Nano reaches 7.67 billion atoms/s and Plus reaches 0.881 billion atoms/s, with Mini, Neo and Air between them (Fig.~\ref{fig:multigpu-scaling}d).
At a 1-fs time step, the corresponding MD speeds are 0.324~ns/day for Nano and 0.037~ns/day for Plus.
The transition from 16 GPUs on one node to 32 GPUs on two nodes lowers Nano's weak-scaling efficiency from 92.9\% to 82.6\%.
The corresponding decreases are 1.5--5.3 percentage points for the four larger variants.

Because the single message-passing layer needs only the neighbor coordinates and types, DPA4C exchanges ghost-atom coordinates and types before evaluation and returns ghost force and virial contributions afterwards, carrying no learned feature state between MPI domains.
DPA4-Mini, in contrast, additionally communicates learned ghost features during message passing and serves as the deployment reference at its own feasible workloads.
With 5,832 atoms per GPU, it retains 70.1\% weak-scaling efficiency on 1,024 GPUs (Fig.~\ref{fig:multigpu-scaling}c).
Its fixed 8,000-atom strong-scaling efficiency falls to 57.3\% on eight GPUs and 34.6\% on 32 GPUs, where each GPU receives 250 atoms (Supplementary Fig.~\ref{si:fig-v100-comparison-detail}a).
Each curve reports the fraction of its own single-GPU throughput.
Because the workloads, runtime paths and intermediate storage all differ between DPA4C and DPA4-Mini, the scaling gap does not isolate the cost of learned-feature exchange.

%% file: chap/discussion.tex
\section{Discussion}
\label{sec:discussion}

DPA4C holds the three corners of the universality--accuracy--speed triangle at deployment scale.
Universality rests on pre-training over large-scale datasets.
The same five variants span the chemistry of the materials datasets OMat24 and MatPES and of the molecular dataset OMol25.
Accuracy and speed are demonstrated jointly.
The five variants establish the high-throughput end of the measured accuracy--throughput Pareto frontier.
DPA4C-Plus approaches the accuracy of the MACE-Omat checkpoints at about two orders of magnitude higher measured throughput, and DPA4C-Nano lowers the energy, force and stress MAEs of the fastest existing universal model by 34.3--61.4\% at about twice its throughput.
On deployment hardware, Nano exceeds the saturated MEAM throughput in both the diamond-carbon and FCC-copper benchmarks, and in small cells every DPA4C variant runs faster than MEAM.
On 1,024 16-GB NVIDIA V100 GPUs, the five variants scale molecular dynamics to 2.048 billion atoms at 83.3--91.2\% weak-scaling efficiency.
Universal-potential MD therefore now operates in the empirical-potential regime at higher accuracy than the fastest existing universal model.

The deployment performance traces to the co-design of the architecture with its operators.
Every learned edge function depends only on the interatomic distance and the element pair, so deployment collapses the learned model into an interpolation table and a finite cache that reproduce the uncompressed model without loss of accuracy.
Every learned state is local to one atom and is processed in fixed-size tiles, so no learned array spans the simulated system and the memory that depends on the model width is a fixed per-tile workspace.
The single message-passing layer consumes only neighbor coordinates and types, leaving distributed runs to exchange geometry and physical outputs but no learned features.
Software optimization of a fixed architecture accelerates its execution but does not change what the model stores and communicates.
The co-design changes exactly that.

Several boundaries remain.
DPA4C is strictly local and contains no explicit electrostatics, dispersion, charge equilibration or response to external fields, so systems governed by long-range interactions lie outside its present scope.
Its radial functions are learned without a built-in short-range repulsion, so configurations well inside the shortest training distances are not guaranteed to be physical.
Adding long-range physical terms and distilling DPA4 into DPA4C are natural next steps.
Segregation at grain-boundary networks, multi-principal-element alloys and reactive interfaces are governed by short-range bonding yet require broad chemistry at empirical-potential size and speed, exactly the combination that DPA4C delivers.

%% file: chap/methods.tex
\section{Methods}
\label{sec:methods}

\subsection{Overall structure and constraints}
\label{sec:methods-setting}

A configuration of \(N\) atoms is specified by Cartesian positions \(\br_1,\dots,\br_N\), measured in \Ang, and by chemical species \(a_1,\dots,a_N\) drawn from a fixed set of \(T\) element types.
DPA4C adopts the locality ansatz shared by neural-network and Deep Potential interatomic potentials.
The potential energy, in eV, is a sum of atomic contributions,
\begin{equation}
    E=\sum_{i=1}^{N}E_i.
    \label{eq:energy-decomposition}
\end{equation}
Here \(E_i\) depends only on the atoms inside a sphere of radius \(\rcut\) centered on atom \(i\)~\cite{behler2007generalized,zhang2018deep}.
Each atomic contribution is produced by two components.
A single message-passing layer aggregates the neighborhood into equivariant node features \(\bX_{i,\ell}\), one block for each angular degree \(\ell\) (\S\ref{sec:methods-moments}).
A nonlinear readout then maps these features to the atomic energy in two steps.
A finite set of polynomial invariants of the features is collected into the invariant feature vector \(\bD_i\), and a multilayer perceptron (MLP) maps \(\bD_i\) to the scalar \(E_i\) (\S\ref{sec:methods-readout}).

Three constraints shape this construction.
The first is symmetry.
The energy is unchanged by a rigid translation, by a relabeling of identical atoms and by any element of the orthogonal group in three dimensions, \(\Orth(3)=\bigl\{\bR\in\R^{3\times3}:\bR\tp\bR=\bI\bigr\}\), which contains the proper rotations (\(\det\bR=+1\), forming the subgroup \(\SO(3)\)) together with every rotation composed with a reflection (\(\det\bR=-1\)).
A quantity is \(\Orth(3)\)-invariant when replacing every \(\br_k\) by \(\bR\br_k\) leaves it unchanged, and \(\Orth(3)\)-equivariant when it instead transforms under a fixed linear representation of the group.
In DPA4C the symmetry is enforced by the structure itself.
The node features transform equivariantly, the first step of the readout produces invariants by construction, and the MLP acts on invariants alone, so no stage needs to learn or approximate the symmetry.
Invariance alone does not guarantee that distinct environments receive distinct feature vectors, and Section~\ref{sec:methods-readout} states what the resulting invariant set does and does not separate.
This constraint is shared by every symmetry-respecting potential.
The remaining two are the architectural constraints stated in the introduction, and they are what distinguish DPA4C within the equivariant family.

The second is compressibility.
The learned content of a message is a finite sum of terms, each multiplying a scalar function of the interatomic distance by a coefficient of the ordered species pair.
Compression follows directly.
The one-dimensional radial functions are replaced by one interpolation table shared by all species pairs, and the finitely many species coefficients by a cache.
This form constrains the design from the outset.
One radial map is shared across all angular degrees, so a single interpolation table serves every degree.
The cutoff envelope multiplies the completed amplitude rather than the radial basis, so it remains analytic after compression and its degree-dependent powers stay out of the table.
Section~\ref{sec:methods-compression} makes the separation explicit.

The third is compactness of the per-atom state.
The width of the node features \(\bX_{i,\ell}\), and every other width in the model, derives from three integer controls: the scalar channel width \(C_0\in\{8,16,32,64,128\}\), the maximum angular degree \(L\in\{2,3,4\}\) and the number of shared radial modes \(R\ge0\).

The models studied here are further delimited by the locality ansatz and by three modeling choices.
They are strictly local at a single cutoff and therefore carry no explicit long-range electrostatics or dispersion.
They use no external-field input.
The total charge and spin multiplicity enter only the OMol25 models, through additional trainable embeddings (Supplementary Table~\ref{si:tab-model-variants}).
Their radial functions are learned over the whole interval \([0,\rcut]\) without an imposed repulsive core, and every quantity passed to the MLP is invariant, so tensorial properties such as dipoles or polarizabilities would require a separate equivariant head.

\subsection{One message-passing layer}
\label{sec:methods-moments}
\label{sec:methods-amplitude}
\label{sec:methods-graph}

This section defines the single message-passing layer of DPA4C: the graph it acts on, the initial node features, the message carried by each edge and the aggregation that turns the messages into equivariant node features.

\paragraph{The neighbor graph.}
The input to the model is a directed neighbor graph whose nodes are atoms.
A directed edge \((i,j)\) exists whenever atom \(j\) lies inside the cutoff sphere of the center atom \(i\),
\begin{equation}
    \Nb_i=\bigl\{j:j\ne i,\;\lVert\br_j-\br_i\rVert<\rcut\bigr\},
    \label{eq:neighbor-set}
\end{equation}
where \(j\) labels a neighbor instance and \(\br_j\) is its Cartesian position.
In a periodic cell, distinct periodic images are distinct neighbor instances, and only the center atom in its own image is excluded.
Both orientations of a physical pair are carried, because each center accumulates its own node features and because the ordered pair of species \((a_i,a_j)\) conditions the edge.
No neighbor capacity is imposed.
The graph carries every neighbor inside the cutoff, so the model has no maximum coordination number.
The total number of directed edges is \(N_{\mathrm{edge}}=\sum_i\lvert\Nb_i\rvert\).

\paragraph{Edge geometry.}
Geometry enters only through the edge displacements \(\br_{ij}=\br_j-\br_i\), which makes translation invariance exact.
Each displacement is converted into a regularized length and direction,
\begin{equation}
    \rho_{ij}=\sqrt{\lVert\br_{ij}\rVert^{2}+\varepsilon^{2}},
    \qquad
    \bu_{ij}=\frac{\br_{ij}}{\rho_{ij}},
    \qquad
    \varepsilon=10^{-7}~\Ang.
    \label{eq:edge-geometry}
\end{equation}
The positive constant \(\varepsilon\) keeps \(\bu_{ij}\) and its derivatives finite for a coincident pair, at the price of a direction Jacobian of order \(1/\varepsilon\) there.
At ordinary atomistic separations, for which \(\lVert\br_{ij}\rVert\gg\varepsilon\), the pair \((\rho_{ij},\bu_{ij})\) agrees with the ordinary distance and unit direction to numerical precision.
The squared norm \(\lVert\bu_{ij}\rVert^{2}=1-\varepsilon^{2}/\rho_{ij}^{2}\) departs from unity by less than the single-precision resolution.

\paragraph{Smooth cutoff envelope.}
A cutoff envelope removes the discontinuity that would otherwise appear when an atom crosses the cutoff sphere.
DPA4C uses the polynomial envelope of DPA4~\cite{li2026dpa4}, itself of the family introduced for smooth message-passing potentials~\cite{gasteiger2019directional}, with its exponent fixed at five.
For a regularized length \(\rho\), define \(t=\clip(1-\rho/\rcut,0,1)\) and \(x=1-t\).
The envelope is
\begin{equation}
    \chi(\rho)=t^{4}\sum_{k=0}^{4}\binom{k+3}{3}x^{k}
    =t^{4}\bigl(1+4x+10x^{2}+20x^{3}+35x^{4}\bigr),
    \qquad
    \chi_{ij}=\chi(\rho_{ij}).
    \label{eq:envelope}
\end{equation}
The envelope satisfies \(\chi(0)=1\) and \(\chi(\rho)=0\) for \(\rho\ge\rcut\).
The factor \(t^{4}\) makes \(\chi\) three times continuously differentiable at the cutoff and its fourth derivative discontinuous there.
Because every edge contribution below carries at least one factor of \(\chi_{ij}\), an atom may cross the cutoff sphere without introducing a discontinuity in the energy, the forces or the force derivatives with respect to atomic coordinates.
An edge exactly at \(\rcut\) carries zero weight, so the boundary convention in Eq.~\eqref{eq:neighbor-set} is immaterial.

\paragraph{Initial node features.}
Each atom enters the network with equivariant node features supported on degree zero alone,
\begin{equation}
    \bX^{(0)}_{i,0}\in\R^{C_0},
    \qquad
    \bX^{(0)}_{i,\ell}=\bm{0}\quad(1\le\ell\le L),
    \label{eq:init-features}
\end{equation}
where the degree-zero block is one trainable vector per element type, shared by all atoms of that type.
The subscript \(\ell\) labels the angular degree and the parenthesized superscript labels the aggregation layer.
Because the initial features are invariant scalars, the messages depend on a neighbor only through its species and its relative position.

\paragraph{Analytic radial basis.}
The message carried by an edge \(j\to i\) factorizes into a learned radial--chemical amplitude, a block of fixed angular harmonics and the cutoff envelope.
The amplitude is built first, starting from a fixed expansion of the distance.
A scalar distance is a poor input to a narrow network, so it is expanded in \(N_{\mathrm{rbf}}\) fixed analytic functions collected in \(\bff(\rho)\in\R^{N_{\mathrm{rbf}}}\).
DPA4C reuses the DPA4 basis~\cite{li2026dpa4}, which provides two families.
The Bessel family, following the spherical Bessel expansion introduced for directional message-passing potentials~\cite{gasteiger2019directional}, is
\begin{equation}
    f_n(\rho)=\omega_n\sinc\!\left(\frac{\omega_n\rho}{\pi}\right)=\frac{\sin(\omega_n\rho)}{\rho},
    \qquad n=1,\dots,N_{\mathrm{rbf}},
    \label{eq:bessel-basis}
\end{equation}
with \(\sinc(z)=\sin(\pi z)/(\pi z)\) and \(\sinc(0)=1\), so that \(f_n\) tends to \(\omega_n\) as \(\rho\to0\) and remains differentiable there.
The Gaussian family is \(f_n(\rho)=\exp[-(\rho-c_n)^{2}/(2\sigma_{\mathrm{g}}^{2})]\) with fixed width \(\sigma_{\mathrm{g}}=\rcut/(N_{\mathrm{rbf}}-1)\).
The frequencies \(\omega_n\) and the centers \(c_n\) are trainable.
In contrast to the DPA4 default, the basis here is evaluated without a cutoff factor, because DPA4C applies one explicit envelope after the radial and chemical information have been combined.

\paragraph{Shared radial map.}
The basis is passed through a bias-free feed-forward network with a single gated hidden layer.
The gating unit is the SiLU-gated linear unit (SwiGLU), which splits a linear map into a gate branch and a value branch, applies the sigmoid-weighted linear unit (SiLU) \(\SiLU(z)=z/(1+e^{-z})\) to the gate and multiplies the branches elementwise~\cite{hendrycks2016gaussian,elfwing2018sigmoid,shazeer2020glu}:
\begin{equation}
    \SwiGLU(\bm x;\bW)=\SiLU(\bm x_{\mathrm{g}})\odot\bm x_{\mathrm{v}}\in\R^{H},
    \qquad
    [\bm x_{\mathrm{g}},\bm x_{\mathrm{v}}]=\bm x\bW\in\R^{2H},
    \label{eq:swiglu}
\end{equation}
where \(\bm x\) is the input, \(\bW\) a weight matrix with \(2H\) columns and \(\odot\) the elementwise product.
The radial branch is
\begin{equation}
    \bh_{ij}=\SwiGLU\bigl(\bff(\rho_{ij});\bW_{\mathrm{in}}\bigr)\in\R^{H},
    \qquad
    \bg(\rho_{ij})=\bh_{ij}\bW_{\mathrm{out}}\in\R^{C_0},
    \label{eq:radial-map}
\end{equation}
with \(\bW_{\mathrm{in}}\in\R^{N_{\mathrm{rbf}}\times2H}\), \(\bW_{\mathrm{out}}\in\R^{H\times C_0}\) and post-gate hidden width \(H=8\lceil C_0/3\rceil\), the standard width convention for gated feed-forward blocks of model width \(C_0\)~\cite{shazeer2020glu}.
When the number of shared radial modes \(R\), the third structural control of Section~\ref{sec:methods-setting}, is positive, a second linear head branches off the same hidden state and produces \(R\) additional distance profiles,
\begin{equation}
    \bq(\rho_{ij})=\bh_{ij}\bW_{\mathrm{mode}}\in\R^{R},
    \qquad
    \bW_{\mathrm{mode}}\in\R^{H\times R}.
    \label{eq:radial-modes}
\end{equation}
Both \(\bg\) and \(\bq\) are functions of the scalar \(\rho\) alone and are shared by every element pair and every angular degree.
The role of the modes is fixed by the pair modulation defined next.

\paragraph{Ordered type-pair modulation.}
Different ordered element pairs require different radial responses, and DPA4C obtains them by feature-wise linear modulation, in which a conditioning input produces a per-channel scale and shift rather than a full transformation~\cite{perez2018film}.
The initial features of the two endpoints, Eq.~\eqref{eq:init-features}, are concatenated into \(\bz_{ij}=[\bX^{(0)}_{i,0},\bX^{(0)}_{j,0}]\in\R^{2C_0}\) and passed through a second bias-free gated network of the form of Eq.~\eqref{eq:swiglu}, with hidden width \(8\lceil2C_0/3\rceil\) and a fixed output scale of \(0.1\):
\begin{equation}
    \begin{aligned}
        [\bm s_{ij},\bm t_{ij},\bm w_{ij}]
                    & =0.1\,\SwiGLU\bigl(\bz_{ij};\bW_{\mathrm{p,in}}\bigr)\bW_{\mathrm{p,out}}\in\R^{C_0(2+R)}, \\
        \bgamma_{ij}=\bm1+\tanh(\bm s_{ij}),
        \qquad
        \bbeta_{ij} & =\bX^{(0)}_{i,0}+\bX^{(0)}_{j,0}+\tanh(\bm t_{ij}),
        \qquad
        \bU_{ij}=\operatorname{reshape}\bigl[\tanh(\bm w_{ij})\bigr],
    \end{aligned}
    \label{eq:pairfilm}
\end{equation}
where \(\bm1\) is the all-ones vector and the block \(\bm w_{ij}\) is present only when \(R>0\).
The bounded activations place \(\bgamma_{ij}\) in \((0,2)^{C_0}\) and the learned parts of \(\bbeta_{ij}\in\R^{C_0}\) and \(\bU_{ij}\in\R^{C_0\times R}\) in \((-1,1)\), which keeps the cached tables well conditioned in single precision.
These coefficients depend on the two atoms only through their species, the ordered pair \((a_i,a_j)\).
The modulation is ordered, so \((a,b)\) and \((b,a)\) are treated as distinct inputs, and it takes at most \((T+1)^{2}\) distinct values, where the extra row and column belong to a zero padding type.
At deployment the network is therefore evaluated once per ordered type pair and cached as a finite table, never per edge.

\paragraph{Edge amplitude.}
The scale, the shift and the mode mixing combine into
\begin{equation}
    \psi_{ij,c}
    =\gamma_{ij,c}\,g_c(\rho_{ij})
    +\beta_{ij,c}
    +\sum_{\mu=1}^{R}
    U_{ij,c\mu}\,q_\mu(\rho_{ij}), \qquad c=1,\dots,C_0.
    \label{eq:amplitude}
\end{equation}
With \(R=0\) every ordered pair applies a per-channel affine transformation to one shared radial function, so pairs differ through both the scale and the offset of their radial response.
With \(R>0\), every channel additionally receives a linear combination of the \(R\) shared radial profiles, with mixing coefficients set by the ordered species pair.
Different pairs can therefore shape, not merely scale, their radial response.
Equation~\eqref{eq:amplitude} is the finite sum of radial--chemical products required by the compressibility constraint of Section~\ref{sec:methods-setting}.

Distances alone cannot resolve how neighbors are arranged around a center.
The angular factor supplies this information, and the aggregation combines the two into weighted sums over the neighbors that replace a variable-size neighborhood by a fixed-size equivariant tensor.

\paragraph{Real Cartesian harmonics.}
DPA4C uses real solid harmonics written as low-degree polynomials of the Cartesian components of \(\bu\), which avoids complex arithmetic and admits direct evaluation.
The block of degree \(\ell\) is \(\bB_\ell(\bu)\in\R^{2\ell+1}\), with components \(B_{\ell m}\) indexed by \(m=1,\dots,2\ell+1\).
Through degree two, with \(\bu=(u_x,u_y,u_z)\),
\begin{equation}
    \begin{aligned}
        \bB_0 & =(1),\qquad\bB_1=(u_x,u_y,u_z),                                                                                                                              \\
        \bB_2 & =\Bigl(\sqrt{3}u_xu_y,\;\sqrt{3}u_yu_z,\;\tfrac{1}{2}\bigl(3u_z^{2}-\lVert\bu\rVert^{2}\bigr),\;\sqrt{3}u_xu_z,\;\tfrac{\sqrt{3}}{2}(u_x^{2}-u_y^{2})\Bigr).
    \end{aligned}
    \label{eq:harmonics}
\end{equation}
Degree one is listed in Cartesian order and degree two in increasing \(m\) order.
Degrees three and four are the real solid harmonics of the same normalization in the same order, and are given in Supplementary Note~\ref{si:harmonics}.
The normalization is fixed by the addition theorem: for any nonzero \(\bu\) and \(\bv\),
\begin{equation}
    \bB_\ell(\bu)\cdot\bB_\ell(\bv)
    =\bigl(\lVert\bu\rVert\,\lVert\bv\rVert\bigr)^{\ell}
    P_\ell\!
    \left(\frac{\bu\cdot\bv}{\lVert\bu\rVert\,\lVert\bv\rVert}\right),
    \label{eq:addition-theorem}
\end{equation}
where \(P_\ell\) is the Legendre polynomial of degree \(\ell\).
The blocks are equivariant.
An orthogonal transformation of \(\bu\) mixes the \(2\ell+1\) components of a block among themselves, never across degrees, and each block carries an irreducible representation of \(\Orth(3)\)~\cite{varshalovich1988quantum}.
Under inversion they acquire only a parity sign,
\begin{equation}
    \bB_\ell(-\bu)=(-1)^{\ell}\bB_\ell(\bu).
    \label{eq:parity}
\end{equation}

\paragraph{Matrix form of degree two.}
Degree-two quantities are also used in matrix form.
Let \(\STF:\R^{5}\to\R^{3\times3}\) be the linear map fixed by
\begin{equation}
    \STF\bigl[\bB_2(\bu)\bigr]=\sqrt{\tfrac32}\,\Bigl(\bu\bu\tp-\tfrac13\lVert\bu\rVert^{2}\bI\Bigr),
    \label{eq:stf}
\end{equation}
which determines \(\STF\) uniquely because the vectors \(\bB_2(\bu)\) span \(\R^{5}\).
Its image consists of symmetric trace-free (STF) matrices, and it is an isometry between the Euclidean inner product on \(\R^{5}\) and the Frobenius inner product \(\bQ:\bQ'=\tr(\bQ\tp\bQ')\).

\paragraph{Degree-wise widths.}
Each degree retains its own number of channels, read from the leading entries of the shared radial map, so that the map to be tabulated keeps width \(C_0\) irrespective of \(L\).
A non-scalar degree costs \(2\ell+1\) accumulators per channel and produces a Gram block quadratic in its width, so the widths follow \(C_0\) sublinearly:
\begin{equation}
    C_1=\max\bigl(4,\,2^{\lceil\frac12\log_2 C_0\rceil}\bigr), \qquad C_2=\max\bigl(4,\,C_1/2\bigr), \qquad C_\ell=1\ \ (\ell\ge3).
    \label{eq:degree-channels}
\end{equation}
Degrees one and two keep the widest channel blocks, while degrees three and four keep one channel each, so that raising \(L\) adds angular resolution without letting the highest degrees dominate the node-feature state or the readout.
The floor of four binds at degree two for every \(C_0\) up to 64, so over that range widening the model enlarges the scalar block and the degree-one block alone.
Angular capacity is bought by raising \(L\), not by raising \(C_0\).
These sublinear widths are the compactness constraint of Section~\ref{sec:methods-setting} at work.
The flat width of that state is
\begin{equation}
    S=\sum_{\ell=0}^{L}(2\ell+1)C_\ell.
    \label{eq:moment-width}
\end{equation}
For \(C_0=32\) and \(L=2\), this gives \(C_1=8\), \(C_2=4\) and \(S=76\).
Supplementary Table~\ref{si:tab-widths} lists the derived widths for every supported combination of \(C_0\) and \(L\).

\paragraph{Envelope weights and their normalizers.}
The envelope enters the aggregation through a degree-dependent edge weight,
\begin{equation}
    \chi_{ij,0}=\chi_{ij},
    \qquad
    \chi_{ij,\ell}=\chi_{ij}^{2}\quad(1\le\ell\le L),
    \label{eq:degree-weight}
\end{equation}
so that the scalar features carry one envelope factor and every non-scalar feature carries two.
The second factor concentrates the angular information on nearer neighbors.
Each degree is normalized by the Euclidean norm of its own weight profile over the neighborhood, raised off zero by a fixed floor,
\begin{equation}
    M_{i,\ell}=\Bigl(\tfrac14+\sum_{j\in\Nb_i}\chi_{ij,\ell}^{2}\Bigr)^{1/2}\ \ge\ \tfrac12 .
    \label{eq:masses}
\end{equation}
Only two distinct values occur, \(M_{i,0}\) for the scalar block and \(M_{i,1}=\dots=M_{i,L}\) for the non-scalar blocks.
These are smooth measures of neighborhood weight rather than integer coordination numbers, because a neighbor near the cutoff contributes only fractionally, and the floor bounds the rescaling that an almost empty environment can produce.
For the non-scalar degrees, the direction-dependent harmonic factors of different neighbors partially cancel, so the aggregated features grow only as the square root of the neighbor count, at the same rate as the normalizers.
Dividing by the normalizers therefore removes their leading dependence on coordination.
The scalar contributions carry no direction factor and add with a single sign, so for \(n\) equivalent neighbors the scalar feature approaches \(\sqrt{n}\,\psi_{c}\), where \(\psi_c\) is the amplitude shared by the equivalent edges, and the scalar block retains a coordination signal of its own.

\paragraph{One aggregation layer.}
The complete message of an edge packs every quantity that must be accumulated into one flat vector,
\begin{equation}
    \bm m_{ij}=
    \Bigl[\;\chi_{ij,0}^{2},\;\;\chi_{ij,1}^{2},\;\;
        \bigl\{\chi_{ij,\ell}\,\psi_{ij,c}\,B_{\ell m}(\bu_{ij})\bigr\}_{0\le\ell\le L,\;m\le2\ell+1,\;c\le C_\ell}\;\Bigr]
    \in\R^{S+2},
    \label{eq:payload}
\end{equation}
with \(B_{01}\equiv1\), and a single neighbor reduction, that is one segment sum of the messages over the destination index of the graph, produces both normalizing scales and every unnormalized feature.
Dividing by the scales gives the node features after aggregation,
\begin{equation}
    X^{(1)}_{i,\ell,m,c} =\frac{1}{M_{i,\ell}}\sum_{j\in\Nb_i}\chi_{ij,\ell}\,\psi_{ij,c}\,B_{\ell m}(\bu_{ij}), \qquad 0\le\ell\le L, \label{eq:moments} \end{equation} so that \(\bX^{(1)}_{i,\ell}\) has shape \((2\ell+1,C_\ell)\) and the degree-zero harmonic axis has length one.
Hereafter we drop the layer superscript and write \(\bX_{i,\ell}\equiv\bX^{(1)}_{i,\ell}\).
The features are kept in the flat layout of Eq.~\eqref{eq:payload}, ordered by degree, then by harmonic component, then by channel.
Because neighbors are combined only by these sums, the features do not depend on the order in which neighbors are stored.
Under an orthogonal transformation of the configuration, each degree transforms with the corresponding representation of \(\Orth(3)\).
Supplementary Note~\ref{si:symmetry-proof} records the proof.

Equation~\eqref{eq:moments} is the single message-passing layer of DPA4C~\cite{gilmer2017neural}.
Its messages are cheap.
Because the initial node features are invariant scalars fixed by the species, a message depends on its neighbor only through the species and the relative position.
Each term of the sum involves one neighbor at a time.
Correlations among neighbors arise only later, from the polynomial contractions of the readout.
Every operation after Eq.~\eqref{eq:moments} is local to a single center.
A second layer would have to store and communicate the aggregated equivariant features themselves, and DPA4C stops at one, so no learned feature ever travels between atoms.

\subsection{Nonlinear readout and atomic energy}
\label{sec:methods-readout}

The readout converts the equivariant node features into the invariant feature vector \(\bD_i\) and maps it to the atomic energy.
Invariants are formed by contracting the equivariant node features over their harmonic indices with fixed coefficients.
The learned maps in this stage act only on channel indices and are linear, so the MLP at the end remains the only trainable nonlinearity.
The requirement guiding the construction is that the retained invariants should determine the node features up to a global rotation or reflection, the ambiguity that no invariant can resolve.
Classical invariant theory makes this a finite task.
The ring of polynomial invariants of any finite list of feature blocks admits a finite generating set.
For vector and symmetric trace-free matrix channels, the invariants of degree at most four are known explicitly~\cite{spencer1958theory}.
For computational efficiency, DPA4C retains the quadratic generators in full and restricts the cubic and quartic generators through learned degree-wise low-rank projections, selecting within a closed finite family.
A contraction of \(\nu\) feature blocks is at the same time a sum over \(\nu\)-tuples of neighbors and describes correlations of body order up to \(\nu+1\), counting the center.
This places the retained set in the vocabulary of cluster-expansion and bispectrum models~\cite{drautz2019atomic,bartok2013representing,thompson2015spectral}.

\paragraph{Channel alignment and exact Gram invariants.}
Degrees one and two first pass through a full-width residual channel map,
\begin{equation}
    \widetilde\bX_{i,\ell}=\bX_{i,\ell}\bigl(\bI+\bW_\ell\bigr),
    \qquad \bW_\ell\in\R^{C_\ell\times C_\ell},
    \qquad \ell\in\{1,2\},
    \label{eq:alignment}
\end{equation}
which acts on the channel axis only and therefore preserves equivariance.
The identity term provides an explicit residual channel path in addition to the learned mixing.
Because the degree blocks are irreducible and pairwise inequivalent, channel mixing of this form, acting identically on every component \(m\), is the most general equivariant linear map on the features.
For \(\ell\ge3\), \(\widetilde\bX_{i,\ell}=\bX_{i,\ell}\).

Within a fixed angular degree, the channel Gram matrix contains the complete quadratic \(\Orth(3)\)-invariant information.
Treating each channel as a vector of length \(2\ell+1\) over harmonic components,
\begin{equation}
    G_{i,\ell,cc'}=\sum_{m=1}^{2\ell+1}\widetilde X_{i,\ell,m,c}\widetilde X_{i,\ell,m,c'}, \qquad \bG_{i,\ell}\in\R^{C_\ell\times C_\ell}.
    \label{eq:gram}
\end{equation}
An orthogonal transformation of the harmonic components leaves every inner product unchanged, so \(\bG_{i,\ell}\) is \(\Orth(3)\)-invariant.
Its diagonal is the channel power spectrum and its off-diagonal entries add the cross-channel terms.
These entries exhaust the quadratic generators.
Within a degree every pairwise scalar product is a Gram entry, and across degrees no quadratic invariant exists because the blocks are inequivalent irreducibles.
Only the upper triangle enters the feature vector, with every strict off-diagonal entry multiplied by \(\sqrt2\) so that the half-vectorization \(\vech(\cdot)\) preserves the Frobenius norm and no direction of the Gram matrix is reweighted.

\paragraph{Degree-wise low-rank projections and Cartesian bispectrum.}
Quadratic invariants meet the completeness requirement at degree one but not beyond.
The Gram matrix \(\bG_{i,1}\) fixes the channel vectors up to a common orthogonal transformation of the three Cartesian components, which is exactly the physical freedom.
For degree two, the Gram matrix treats the five packed components as an abstract five-dimensional space and is unchanged under the ten-parameter group \(\Orth(5)\), of which the physical rotations occupy only a three-parameter subgroup.
Moreover, since no quadratic invariant couples one degree to another, the relative orientation between degree blocks is lost entirely.
Third-order contractions, the cubic generators of the invariant ring, recover part of this information.
In the vocabulary of the neighbor-density expansion they are the bispectrum~\cite{bartok2013representing,thompson2015spectral}.

Only degree triples that couple to a scalar and are even under inversion contribute, which selects
\begin{equation}
    \mathcal{T}_L=\bigl\{(\ell_1,\ell_2,\ell_3):
    1\le\ell_1\le\ell_2\le\ell_3\le L,\;
    \ell_3\le\ell_1+\ell_2,\;
    \ell_1+\ell_2+\ell_3\ \text{even}\bigr\},
    \label{eq:triples}
\end{equation}
enumerated in lexicographic order.
The inequality \(\ell_3\le\ell_1+\ell_2\) is the angular triangle rule, and the parity condition follows from Eq.~\eqref{eq:parity}.
An odd total degree gives a pseudoscalar, which changes sign under reflection and cannot enter an \(\Orth(3)\)-invariant energy.
Triples containing degree zero are omitted because they reduce to products of quantities the feature vector already carries.
For \(L=4\) the admissible triples are \((1,1,2)\), \((1,2,3)\), \((1,3,4)\), \((2,2,2)\), \((2,2,4)\), \((2,3,3)\), \((2,4,4)\), \((3,3,4)\) and \((4,4,4)\).
The sets for \(L=2\) and \(L=3\) are the subsets with \(\ell_3\le L\).

The coupling tensor of a triple is built from Gaunt coefficients, the integrals of a product of three harmonics over the unit sphere \(\Sph\), in the real Cartesian convention of Eq.~\eqref{eq:harmonics}~\cite{homeier1996some}:
\begin{equation}
    \mathcal{I}^{(\ell_1\ell_2\ell_3)}_{m_1m_2m_3}
    =\int_{\Sph}
    B_{\ell_1m_1}(\bu)B_{\ell_2m_2}(\bu)B_{\ell_3m_3}(\bu)\,\mathrm{d}\Omega(\bu), \qquad \mathcal{C}^{(\ell_1\ell_2\ell_3)} =\pm\frac{\mathcal{I}^{(\ell_1\ell_2\ell_3)}}{\lVert\mathcal{I}^{(\ell_1\ell_2\ell_3)}\rVert_{\mathrm F}}, \label{eq:gaunt} \end{equation} where \(\mathrm{d}\Omega\) is the surface measure, the tensor is scaled to unit Frobenius norm, and its overall sign is fixed by a convention that the first layer of the MLP absorbs.
The integrals are evaluated once, when the model is constructed, and are exact because the integrand is a polynomial of degree \(\ell_1+\ell_2+\ell_3\le12\) on the sphere.

Instantiating the cubic generators over all channels would scale cubically in the channel width, against the quadratic cost of the Gram entries, so the third-order terms instead use learned degree-wise low-rank projections.
A probe is a linear combination of channels,
\begin{equation}
    Z_{i,\ell,m,\kappa}=\sum_{c=1}^{C_\ell}\widetilde X_{i,\ell,m,c}A_{\ell,c\kappa},
    \qquad
    \bA_\ell\in\R^{C_\ell\times K_\ell},
    \qquad \kappa=1,\dots,K_\ell,
    \label{eq:probes}
\end{equation}
with ranks fixed by the architecture rather than left free,
\begin{equation}
    K_1=C_2,\qquad K_2=2,\qquad K_\ell=1\ \ (\ell\ge3), \label{eq:probe-ranks} \end{equation}
When \(K_\ell=C_\ell\), the implementation uses the identity.
This occurs for the degree-one block at the two narrowest scalar widths and for every degree \(\ell\ge3\).
The remaining degree-one profiles and every degree-two block use genuine low-rank projections.
The bispectrum features are
\begin{equation}
    J^{(\ell_1\ell_2\ell_3)}_{i,\kappa_1\kappa_2\kappa_3} =\sum_{m_1m_2m_3} \mathcal{C}^{(\ell_1\ell_2\ell_3)}_{m_1m_2m_3} Z^{(\ell_1)}_{i,m_1,\kappa_1} Z^{(\ell_2)}_{i,m_2,\kappa_2} Z^{(\ell_3)}_{i,m_3,\kappa_3}.
    \label{eq:bispectrum}
\end{equation}
Because the coupling tensor is symmetric under permutations of axes with equal degrees, permuting the corresponding probe indices reproduces the same value.
Of each such family of equal entries only the non-decreasing index representative is retained, rescaled by the square root of the family size so that the norm of the full symmetric tensor is preserved.
One triple therefore contributes
\begin{equation}
    D_{\mathrm{bis}}^{(\ell_1\ell_2\ell_3)}=
    \begin{cases}
        K_{\ell_1}K_{\ell_2}K_{\ell_3}, & \text{all degrees distinct}, \\[2pt] \tfrac12 K(K+1)\,K', & \text{exactly two degrees equal}, \\[2pt] \tfrac16 K(K+1)(K+2), & \text{all three degrees equal},\end{cases} \label{eq:bispectrum-count} \end{equation} features, where \(K\) is the rank of the repeated degree and \(K'\) that of the remaining one, and \(D_{\mathrm{bis}}=\sum_{(\ell_1,\ell_2,\ell_3)\in\mathcal{T}_L}D_{\mathrm{bis}}^{(\ell_1\ell_2\ell_3)}\).

Two triples admit closed forms.
For \(L=2\) they are the only admissible triples.
Let \(\bv_\kappa\in\R^{3}\) be the \(\kappa\)-th degree-one probe of atom \(i\), with components \(v_{\kappa,m}=Z_{i,1,m,\kappa}\), and let \(\bQ_\eta=\STF\bigl(Z_{i,2,\cdot,\eta}\bigr)\) be the \(\eta\)-th degree-two probe in matrix form, \(\eta=1,\dots,K_2\).
Then
\begin{equation}
    J^{(112)}_{i,\kappa_1\kappa_2\eta}=-\frac{1}{\sqrt5}\,\bv_{\kappa_1}\tp\bQ_\eta\bv_{\kappa_2}, \qquad J^{(222)}_{i,\eta_1\eta_2\eta_3}=-\sqrt{\frac{12}{35}}\,\tr\bigl(\bQ_{\eta_1}\bQ_{\eta_2}\bQ_{\eta_3}\bigr).
    \label{eq:closed-forms}
\end{equation}
Their invariance is immediate.
Under \(\bu\mapsto\bR\bu\) the degree-one probes transform as \(\bv\mapsto\bR\bv\) and the degree-two probes as \(\bQ\mapsto\bR\bQ\bR\tp\), which leaves both a quadratic form and the trace of a matrix product unchanged.
The framework tensor implementation evaluates \((1,1,2)\) with the first closed form, reusing \(\bQ\bv\) for Eq.~\eqref{eq:quartic}, and evaluates \((2,2,2)\) through the generic coupling contraction of Eq.~\eqref{eq:bispectrum}.
The compressed implementation specializes both identities in Eq.~\eqref{eq:closed-forms} and reserves the sparse coupling path for the remaining triples.
Equation~\eqref{eq:harmonics} fixes the convention at degrees zero through two.
At degrees three and four the choice of orthonormal basis within a degree is free, because those degrees enter only through squared norms and through contractions with a coupling tensor built in the same basis.
A simultaneous orthogonal change of basis in the node features and coupling tensor leaves these contractions unchanged.
The independently fixed overall sign of each normalized coupling sets only the sign convention of the corresponding bispectrum block.

\paragraph{Projected quartic invariant.}
The closed form for the \((1,1,2)\) triple builds the intermediate \(\bQ_\eta\bv_\kappa\).
Contracting it with itself gives a fourth-order invariant at the cost of one further reduction,
\begin{equation}
    \Pi_{i,\eta\kappa}
    =\bigl\lVert\bQ_\eta\bv_\kappa\bigr\rVert^{2}
    =\bv_\kappa\tp\bQ_\eta^{2}\bv_\kappa,
    \qquad
    \bPi_i\in\R^{K_2\times K_1},
    \label{eq:quartic}
\end{equation}
which is invariant because it is a squared vector length and which adds \(K_1K_2\) features of body order up to five.

This term closes the description of a single pair of probes.
For one vector \(\bv\) and one symmetric trace-free matrix \(\bQ\), the polynomial invariants of \(\Orth(3)\) are generated by
\begin{equation}
    \bv\cdot\bv,
    \qquad
    \tr\bQ^{2},
    \qquad
    \tr\bQ^{3},
    \qquad
    \bv\tp\bQ\bv,
    \qquad
    \bv\tp\bQ^{2}\bv,
    \label{eq:integrity-basis}
\end{equation}
five functions for the five degrees of freedom that survive after the three of the group are removed~\cite{spencer1958theory}.
Their values determine the pair up to a global rotation or reflection, so within each probe pair the completeness requirement is met in full.
DPA4C retains all five.
The two quadratic generators are linear combinations of the Gram entries of Eq.~\eqref{eq:gram}, the two cubic generators are the closed forms of Eq.~\eqref{eq:closed-forms}, and the quartic generator is Eq.~\eqref{eq:quartic}.

\paragraph{What the readout represents.}
The retained set is a fixed polynomial family in the node features, of degree two, three and four, evaluated on a learned linear reparameterization of the channels.
Two deliberate restrictions control how the higher-order members of this family are instantiated across channels.
The cubic terms \(\bv_{\kappa_1}\tp\bQ_\eta\bv_{\kappa_2}\) are retained for every probe pair, whereas of the quartic terms only the diagonal \(\kappa_1=\kappa_2\) is, so the per-pair completeness statement does not extend to cross-pair combinations.
Where they reduce rank, the degree-wise low-rank projections further confine the third- and fourth-order terms to a subspace of each degree that is learned once and shared by all atoms.
Channels outside that subspace enter only through their Gram entries.
Together, these restrictions reduce the number of cubic and quartic channel combinations while preserving the full Gram blocks and the five-invariant integrity basis for each individual probe pair.

\paragraph{Assembly and calibration.}
\label{sec:methods-output}
The retained invariant blocks are concatenated, in order of increasing polynomial degree in the messages, into
\begin{equation}
    \widetilde\bD_i=\Bigl[\;
        \bX^{(0)}_{i,0},\;\;
        \bX_{i,0},\;\;
        M_{i,0},\;\;M_{i,1},\;\;
        \bigl\{\vech\bigl(\bG_{i,\ell}\bigr)\bigr\}_{\ell=1}^{L},\;\;
        \bigl\{\bJ^{(\ell_1\ell_2\ell_3)}_i\bigr\}_{\mathcal{T}_L},\;\;
        \bPi_i
        \;\Bigr],
    \label{eq:descriptor}
\end{equation}
where \(\bJ^{(\ell_1\ell_2\ell_3)}_i\) collects the retained entries of Eq.~\eqref{eq:bispectrum} for one triple, the Gram blocks appear in order of increasing degree and the bispectrum blocks in the lexicographic order of Eq.~\eqref{eq:triples}.
Its width is
\begin{equation}
    D_{\mathrm{out}} =2C_0 +2 +\sum_{\ell=1}^{L}\frac{C_\ell(C_\ell+1)}{2} +D_{\mathrm{bis}} +K_1K_2.
    \label{eq:dim-out}
\end{equation}
The two contributions of size \(C_0\) are the initial feature \(\bX^{(0)}_{i,0}\) of the center, which enters as a skip connection past the aggregation layer, and the aggregated scalar block \(\bX_{i,0}\).
The constant counts the two normalizing scales, and for \(C_0=32\) and \(L=2\), \(D_{\mathrm{out}}=144\).
The number of radial modes does not appear.
Increasing \(R\) changes the per-edge work but neither the node-feature state nor the width of the invariant feature vector.

The two normalizing scales enter the feature vector alongside the invariants because they are the only place where the absolute weight of the neighborhood survives Eq.~\eqref{eq:moments} in a form the readout can read directly, and because they depend on the coordinates and therefore contribute to the forces.
They cost no additional reduction, since they are already part of the message in Eq.~\eqref{eq:payload}.

The blocks of Eq.~\eqref{eq:descriptor} are polynomials of different order in the node features and differ widely in scale, so \(\widetilde\bD_i\) passes through a fixed diagonal calibration that yields the invariant feature vector consumed by the MLP,
\begin{equation}
    \bD_i=\bigl(\widetilde\bD_i-\bm\mu\bigr)\oslash\bm\sigma,
    \qquad
    \bm\mu,\bm\sigma\in\R^{D_{\mathrm{out}}},
    \label{eq:calibration}
\end{equation}
with \(\oslash\) denoting elementwise division.
This is an initialization-time preconditioner rather than a running normalization.
The statistics \(\bm\mu\) and \(\bm\sigma\) are estimated once from a sample of training frames and then held fixed.
Every geometric entry is brought to the root-mean-square scale of the initial features.
For all but two of them the stored mean is zero and the stored scale is the measured root mean square.
The two normalizing scales are standardized instead, with their measured mean subtracted and their centered standard deviation used in place of the root mean square, because by Eq.~\eqref{eq:masses} they are bounded below by \(\tfrac12\) and fluctuate about a mean several times their spread.
The initial feature of the center is passed through unchanged.

\paragraph{Atomic energy, forces and the virial.}
The atomic energy is
\begin{equation}
    E_i=\MLP\bigl(\bD_i\bigr)+E^{\mathrm{ref}}_{a_i},
    \label{eq:atomic-energy}
\end{equation}
where \(E^{\mathrm{ref}}_{a}\) is a per-element reference energy fitted to the data and \(\MLP\) has \(\Lambda\) hidden layers of equal width and a linear scalar head:
\begin{equation}
    \begin{aligned}
        \bh^{(1)}_i                   & =\phi\bigl(\bD_i\bm\Theta_1+\bm b_1\bigr),                           \\
        \bh^{(\tau)}_i                & =\phi\bigl(\bh^{(\tau-1)}_i\bm\Theta_\tau+\bm b_\tau\bigr)+\bh^{(\tau-1)}_i, \\
        \MLP\bigl(\bD_i\bigr) & =\bh^{(\Lambda)}_i\bm w_{\mathrm{head}}+b_{\mathrm{head}},
    \end{aligned}
    \label{eq:fitting}
\end{equation}
with \(\tau=2,\dots,\Lambda\) and \(\phi\) an elementwise activation.
The identity residual is carried by every layer whose input and output widths agree, which excludes the first layer whenever \(D_{\mathrm{out}}\) differs from the hidden width.
A single network is shared by all elements, and the element of the center enters through its initial feature \(\bX^{(0)}_{i,0}\) inside \(\bD_i\).
Downstream of the aggregation the model applies no other trainable nonlinearity, so the nonlinear depth acting on the assembled environment resides in this network.

Forces and the virial follow by differentiation.
Since the energy depends on the positions only through the edge displacements, the per-edge gradients \(\partial E/\partial\br_{ij}\in\R^{3}\) determine both:
\begin{equation}
    \bF_k=-\frac{\partial E}{\partial\br_k}
    =\sum_{j\in\Nb_k}\frac{\partial E}{\partial\br_{kj}}
    -\sum_{i:\,k\in\Nb_i}\frac{\partial E}{\partial\br_{ik}},
    \qquad
    \bXi=-\sum_{i}\sum_{j\in\Nb_i}\frac{\partial E}{\partial\br_{ij}}\otimes\br_{ij},
    \label{eq:forces}
\end{equation}
where the first sum collects the edges for which atom \(k\) is the center and the second those for which it is the neighbor, \(\otimes\) is the outer product, \(\bF_k\) is in eV/\Ang{} and \(\bXi\) in eV.
Supplementary Note~\ref{si:verification} reports the numerical checks of the identities and invariances used above.

\subsection{Compressed inference}
\label{sec:methods-compression}

Deployment uses a tabulated and compiled form of the same model.
The dominant cost of a step is the edge computation.
A message is evaluated once per directed edge but the readout once per atom, and at the 6~\Ang{} cutoff a condensed-phase atom has of order one hundred neighbors, so message evaluations outnumber readout evaluations by that factor.
Compression therefore optimizes only the edge computation.
The readout is evaluated exactly as defined above.
Tabulation replaces a learned function of one variable, evaluated once per edge, by interpolation in a precomputed table.
DP Compress uses it to accelerate the per-edge embedding networks of Deep Potential models~\cite{lu2022dp,zhang2018end}.
DPA4C extends the strategy through the separable form of its messages (Section~\ref{sec:methods-moments}).
The angular factor, a closed-form block of \(\Orth(3)\) irreducibles, carries no learned parameters, while the learned dependence is confined to the one-dimensional distance and the finite type pair and is therefore tabulated and cached.

\paragraph{Separability.}
The edge computation splits into three parts with different domains.
This separation is the compressibility constraint of Section~\ref{sec:methods-setting} made concrete.
The analytic basis and the radial network of Eqs.~\eqref{eq:bessel-basis}--\eqref{eq:radial-modes} depend only on the scalar \(\rho\), so the composed maps \(\bg(\rho)\) and \(\bq(\rho)\) can be tabulated on \([0,\rcut]\).
The ordered type-pair modulation of Eq.~\eqref{eq:pairfilm} depends only on the finite ordered type pair, so the coefficients \(\bgamma\), \(\bbeta\) and \(\bU\) can be evaluated once per pair and cached.
The envelope and the harmonics are evaluated from their closed forms, so tabulation and caching together capture every learned function on the edge.

\paragraph{Radial table.}
Let \(\Delta\) be a uniform spacing and let \(\rho_s=s\Delta\) for \(s=0,\dots,\lceil\rcut/\Delta\rceil\).
All deployment calculations reported here use \(\Delta=0.002~\Ang\).
For each output channel of the concatenation \([\bg,\bq]\), the value, the first derivative and the second derivative are evaluated at the knots in double precision by automatic differentiation of the trained network.
On each interval the table stores the unique quintic polynomial in \(\rho-\rho_s\) that reproduces those three quantities at both ends of the interval.
This two-point Hermite interpolant is standard~\cite{deboor2001splines} and its coefficients are given in Supplementary Note~\ref{si:quintic}.
Matching second derivatives makes the interpolant twice continuously differentiable across knots, so its contribution to the force is continuously differentiable.
Behavior at the cutoff is unaffected, because the envelope of Eq.~\eqref{eq:envelope} is kept in closed form and multiplies the tabulated amplitude, so every edge contribution still vanishes to third order at \(\rcut\).
Contributions at or beyond \(\rcut\) are removed by the envelope, so the table needs no extrapolation region.
Forward evaluation and backward differentiation use the same polynomial, so the reported force is the analytic derivative of the interpolating model rather than a finite-difference or independently interpolated derivative.

The table has \(\lceil\rcut/\Delta\rceil\) rows and \(6(C_0+R)\) entries per row: 3,000 rows for \(\rcut=6~\Ang\), occupying 1.1 MiB in single precision at \(C_0=16\), \(R=0\) and 2.5 MiB at \(C_0=32\), \(R=4\).
The compressed model also stores the ordered scale, shift and mode-mixing caches, the initial-feature table, the readout matrices of Eqs.~\eqref{eq:alignment} and~\eqref{eq:probes}, the coupling tensors and the calibration vectors.
Each of these is a quantity of the finite type table or of the fixed architecture, so none of them grows with the simulated system.

\paragraph{Fused execution and backward recomputation.}
The compiled implementation in DeePMD-kit~\cite{zeng2023deepmd,zeng2025deepmdv3} first converts the graph to a canonical destination-sorted representation.
A source-index array and Cartesian edge-vector array are accompanied by a CSR row pointer over destination atoms.
The compressed energy-gradient operator is a single fused kernel that assigns one warp to a destination neighborhood and scans its contiguous edge interval once in the forward pass.
The kernel is compiled separately for each combination of the structural parameters \(C_0\), \(L\) and \(R\), and each compiled specialization is called a profile.
The quintic radial interpolant, ordered-pair modulation, envelope and harmonics are evaluated in registers.
Profile-specific subwarp and channel tiling accumulate the \(S+2\) message directly into the destination feature state.
The same kernel then normalizes the features, evaluates the polynomial invariants and writes the calibrated feature vector \(\bD_i\) of each destination atom.
Unlike the framework tensor implementation, which materializes each intermediate stage as a full-system array, this path writes only the retained per-node state and the feature vector \(\bD_i\) to global memory.

The operator retains the per-node feature state and the two normalizing scales required by the backward pass.
After the MLP and polynomial-invariant vector--Jacobian products have been formed, its edge backward pass recomputes the radial interpolation, pair modulation, envelope and harmonics and returns the Cartesian gradient \(\partial E/\partial\br_{ij}\) for each canonical edge.
A separate force--virial operator then combines destination- and source-sorted CSR views to evaluate Eq.~\eqref{eq:forces} without floating-point atomics.
The complete force path therefore uses two compiled operators rather than a single monolithic energy--force--virial kernel, but it does not construct a framework automatic-differentiation tape.

\paragraph{Node tiling.}
The invariant feature vector and the MLP activations are width dependent but node local.
Confining their lifetime to fixed-size tiles realizes the compactness constraint of Section~\ref{sec:methods-setting}.
They are evaluated in contiguous tiles of at most \(B=131{,}072\) destination atoms by default.
Within one tile, the implementation completes the feature-vector forward evaluation, the MLP forward pass and scalar head, the MLP vector--Jacobian product, and the polynomial-invariant vector--Jacobian product before the workspace is reused for the next tile.
Destination sorting makes the edges of a node tile one contiguous graph span, so the tile requires no gathered edge-index list.
The MLP vector--Jacobian product overwrites the feature-vector workspace after the head has consumed it, and the polynomial-invariant vector--Jacobian product in turn reuses the feature storage before edge recomputation begins.
With \(F_1,\ldots,F_\Lambda\) the MLP hidden widths, \(F_{\max}=\max_\tau F_\tau\) and \(\widetilde B=\min(B,N)\), the width-dependent temporary storage is proportional to
\begin{equation}
    \widetilde B\left[D_{\mathrm{out}}+(S+2)+\sum_{\tau=1}^{\Lambda}F_\tau+qF_{\max}\right],
    \qquad
    q=\begin{cases}
        1, & \Lambda=1, \\
        2, & \Lambda>1,
    \end{cases}
    \label{eq:tiled-workspace}
\end{equation}
where the final term is the one- or two-slot MLP scratch array.
This replaces temporary storage proportional to \(N[D_{\mathrm{out}}+(S+2)+\sum_\tau F_\tau+qF_{\max}]\) in an untiled execution.
The canonical graph, Cartesian edge gradients, forces and virials remain full-system arrays with \(O(N_{\mathrm{edge}}+N)\) storage.

\paragraph{Fused MLP.}
The atomic-energy MLP is compiled together with the tiled feature path.
Its dense layers use cuBLAS matrix multiplications, while the bias, activation and identity-residual operations are fused into the elementwise step that follows each multiplication.
Forward activations alternate between two scratch slots instead of allocating one full node array per layer, and the layer pre-activations required for the vector--Jacobian product are packed into one tile-local buffer.
After the scalar head has been evaluated, the feature-vector array becomes the input-gradient array, as described above.
The deployed hidden widths are multiples of four, allowing these elementwise steps to use aligned four-float vector loads.
Atomic energies are accumulated into a double-precision output even though the feature and MLP weights are single precision.

\paragraph{Fidelity of the compressed model.}
The compiled profiles cover \(C_0\in\{8,16,32,64,128\}\), \(L\in\{2,3,4\}\) and \(R\in\{0,2,4,8\}\) in single precision.
The kernel evaluates the harmonics with \(\lVert\bu\rVert^{2}\) set to one.
For the regularized direction of Eq.~\eqref{eq:edge-geometry}, the difference is proportional to \(\varepsilon^{2}/\rho^{2}\) and lies below single-precision resolution at physical separations.
Across the tested profiles, the tabulated model reproduces the continuous implementation to relative deviations of order \(10^{-7}\).
Supplementary Note~\ref{si:verification} gives the numerical checks.

\subsection{Datasets, training and evaluation protocols}
\label{sec:methods-experiments}

\paragraph{OMat24 data.}
OMat24 is an inorganic-materials dataset~\cite{barroso2024open}.
We use its published training and validation sets without subsampling.
The validation set is used only for evaluation in the results reported here.
We do not treat it as an independent test set.
The DPA4 and DPA4C evaluations retain the energies, Cartesian forces and periodic-cell virials supplied with the dataset.

\paragraph{MatPES data.}
MatPES is an inorganic-materials dataset with single-point PBE and r\(^2\)SCAN labels~\cite{kaplan2025matpes}.
We use its r\(^2\)SCAN track in the R2SCAN-2025.2 release, with the published split of 347,889 training and 19,328 test structures.
The DPA4 reference values use the same test split and metric definitions.

\paragraph{OMol25 data.}
OMol25 is a molecular dataset of more than 100 million DFT single-point calculations at the \(\omega\)B97M-V/def2-TZVPD level computed with ORCA~\cite{levine2025open}.
We use its OMol-0 release.
It combines biomolecules, metal complexes, electrolytes and recomputed community datasets, with total charge and spin multiplicity supplied as explicit model inputs.
We train on the 101.7-million-structure training split and evaluate on the out-of-distribution composition validation split.
The DPA4 reference values use the same split and metric definitions~\cite{li2026dpa4}.

\paragraph{DPA4C training.}
All five DPA4C variants use a cutoff of \(6~\Ang\), single-precision parameters and the same 118-element type map.
Each run uses NVIDIA H20 GPUs, with the number of GPUs for each model listed in the Supplementary Information.
Training minimizes a weighted MAE over the available energy, force and virial labels.
The OMol25 virial weight is zero.
The optimizer is HybridMuon, which applies the orthogonalized momentum update of Muon~\cite{jordan2024muon} to matrix parameters and Adam to the remaining parameters.
Shared model and optimizer settings are listed in Supplementary Table~\ref{si:tab-model-variants}.
Dataset- and model-specific batch sizes, learning-rate schedules and training lengths are given in Supplementary Tables~\ref{si:tab-omat24-training}, \ref{si:tab-matpes-training} and~\ref{si:tab-omol25-training}.
Training cost is reported in H20 GPU-hours for each complete optimization schedule.

\paragraph{Accuracy metrics.}
For an OMat24 or MatPES evaluation split of \(M\) structures, the energy metric is
\begin{equation}
    \mathrm{MAE}_{E/N}=\frac{1}{M}\sum_{s=1}^{M}\left|\frac{E_s-E_s^{\mathrm{ref}}}{N_s}\right|,
    \label{eq:energy-mae}
\end{equation}
where \(N_s\) is the number of atoms in structure \(s\).
For OMat24 and MatPES, the force MAE is the mean absolute difference over all atoms and Cartesian components in the split.
Stress is computed from the virial as \(\bm\sigma=-\bXi/V\), with \(V\) the cell volume, and its MAE is averaged over all nine Cartesian components and structures.
For these two materials benchmarks, the values are reported in meV/atom, meV/\Ang{} and meV/\Ang\(^{3}\), respectively.
For OMol25, we report the unnormalized total-energy MAE, \(M^{-1}\sum_{s=1}^{M}|E_s-E_s^{\mathrm{ref}}|\), and the force MAE on the out-of-distribution composition validation split using the benchmark aggregation reported in the DPA4 study.
The two OMol25 metrics are reported in kcal/mol and kcal/mol/\Ang{}, respectively.
Stress is not reported.
The DPA4 accuracy values on all three benchmarks are taken from the DPA4 study~\cite{li2026dpa4}.
The MACE-Omat validation values are taken from the independent ASE evaluation of the released MACE-OMAT-0 checkpoints reported in the DPA4 study~\cite{batatia2025crosslearning,li2026dpa4}.
The released NEP89 model~\cite{liang2026nep89} is evaluated independently on the OMat24 validation set with the same aggregation rules.
Its model-specific reference treatment and exact software revisions are reported in Supplementary Note~\ref{si:nep89-omat24-evaluation}.

\paragraph{Whole-step MD benchmark.}
The benchmark systems are periodic diamond-carbon supercells generated from the eight-atom conventional cell with lattice constant \(3.567~\Ang\).
Independent Gaussian coordinate perturbations with standard deviation \(0.03~\Ang\) and seed 0 break exact crystal symmetry.
All measurements use one NVIDIA H20.
DPA4C is evaluated with LAMMPS/Kokkos~\cite{thompson2022lammps,trott2022kokkos}, whereas NEP89 uses GPUMD on the same initial supercells.
Both engines propagate NVT trajectories at 300~K with a time step of 1~fs and a \(1~\Ang\) neighbor skin.
The reported throughput is \(N/t_{\mathrm{step}}\) in atoms/s, where \(t_{\mathrm{step}}\) is the wall time of the complete MD step, including graph construction, model evaluation, force and virial assembly and integration.
We report the saturated throughput, the plateau reached in each system-size scan.

Because each model is measured in its native simulation engine, the comparison retains engine-level overhead and is not an isolated neural-network kernel timing.
For each model, the largest completed system is the largest atom count for which the NVT simulation completes.
The next scanned size is reported as the first failure.
The node-tiling and whole-step operator ablations follow the protocols of Supplementary Note~\ref{si:benchmark-tables}.

\paragraph{Distributed scaling.}
The distributed benchmark uses NVIDIA V100-SXM2-16GB GPUs, with 16 GPUs per node and one MPI process per GPU.
The 32-GPU allocation is the first point spanning two nodes.
The intra-node topology, rank binding and software environment are specified in Supplementary Note~\ref{si:multigpu-scaling}.
All runs use the diamond-carbon construction and the complete-step NVT protocol described above.
DPA4C strong-scaling values use the arithmetic mean across ten independent Slurm allocations at every model--GPU-count point.
Weak-scaling and DPA4-Mini values use the median of three allocations.
Their repeat ranges, together with the DPA4C strong-scaling dispersion, are reported in the Supplementary Information.

Weak scaling uses \(2{,}000{,}376\) atoms per GPU at \(p=1,2,4,\ldots,1024\), so the total atom count on \(p\) GPUs is \(N_p=2{,}000{,}376\,p\) and the 1,024-GPU endpoint contains \(2{,}048{,}385{,}024\) atoms on 64 nodes.
Each weak-scaling run uses 100 warm-up steps and 500 timed steps.
If \(\Theta_p(N)\) denotes throughput on \(p\) GPUs at atom count \(N\), the complete-step time is \(t_p=N_p/\Theta_p(N_p)\) and weak-scaling efficiency is
\begin{equation}
    \eta_{\mathrm{w}}(p)=\frac{t_1}{t_p}.
    \label{eq:weak-scaling}
\end{equation}

Strong scaling fixes one globally cubic \(2{,}000{,}376\)-atom system for every DPA4C variant and measures powers-of-two GPU counts from 1 to 1,024.
Per-variant warm-up and timed-step schedules, the processor grids and the LAMMPS execution contracts are specified in Supplementary Note~\ref{si:multigpu-scaling}.
At the fixed atom count \(N\), the strong speedup and efficiency are
\begin{equation}
    S_{\mathrm{s}}(p) =\frac{\Theta_p(N)}{\Theta_1(N)}, \qquad \eta_{\mathrm{s}}(p) =\frac{S_{\mathrm{s}}(p)}{p}.
    \label{eq:strong-scaling}
\end{equation}
The measured step rate and the 1-fs time step define the MD speed in ns/day.

DPA4-Mini serves as a message-passing deployment reference with its own atom counts, specified in Supplementary Note~\ref{si:multigpu-scaling}.
The unequal workloads are excluded from absolute-throughput comparisons.

\paragraph{Single-GPU crystal scans with empirical potentials.}
The computational-reference benchmark evaluates the five DPA4C variants and NEP89~\cite{liang2026nep89} together with carbon MEAM~\cite{baskes1992meam} and Tersoff~\cite{tersoff1989multicomponent}, or copper EAM~\cite{mishin2001copper} and MEAM, on one Tesla V100-SXM2-16GB GPU.
Diamond-carbon and FCC-copper systems are constructed as near-cubic conventional-cell supercells and propagated in NVT at 300~K with a 1-fs time step.
The LAMMPS paths use a \(1.0~\Ang\) neighbor skin, whereas GPUMD retains its native neighbor handling.
For each material--model pair, three independent scans double the requested atom count from 128 until the first OOM and then perform three bisections.
Every point uses ten warm-up and 100 timed complete steps.
DPA4C and the empirical potentials run with LAMMPS/Kokkos, whereas NEP89 uses native GPUMD.
The empirical potentials retain their native physical cutoffs, and predictive accuracy lies outside this computational comparison.
Exact crystal constructions, execution paths, parameter records, scan aggregation and engine inputs are specified in Supplementary Note~\ref{si:classical-reference}.

%% file: chap/acknowledgments.tex
\section{Acknowledgments}

We gratefully acknowledge the support received for this work.
The distributed-scaling calculations on NVIDIA V100 GPUs were supported by the Open Source Supercomputing Center of S-A-I.
The work of Han Wang is supported by the National Key R\&D Program of China (Grant No.~2022YFA1004300) and the National Natural Science Foundation of China (Grants No.~12525113 and No.~12561160120).
The work of Linfeng Zhang was in part supported by the Advanced Materials-National Science and Technology Major Project, China (No.~2024ZD0606900).
The work of Jianming Xue and Tiancheng Li is supported by the National Natural Science Foundation of China (Grant No.~12135002).

%% file: chap/data_availability.tex
\section{Data and code availability}

The OMat24, MatPES and OMol25 datasets are publicly available from the sources cited in Methods.
The DPA4C training and inference codes are publicly available in the DeePMD-kit repository (\url{https://github.com/deepmodeling/deepmd-kit}) from version 3.2.0.

%% file: chap/si.tex
\beginsupplement

\FloatBarrier
\begin{center}
    {\LARGE\bfseries Supplementary Information for

        \textit{Universal Machine-learning Molecular Dynamics at the Speed of Empirical Potentials}\par}
    \vspace{1em}
\end{center}

\addtocontents{toc}{\protect\setcounter{tocdepth}{2}}
\tableofcontents

\section{Mathematical formulation and validation}
\label{si:mathematical-formulation}

The notation, higher-degree harmonic blocks, derived feature widths, invariance proof and numerical checks below complete the mathematical definition of DPA4C.

\subsection{Notation}
\label{si:notation}

Table~\ref{si:tab-notation} collects the symbols used in the definition of DPA4C.
Italic symbols denote scalars, bold lowercase symbols vectors, bold uppercase symbols matrices and higher-order tensors, and calligraphic capitals fixed index sets and fixed angular tensors.
Group names, operators and function names are set upright.
Italic subscripts are indices and upright subscripts are labels.

\begin{table}[htb]
    \PaperTableStyle
    \caption{Symbols used in the definition of DPA4C.}
    \label{si:tab-notation}
    \begin{threeparttable}
    \begin{tabular*}{\linewidth}{@{\extracolsep{\fill}}L{3.8cm}L{9.4cm}@{}}
        \toprule
        \textbf{Symbol} & \textbf{Meaning} \\
        \midrule
        \(i,\,j,\,k\) & center atom, one of its neighbors and an arbitrary atom \\
        \(a,\,b\) & element types of the center atom \(i\) and of the neighbor atom \(j\) \\
        \(N,\,N_{\mathrm{edge}},\,T\) & numbers of atoms, directed edges and element types \\
        \(E,\,E_i,\,E^{\mathrm{ref}}_a\) & total, atomic and per-element reference energy, in eV \\
        \(\bF_k,\,\bXi\) & force on atom \(k\), shape \((3,)\) in eV/\Ang{}, and virial, shape \((3,3)\) in eV \\
        \(\br_i,\,\br_{ij}=\br_j-\br_i\) & atomic position and edge displacement, shape \((3,)\), in \Ang \\
        \(\rcut,\,\varepsilon\) & cutoff radius and direction regularizer, in \Ang \\
        \(\rho_{ij},\,\bu_{ij}\) & regularized edge length in \Ang{} and regularized direction, shape \((3,)\) \\
        \(\chi_{ij},\,\chi_{ij,\ell}\) & cutoff envelope and degree-\(\ell\) edge weight, dimensionless \\
        \(N_{\mathrm{rbf}},\,\bff(\rho)\) & number of analytic radial basis functions and the basis, shape \((N_{\mathrm{rbf}},)\) \\
        \(\bh_{ij},\,H\) & hidden activations of the radial network, shape \((H,)\), and their width \\
        \(\bg(\rho),\,\bq(\rho)\) & shared learned radial map and mode profiles, shapes \((C_0,)\) and \((R,)\) \\
        \(\bgamma_{ij},\,\bbeta_{ij},\,\bU_{ij}\) & ordered type-pair scale, shift and mode mixing, functions of \((a_i,a_j)\) alone, shapes \((C_0,)\), \((C_0,)\), \((C_0,R)\) \\
        \(\bpsi_{ij}\) & edge amplitude, shape \((C_0,)\) \\
        \(L,\,\ell,\,m\) & maximum angular degree, a degree \(0\le\ell\le L\), and \(m=1,\dots,2\ell+1\) \\
        \(C_\ell,\,K_\ell,\,R\) & channels at degree \(\ell\), probe rank at degree \(\ell\ge1\), number of radial modes \\
        \(c,\,\kappa,\,\eta,\,\mu\) & channel index, degree-one and degree-two probe index, radial-mode index \\
        \(\bB_\ell(\bu)\) & real Cartesian harmonic block of degree \(\ell\), shape \((2\ell+1,)\) \\
        \(\bX^{(0)}_{i,\ell},\,\bX_{i,\ell},\,\widetilde\bX_{i,\ell}\) & initial, aggregated and channel-aligned node features at degree \(\ell\), shape \((2\ell+1,C_\ell)\); the initial degree-zero block is one trainable vector per element type and the initial higher degrees are zero \\
        \(\bZ_{i,\ell},\,\bA_\ell\) & probe-projected features and their projection, shapes \((2\ell+1,K_\ell)\) and \((C_\ell,K_\ell)\) \\
        \(S,\,\bm m_{ij}\) & width of the node-feature state and the complete edge message, shape \((S+2,)\) \\
        \(M_{i,\ell}\) & degree-\(\ell\) normalizing scale, dimensionless \\
        \(\bG_{i,\ell}\) & channel Gram matrix at degree \(\ell\), shape \((C_\ell,C_\ell)\) \\
        \(\mathcal{T}_L,\,\mathcal{C}^{(\ell_1\ell_2\ell_3)}\) & admissible degree triples and the coupling tensor of one triple \\
        \(\bJ^{(\ell_1\ell_2\ell_3)}_i\) & bispectrum tensor over three probe indices, shape \((K_{\ell_1},K_{\ell_2},K_{\ell_3})\); only entries independent under permutations of equal-degree probe indices are retained \\
        \(\bPi_i\) & projected quartic invariant, shape \((K_2,K_1)\) \\
        \(\widetilde\bD_i,\,\bD_i,\,D_{\mathrm{out}}\) & concatenated invariant blocks of atom \(i\) before calibration, the calibrated invariant feature vector, and their width \\
        \(\bm\mu,\,\bm\sigma\) & calibration shift and scale, shape \((D_{\mathrm{out}},)\) \\
        \(\bh^{(\tau)}_i,\,\Lambda\) & hidden activations of the multilayer perceptron (MLP) at layer \(\tau\), and the number of hidden layers \\
        \(\Delta\) & radial table spacing, in \Ang \\
        \bottomrule
    \end{tabular*}
        \begin{tablenotes}[flushleft]
            \PaperTableNotesStyle
            \item[] Shapes are given for non-scalar quantities.
            The scalar channel width \(C_0\), maximum angular degree \(L\) and number of shared radial modes \(R\) are the free structural parameters; all other widths follow from them.
        \end{tablenotes}
    \end{threeparttable}
\end{table}

\FloatBarrier
\subsection{Real solid harmonics of degrees three and four}
\label{si:harmonics}

Degrees zero through two are given in the main text.
Degrees three and four use the same normalization, fixed by the addition theorem, and the same ordering by increasing \(m\).
Writing \(\bu=(u_x,u_y,u_z)\) and \(s=\lVert\bu\rVert^{2}\),
\begin{equation}
    \bB_3(\bu)=
    \begin{pmatrix}
        \sqrt{5/8}\;u_y\,(3u_x^{2}-u_y^{2})       \\
        \sqrt{15}\;u_xu_yu_z                      \\
        \sqrt{3/8}\;u_y\,(5u_z^{2}-s)             \\
        \tfrac12\,u_z\,(5u_z^{2}-3s)              \\
        \sqrt{3/8}\;u_x\,(5u_z^{2}-s)             \\
        \tfrac12\sqrt{15}\;u_z\,(u_x^{2}-u_y^{2}) \\
        \sqrt{5/8}\;u_x\,(u_x^{2}-3u_y^{2})
    \end{pmatrix}\!,
    \qquad
    \bB_4(\bu)=
    \begin{pmatrix}
        \tfrac12\sqrt{35}\;u_xu_y\,(u_x^{2}-u_y^{2})    \\
        \tfrac14\sqrt{70}\;u_yu_z\,(3u_x^{2}-u_y^{2})   \\
        \tfrac12\sqrt{5}\;u_xu_y\,(7u_z^{2}-s)          \\
        \tfrac14\sqrt{10}\;u_yu_z\,(7u_z^{2}-3s)        \\
        \tfrac18\,(35u_z^{4}-30u_z^{2}s+3s^{2})         \\
        \tfrac14\sqrt{10}\;u_xu_z\,(7u_z^{2}-3s)        \\
        \tfrac14\sqrt{5}\,(u_x^{2}-u_y^{2})(7u_z^{2}-s) \\
        \tfrac14\sqrt{70}\;u_xu_z\,(u_x^{2}-3u_y^{2})   \\
        \tfrac18\sqrt{35}\,(u_x^{4}-6u_x^{2}u_y^{2}+u_y^{4})
    \end{pmatrix}\!.
\end{equation}
These blocks enter the feature vector only through squared norms and through contractions with a coupling tensor defined in the same basis.
An orthonormal change of basis therefore leaves the scalar contractions unchanged when the coupling tensor is transformed consistently.

\subsection{Derived structural widths}
\label{si:widths}

The scalar channel width \(C_0\) and the maximum angular degree \(L\) determine the degree channels \(C_\ell\), the probe ranks \(K_\ell\), the feature width \(S\) and the feature-vector width \(D_{\mathrm{out}}\) through the equations of the main text.
Table~\ref{si:tab-widths} evaluates them for every supported combination.
The number of radial modes \(R\) enters none of these widths.

\begin{table}[htb]
    \PaperTableStyle
    \caption{Structural widths for the supported \((C_0,L)\) combinations.}
    \label{si:tab-widths}
    \begin{threeparttable}
        \begin{tabular*}{\linewidth}{@{\extracolsep{\fill}}*{11}{c}@{}}
            \toprule
            &         &         &         &         & \multicolumn{3}{c}{\textbf{Feature width} \(S\)} & \multicolumn{3}{c}{\textbf{Feature-vector width} \(D_{\mathrm{out}}\)}                                \\
            \cmidrule(lr){6-8}\cmidrule(lr){9-11}
            \(C_0\) & \(C_1\) & \(C_2\) & \(K_1\) & \(K_2\) & \(L=2\)                                & \(L=3\)                                                   & \(L=4\) & \(L=2\) & \(L=3\) & \(L=4\) \\
            \midrule
            8       & 4       & 4       & 4       & 2       & 40                                     & 47                                                        & 56      & 70      & 81      & 93      \\
            16      & 4       & 4       & 4       & 2       & 48                                     & 55                                                        & 64      & 86      & 97      & 109     \\
            32      & 8       & 4       & 4       & 2       & 76                                     & 83                                                        & 92      & 144     & 155     & 167     \\
            64      & 8       & 4       & 4       & 2       & 108                                    & 115                                                       & 124     & 208     & 219     & 231     \\
            128     & 16      & 8       & 8       & 2       & 216                                    & 223                                                       & 232     & 522     & 541     & 557     \\
            \bottomrule
        \end{tabular*}
        \begin{tablenotes}[flushleft]
            \PaperTableNotesStyle
            \item[] \(C_\ell\) is the number of channels retained at degree \(\ell\), \(K_\ell\) is the probe rank used in the third- and fourth-order contractions, \(S\) is the accumulated node-feature width and \(D_{\mathrm{out}}\) is the invariant-feature-vector width consumed by the MLP.
        \end{tablenotes}
    \end{threeparttable}
\end{table}

\FloatBarrier
\subsection{Symmetry of the invariant feature vector} \label{si:symmetry-proof}

The symmetry statement below concerns the invariant feature vector defined by Eqs.~\eqref{eq:edge-geometry}--\eqref{eq:descriptor} in exact arithmetic.
It establishes the invariances required of a scalar interatomic potential, but makes no claim that the finite set of invariants separates every pair of distinct atomic environments.

\begin{proposition}[Translation, permutation and orthogonal invariance]
    \label{si:prop-invariance}
    Let \(\bD_i\) be the DPA4C invariant feature vector of atom \(i\), with radial amplitudes depending only on \(\rho_{ij}\) and the ordered type pair \((a_i,a_j)\), and with the third-order contractions restricted to the even-parity set \(\mathcal{T}_L\) of Eq.~\eqref{eq:triples}.
    For every translation \(\bm t\in\R^3\) and every \(\bR\in\Orth(3)\), applying
    \begin{equation}
        \br_k' = \bR\br_k+\bm t
        \label{si:eq-rigid-action}
    \end{equation}
    to all atoms, and to the periodic cell when present, leaves every matched feature vector unchanged: \(\bD_i'=\bD_i\).
    If \(\pi\) is a relabeling among atoms of the same element and \(\br'_{\pi(k)}=\br_k\), then \(\bD'_{\pi(i)}=\bD_i\).
    Thus the collection of feature vectors is equivariant to relabeling, each feature vector is invariant to the ordering of its neighbors, and the total energy of Eq.~\eqref{eq:energy-decomposition} is invariant under all three operations.
\end{proposition}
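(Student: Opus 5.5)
The plan follows the data flow from edges to \(\bD_i\). At each stage I record how every intermediate quantity transforms under the rigid map of Eq.~\eqref{si:eq-rigid-action}. For orthogonal invariance, the central tool is that each harmonic block carries a real orthogonal representation \(\bD^{(\ell)}(\bR)\) of \(\Orth(3)\), with \(\bB_\ell(\bR\bu)=\bD^{(\ell)}(\bR)\bB_\ell(\bu)\). The reflection part enters through the parity relation Eq.~\eqref{eq:parity}, since \(\bD^{(\ell)}(-\bI)=(-1)^\ell\bI\). Every later contraction is then either an inner product within one degree or a contraction with an intertwiner.

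\textbf{Edge level.} Translation first. Geometry enters only through \(\br_{ij}=\br_j-\br_i\), which is unchanged by \(\bm t\). The periodic cell is mapped along with the atoms, so the set of periodic image instances in each \(\Nb_i\) is preserved as well. Under \(\bR\), the displacement becomes \(\br_{ij}'=\bR\br_{ij}\). Because \(\bR\) is orthogonal, \(\rho_{ij}'=\rho_{ij}\), and therefore \(\chi_{ij}\), \(\chi_{ij,\ell}\) and \(\bff\), \(\bg\), \(\bq\) are all unchanged. Also \(\bu'_{ij}=\bR\bu_{ij}\). The regularizer depends only on the norm, so this holds exactly. The coefficients \(\bgamma,\bbeta,\bU\) depend only on \((a_i,a_j)\), so \(\bpsi_{ij}\) is invariant. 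The neighbor set is preserved because membership depends only on distance.

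\textbf{Aggregation level.} Equation~\eqref{eq:moments} then gives \(M_{i,\ell}'=M_{i,\ell}\) and \(\bX_{i,\ell}'=\bD^{(\ell)}(\bR)\bX_{i,\ell}\). The channel maps of Eqs.~\eqref{eq:alignment} and~\eqref{eq:probes} act on the right, so they commute with this left action; hence \(\widetilde\bX\) and \(\bZ\) transform the same way. The Gram matrices are invariant because \(\bD^{(\ell)}\) is orthogonal.

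\textbf{Bispectrum and quartic terms (main obstacle).} The bispectrum needs the Gaunt tensor to be an invariant tensor: \(\mathcal{I}^{(\ell_1\ell_2\ell_3)}\) contracted with \(\bD^{(\ell_1)}\otimes\bD^{(\ell_2)}\otimes\bD^{(\ell_3)}\) must return itself. I would prove this by substituting \(\bu\mapsto\bR\tp\bu\) in the integral of Eq.~\eqref{eq:gaunt}. The surface measure is \(\Orth(3)\)-invariant, including reflections, and the integrand transforms through the three representation matrices. Normalization and the fixed sign preserve invariance, so \(J'=J\). The even-parity restriction is worth flagging explicitly here. For odd \(\ell_1+\ell_2+\ell_3\) the Gaunt integral vanishes, because the integrand is odd under \(\bu\mapsto-\bu\); so \(\mathcal{T}_L\) is exactly the nonzero set, and no pseudoscalar survives. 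For the quartic term, \(\STF\) intertwines degree two with conjugation \(\bQ\mapsto\bR\bQ\bR\tp\), by Eq.~\eqref{eq:stf} and linearity, while \(\bv\mapsto\bR\bv\). Hence \(\lVert\bQ\bv\rVert^2\) is invariant. The scalar blocks, \(\bX^{(0)}_{i,0}\), the calibration and the MLP are then trivially invariant, so \(\bD_i'=\bD_i\).

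\textbf{Permutations and energy.} The neighbor sum in Eq.~\eqref{eq:moments}, and the normalizer sums, are sums over a set. They are therefore independent of storage order. Under a same-element relabeling \(\pi\), the species and positions attached to \(\pi(i)\) and its neighbors reproduce those of \(i\), so every edge term matches bijectively and \(\bD'_{\pi(i)}=\bD_i\). Finally, \(E=\sum_i E_i\) with \(E^{\mathrm{ref}}_{a_i}\) species-dependent, so the total energy is invariant under all three operations, with reindexing of the sum handling \(\pi\).
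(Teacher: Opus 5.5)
Your proof is correct and follows the paper's argument stage by stage. Translation cancels from \(\br_{ij}\), relabeling only reindexes the neighbor sums, the aggregated blocks transform by orthogonal representation matrices that commute with the channel maps, and the Gram, Gaunt and \(\STF\) contractions are then invariant. The one variation is in the bispectrum step: the paper uses measure invariance only for proper rotations and handles improper ones through the parity factor \((-1)^{\ell_1+\ell_2+\ell_3}=1\) on \(\mathcal{T}_L\), whereas you use the reflection invariance of \(\mathrm{d}\Omega\) to make the Gaunt tensor invariant under all of \(\Orth(3)\) at once, so the even-parity restriction merely excludes identically vanishing couplings (whose normalization would be undefined); your route is slightly sharper for Gaunt couplings, while the paper's parity route also covers coupling tensors that are only \(\SO(3)\)-invariant.
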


\begin{proof}
    A translation cancels from every displacement: \(\br'_{ij}=\br'_j-\br'_i=\br_j-\br_i\).
    Hence \(\rho_{ij}\), \(\bu_{ij}\), the cutoff weights, the ordered-pair amplitudes and every subsequent entry of the feature vector are unchanged.

    For a same-element relabeling, the neighbor set transforms as \(\Nb'_{\pi(i)}=\pi(\Nb_i)\), while \(\br'_{\pi(i)\pi(j)}=\br_{ij}\) and \((a'_{\pi(i)},a'_{\pi(j)})=(a_i,a_j)\).
    Each term in Eq.~\eqref{eq:moments} is therefore carried to an identical term, and the sum is merely reindexed.
    The normalizing scales are reindexed sums of the same weights.
    All readout operations are local to one center, so \(\bD'_{\pi(i)}=\bD_i\).

    It remains to consider \(\bR\in\Orth(3)\).
    Orthogonality gives \(\rho'_{ij}=\rho_{ij}\) and \(\bu'_{ij}=\bR\bu_{ij}\), so every radial, cutoff and type-dependent factor is a scalar invariant.
    For each degree \(\ell\), the real solid harmonics carry an orthogonal representation \(\mathsf{R}_\ell(\bR)\):
    \begin{equation}
        \bB_\ell(\bR\bu)
        =\mathsf{R}_\ell(\bR)\bB_\ell(\bu),
        \qquad
        \mathsf{R}_\ell(\bR)\tp
        \mathsf{R}_\ell(\bR)=\bI.
        \label{si:eq-harmonic-action}
    \end{equation}
    The scalar normalizers are unchanged, and Eq.~\eqref{eq:moments} consequently gives
    \begin{equation}
        \bX_{i,\ell}'
        =\mathsf{R}_\ell(\bR)\bX_{i,\ell}.
        \label{si:eq-moment-action}
    \end{equation}
    Channel alignment and the degree-wise low-rank projections multiply on the channel axis and commute with this action.

    Equation~\eqref{si:eq-harmonic-action} immediately leaves each Gram block invariant:
    \begin{equation*}
        \bG_{i,\ell}'
        =(\widetilde\bX_{i,\ell})\tp
        \mathsf{R}_\ell(\bR)\tp\mathsf{R}_\ell(\bR)
        \widetilde\bX_{i,\ell}
        =\bG_{i,\ell}.
    \end{equation*}
    For a proper rotation, invariance of the surface measure in Eq.~\eqref{eq:gaunt} makes the Gaunt tensor an invariant trilinear form, so every bispectrum entry is unchanged.
    Any improper orthogonal transformation can be written as inversion followed by a proper rotation.
    Under inversion, a triple acquires the factor \((-1)^{\ell_1+\ell_2+\ell_3}\), which equals one for every triple in \(\mathcal{T}_L\).
    The third-order blocks are therefore invariant under the full orthogonal group.

    Finally, the STF map intertwines the degree-two action with conjugation, so the probes in Eq.~\eqref{eq:quartic} transform as \(\bv_\kappa' = \bR\bv_\kappa\) and \(\bQ_\eta'=\bR\bQ_\eta\bR\tp\).
    Thus \(\lVert\bQ_\eta'\bv_\kappa'\rVert^2 =\lVert\bR\bQ_\eta\bv_\kappa\rVert^2 =\lVert\bQ_\eta\bv_\kappa\rVert^2\).
    The degree-zero block, the two normalizing scales and the initial feature of the center are invariant scalars, and fixed componentwise calibration preserves their invariance.
    Every block of Eq.~\eqref{eq:descriptor} is therefore unchanged.
\end{proof}

\subsection{Numerical verification of the invariant feature vector}
\label{si:verification}

The identities and invariances asserted in the main text were checked in double precision on randomly generated configurations.
Table~\ref{si:tab-verification} reports the largest deviation observed for each.
The symmetry tests used a periodic cell of 24 atoms of two elements, except for the inversion test, which used an isolated cluster of 12 atoms so that the periodic images do not confound the transformation.
The agreement between the tabulated and the continuous feature vector was measured in single precision at a table spacing of \(0.002~\Ang\), as a maximum deviation relative to the largest feature-vector entry.

\begin{table}[htb]
    \PaperTableStyle
    \caption{Numerical verification of DPA4C identities and invariances.}
    \label{si:tab-verification}
    \begin{threeparttable}
        \begin{tabular*}{\linewidth}{@{\extracolsep{\fill}}lcc@{}}
            \toprule
            \textbf{Property}                           & \textbf{Test}            & \textbf{Deviation}            \\
            \midrule
            Addition theorem, degrees zero to four      & random vector pairs      & \(1\times10^{-12}\)           \\
            Parity of the harmonic blocks               & random directions        & \(0\)                         \\
            Isometry of the symmetric trace-free map    & random packed vectors    & \(2\times10^{-15}\)           \\
            Quadrupolar form of the degree-two block    & random directions        & \(2\times10^{-14}\)           \\
            Closed forms against the Gaunt contraction  & random probe blocks      & \(3\times10^{-13}\)           \\
            Single aggregation against per-degree sums  & 24-atom periodic cell    & \(9\times10^{-16}\)           \\
            Invariance under proper rotation            & 24-atom periodic cell    & \(3\times10^{-15}\)           \\
            Invariance under translation                & 24-atom periodic cell    & \(3\times10^{-15}\)           \\
            Invariance under permutation of atom labels & 24-atom periodic cell    & \(9\times10^{-16}\)           \\
            Invariance under inversion                  & 12-atom isolated cluster & \(0\)                         \\
            Tabulated against continuous feature vector & \((C_0,L,R)=(16,2,0)\)   & \(1.8\times10^{-7}\) relative \\
            & \((C_0,L,R)=(32,2,4)\)   & \(1.3\times10^{-7}\) relative \\
            & \((C_0,L,R)=(32,3,2)\)   & \(1.6\times10^{-7}\) relative \\
            \bottomrule
        \end{tabular*}
        \begin{tablenotes}[flushleft]
            \PaperTableNotesStyle
            \item[] Values are the largest observed deviations; absolute deviations are reported unless marked as relative.
        \end{tablenotes}
    \end{threeparttable}
\end{table}
\FloatBarrier
\section{Compressed execution and radial tabulation}
\label{si:compressed-execution}

DPA4C compression combines fixed-size node tiling with a tabulated radial representation.
The execution schedule and memory scaling are defined together with the interpolation coefficients.

\subsection{Execution algorithm and memory scaling}
\label{si:compressed-algorithm}

The compressed path combines a tabulated approximation to the learned radial functions with a deployment-specific execution schedule and bounded-lifetime intermediate arrays.
Let \(B\) be the node-tile size, \(F_\tau\) the hidden width of MLP layer \(\tau\), \(F_{\max}=\max_\tau F_\tau\), and \(q=1\) for a one-layer MLP and \(q=2\) otherwise.
The deployed default is \(B=131{,}072\).

\paragraph{Execution order.}
The inference path consists of the following stages.

\begin{enumerate}[label=\arabic*.,leftmargin=2.2em]
    \item \emph{Offline radial tabulation and finite caches.}
          For every knot on \([0,\rcut]\), evaluate \([\bg(\rho),\bq(\rho)]\) and its first two derivatives, then store the six coefficients of the quintic Hermite interpolant on each interval.
          Evaluate and cache \(\bgamma\), \(\bbeta\) and \(\bU\) for all ordered type pairs, together with the initial-feature table, readout matrices, coupling tensors and calibration vectors.
          These objects depend on the trained model and the finite type set, not on \(N\) or \(N_{\mathrm{edge}}\).

    \item \emph{Canonical graph construction.}
          Compact the physical edges into destination-major order and store their source indices, Cartesian edge vectors and destination CSR row pointer.
          Construct a source CSR row pointer and a permutation from source order to the destination-major edge array.
          A contiguous range of destination atoms then owns one contiguous span of the edge stream.
          In the LAMMPS/Kokkos path, one warp processes the neighbor candidates of one center atom and uses an in-warp prefix count to place surviving neighbors in consecutive edge slots while retaining their candidate order.

    \item \emph{Forward evaluation of one node tile.}
          For a tile of at most \(B\) consecutive destination atoms, scan each destination CSR interval once.
          Evaluate the radial interpolant, ordered-pair modulation, envelope and Cartesian harmonics in registers, and accumulate the \(S+2\) feature state.
          Normalize the features and evaluate the polynomial invariants to form tile-local feature vectors of shape \((B,D_{\mathrm{out}})\).
          The only global-memory writes of this stage are the retained per-node state and the feature vectors; every per-edge quantity and every intermediate of the invariant evaluation stays in registers or shared memory.

    \item \emph{MLP forward pass and vector--Jacobian products.}
          Evaluate the MLP hidden layers and the scalar head for the tile, alternating hidden activations between \(q\) scratch slots and saving the layer pre-activations.
          Back-propagate the scalar seed through the MLP and overwrite the feature-vector workspace with \(\partial E/\partial\bD_i\).
          Apply the polynomial-invariant vector--Jacobian product; the resulting feature gradient overwrites the saved \(S+2\) state.

    \item \emph{Edge recomputation.}
          Revisit the same destination-sorted edge span and recompute the radial interpolation, type modulation, envelope and harmonics.
          Contract these quantities with the feature gradient to write the Cartesian edge derivative \(\partial E/\partial\br_{ij}\).
          The tile workspace is then reused for the next destination range.

    \item \emph{Energy, force and virial assembly.}
          Obtain the contiguous node interval of each frame from the prefix sum of the per-frame node counts.
          CUDA blocks first reduce disjoint slices of each interval into double-precision partial sums; a second fixed-order reduction combines the partials into the frame energy.
          This procedure requires neither a node-length frame-index array nor floating-point atomic additions to the frame accumulator.
          Traverse the destination and source CSR views to combine the two force contributions in Eq.~\eqref{eq:forces}; use the same edge derivatives and edge vectors to form the virial.
          Each node is written by one CSR reduction, so force and virial assembly requires no floating-point atomic accumulation.
\end{enumerate}

\paragraph{Memory decomposition.}
For one tile, the feature vector, the node-feature state, the saved MLP pre-activations and the MLP scratch require
\begin{equation}
    \mathcal{M}_{\mathrm{width}}
    =O\!\left(
    B\left[
        D_{\mathrm{out}}+(S+2)
        +\sum_{\tau=1}^{\Lambda}
        F_\tau +qF_{\max} \right] \right).
    \label{si:eq-tile-memory}
\end{equation}
The bracket contains every workspace term whose size changes with the feature or MLP width.
Because \(B\) is fixed independently of the simulated system, none of these terms scales with the full atom count.

The full-system state consists of source indices, source order, edge vectors and edge derivatives on the directed-edge axis; destination and source row pointers on the node axis; and atom types, atomic energies, forces and optional atomic virials on the node axis.
Its storage is therefore
\begin{equation}
    \mathcal{M}_{\mathrm{system}}
    =O(N_{\mathrm{edge}})+O(N)
    =O(N_{\mathrm{edge}}+N),
    \label{si:eq-system-memory}
\end{equation}
with constants set by graph and physical-output fields rather than by \(C_0\), \(L\), \(R\) or the MLP widths.
The model parameters and offline tables add a system-size-independent term.
Equations~\eqref{si:eq-tile-memory} and~\eqref{si:eq-system-memory} explain why increasing the size of a DPA4C variant changes arithmetic cost and tile-local storage without changing the dominant scaling of the largest simulation.

\subsection{Quintic Hermite interpolation of the radial table}
\label{si:quintic}

On the interval \([\rho_s,\rho_{s+1}]\) of width \(\Delta\), let \(y_s\), \(y'_s\) and \(y''_s\) denote the value and the first two derivatives of a tabulated channel at the left knot, and \(y_{s+1}\), \(y'_{s+1}\), \(y''_{s+1}\) the same quantities at the right knot.
The unique quintic reproducing all six is
\begin{equation}
    y(\rho)=\sum_{n=0}^{5}c_n\,(\rho-\rho_s)^{n},
    \qquad
    c_0=y_s,
    \qquad
    c_1=y'_s,
    \qquad
    c_2=\tfrac12 y''_s,
\end{equation}
with the remaining coefficients, writing \(\delta=y_{s+1}-y_s\),
\begin{equation}
    \begin{aligned}
        c_3 & =\frac{20\,\delta-\bigl(8y'_{s+1}+12y'_s\bigr)\Delta-\bigl(3y''_s-y''_{s+1}\bigr)\Delta^{2}}{2\Delta^{3}},    \\
        c_4 & =\frac{-30\,\delta+\bigl(14y'_{s+1}+16y'_s\bigr)\Delta+\bigl(3y''_s-2y''_{s+1}\bigr)\Delta^{2}}{2\Delta^{4}}, \\
        c_5 & =\frac{12\,\delta-6\bigl(y'_{s+1}+y'_s\bigr)\Delta+\bigl(y''_{s+1}-y''_s\bigr)\Delta^{2}}{2\Delta^{5}}.
    \end{aligned}
\end{equation}

\subsection{Radial-table spacing and numerical fidelity}
\label{si:compression-fidelity}

We compared seven radial-table spacings from 0.0005 to 0.05~\Ang{} for each of the five OMat24 models.
Each compressed model was paired with an uncompressed model exported from the same checkpoint, and all pairs were evaluated over the complete OMat24 validation split of 1,074,643 structures and 20,091,142 atoms.
Supplementary Table~\ref{si:tab-compression-fidelity} reports the prediction-discrepancy MAE and RMSE at four representative spacings, the finest, the 0.002~\Ang{} deployment default, 0.01~\Ang{} and the coarsest.
The spacings 0.001, 0.003 and 0.005~\Ang{} give discrepancies in the same range as the finest and are omitted.
The two models differ not only in the radial interpolation but also in graph layout, kernel fusion and floating-point summation order.
Their prediction differences therefore contain both interpolation error and rounding differences introduced by the distinct execution orders.
Between 0.0005 and 0.01~\Ang{}, the nearly constant, non-monotonic discrepancies indicate that the latter contribution is comparable to or larger than the remaining interpolation error, so a denser table does not necessarily lower the aggregate MAE or RMSE.
At 0.05~\Ang{}, the increase in force and stress RMSE for several variants marks the onset of a resolvable interpolation contribution, whereas the energy differences remain within the fine-spacing range.
Across all seven spacings, the largest absolute relative change in an MAE against the reference labels was \(5.23\times10^{-5}\%\).
Repeated complete-split evaluations at 0.002~\Ang{} reproduced the prediction-discrepancy MAEs to within 0.2\%.

\begin{table}[!htbp]
    \PaperTableStyle
    \caption{Compressed-minus-uncompressed prediction discrepancies at four radial-table spacings from 0.0005 to 0.05~\Ang{}.}
    \label{si:tab-compression-fidelity}
    \begin{threeparttable}
        \begin{tabular*}{\linewidth}{@{\extracolsep{\fill}}lcccccc@{}}
            \toprule
            \textbf{Model} &
            \multicolumn{2}{c}{\makecell{\textbf{Energy per atom}\\\textbf{(meV/atom)}}} &
            \multicolumn{2}{c}{\makecell{\textbf{Force}\\\textbf{(meV/\Ang)}}} &
            \multicolumn{2}{c}{\makecell{\textbf{Stress}\\\textbf{(meV/\Ang\textsuperscript{3})}}} \\
            \cmidrule(lr){2-3}
            \cmidrule(lr){4-5}
            \cmidrule(lr){6-7}
            & \textbf{MAE} & \textbf{RMSE} & \textbf{MAE} & \textbf{RMSE} & \textbf{MAE} & \textbf{RMSE} \\
            \midrule
\multicolumn{7}{@{}l}{\textit{Spacing 0.0005~\Ang{} (12,000 radial intervals)}} \\
            \addlinespace[2pt]
            Nano  & \(3.50\times10^{-4}\) & \(4.58\times10^{-4}\) & \(1.18\times10^{-3}\) & \(2.64\times10^{-3}\) & \(3.14\times10^{-5}\) & \(6.00\times10^{-5}\) \\
            Mini  & \(2.20\times10^{-4}\) & \(2.97\times10^{-4}\) & \(1.09\times10^{-3}\) & \(1.80\times10^{-3}\) & \(2.92\times10^{-5}\) & \(5.59\times10^{-5}\) \\
            Neo   & \(2.18\times10^{-4}\) & \(2.88\times10^{-4}\) & \(1.07\times10^{-3}\) & \(1.76\times10^{-3}\) & \(2.79\times10^{-5}\) & \(5.35\times10^{-5}\) \\
            Air   & \(1.99\times10^{-4}\) & \(2.62\times10^{-4}\) & \(9.30\times10^{-4}\) & \(1.57\times10^{-3}\) & \(2.54\times10^{-5}\) & \(4.53\times10^{-5}\) \\
            Plus  & \(1.98\times10^{-4}\) & \(2.61\times10^{-4}\) & \(1.01\times10^{-3}\) & \(1.67\times10^{-3}\) & \(2.75\times10^{-5}\) & \(5.00\times10^{-5}\) \\
            \midrule
\multicolumn{7}{@{}l}{\textit{Spacing 0.002~\Ang{} (3,000 radial intervals)}} \\
            \addlinespace[2pt]
            Nano  & \(3.52\times10^{-4}\) & \(4.60\times10^{-4}\) & \(1.18\times10^{-3}\) & \(3.05\times10^{-3}\) & \(3.13\times10^{-5}\) & \(6.04\times10^{-5}\) \\
            Mini  & \(2.21\times10^{-4}\) & \(2.97\times10^{-4}\) & \(1.09\times10^{-3}\) & \(1.80\times10^{-3}\) & \(2.92\times10^{-5}\) & \(5.59\times10^{-5}\) \\
            Neo   & \(2.18\times10^{-4}\) & \(2.89\times10^{-4}\) & \(1.07\times10^{-3}\) & \(1.76\times10^{-3}\) & \(2.80\times10^{-5}\) & \(5.35\times10^{-5}\) \\
            Air   & \(1.99\times10^{-4}\) & \(2.62\times10^{-4}\) & \(9.31\times10^{-4}\) & \(1.57\times10^{-3}\) & \(2.55\times10^{-5}\) & \(4.53\times10^{-5}\) \\
            Plus  & \(1.98\times10^{-4}\) & \(2.60\times10^{-4}\) & \(1.01\times10^{-3}\) & \(1.66\times10^{-3}\) & \(2.74\times10^{-5}\) & \(4.99\times10^{-5}\) \\
            \midrule
\multicolumn{7}{@{}l}{\textit{Spacing 0.01~\Ang{} (600 radial intervals)}} \\
            \addlinespace[2pt]
            Nano  & \(2.97\times10^{-4}\) & \(4.08\times10^{-4}\) & \(1.17\times10^{-3}\) & \(2.89\times10^{-3}\) & \(2.64\times10^{-5}\) & \(5.40\times10^{-5}\) \\
            Mini  & \(1.94\times10^{-4}\) & \(2.68\times10^{-4}\) & \(1.07\times10^{-3}\) & \(1.78\times10^{-3}\) & \(2.27\times10^{-5}\) & \(4.29\times10^{-5}\) \\
            Neo   & \(1.83\times10^{-4}\) & \(2.51\times10^{-4}\) & \(1.06\times10^{-3}\) & \(1.76\times10^{-3}\) & \(2.26\times10^{-5}\) & \(4.14\times10^{-5}\) \\
            Air   & \(1.77\times10^{-4}\) & \(2.37\times10^{-4}\) & \(9.14\times10^{-4}\) & \(1.54\times10^{-3}\) & \(1.94\times10^{-5}\) & \(3.53\times10^{-5}\) \\
            Plus  & \(1.65\times10^{-4}\) & \(2.24\times10^{-4}\) & \(9.82\times10^{-4}\) & \(1.63\times10^{-3}\) & \(2.24\times10^{-5}\) & \(4.00\times10^{-5}\) \\
            \midrule
\multicolumn{7}{@{}l}{\textit{Spacing 0.05~\Ang{} (120 radial intervals)}} \\
            \addlinespace[2pt]
            Nano  & \(2.65\times10^{-4}\) & \(4.33\times10^{-4}\) & \(1.19\times10^{-3}\) & \(6.71\times10^{-2}\) & \(2.29\times10^{-5}\) & \(1.30\times10^{-4}\) \\
            Mini  & \(1.88\times10^{-4}\) & \(2.60\times10^{-4}\) & \(1.22\times10^{-3}\) & \(2.01\times10^{-3}\) & \(2.88\times10^{-5}\) & \(5.60\times10^{-5}\) \\
            Neo   & \(2.19\times10^{-4}\) & \(3.50\times10^{-4}\) & \(1.46\times10^{-2}\) & \(3.20\times10^{-2}\) & \(5.33\times10^{-4}\) & \(1.33\times10^{-3}\) \\
            Air   & \(1.62\times10^{-4}\) & \(2.21\times10^{-4}\) & \(4.09\times10^{-3}\) & \(8.28\times10^{-3}\) & \(1.35\times10^{-4}\) & \(3.06\times10^{-4}\) \\
            Plus  & \(1.56\times10^{-4}\) & \(2.14\times10^{-4}\) & \(3.83\times10^{-3}\) & \(8.48\times10^{-3}\) & \(1.27\times10^{-4}\) & \(3.00\times10^{-4}\) \\
            \bottomrule
        \end{tabular*}
        \begin{tablenotes}[flushleft]
            \PaperTableNotesStyle
            \item[] Values are computed from compressed minus uncompressed predictions obtained from models exported from the same checkpoint.
            \item[] Energy errors are normalized by the number of atoms in each structure; force and stress errors are componentwise.
            \item[] MAE and RMSE are computed from the same primary evaluation. Both models reproduce the DPA4C MAEs against the reference labels in Table~\ref{tab:omat24-accuracy} at the reported precision.
        \end{tablenotes}
    \end{threeparttable}
\end{table}
\FloatBarrier

\section{Model, training and evaluation configurations}
\label{si:model-and-training}

\subsection{DPA4C model and training configurations}

The five DPA4C architectures differ in the three structural parameters \(C_0\), \(L\) and \(R\), and in the width of the atomic-energy MLP.
Table~\ref{si:tab-model-variants} summarizes these architecture choices and the optimizer settings shared across the OMat24, MatPES and OMol25 benchmarks.
All variants use the same 118-element type map.
OMol25 additionally supplies total charge and spin multiplicity through trainable embeddings, and its corresponding parameter counts are given in parentheses.
For OMat24, all five variants were trained on the published training set and evaluated on its validation set.
The MatPES benchmark uses the R2SCAN-2025.2 release~\cite{kaplan2025matpes}, with training and test splits containing 347,889 and 19,328 structures, respectively.
The OMol25 benchmark uses the 101.7-million-structure training split of the OMol-0 release and its out-of-distribution composition validation split~\cite{levine2025open}.
Tables~\ref{si:tab-omat24-training}, \ref{si:tab-matpes-training} and~\ref{si:tab-omol25-training} report the dataset-specific training configurations.
Parameters not listed there follow the shared settings in Table~\ref{si:tab-model-variants}.

\begin{table}[!htbp]
    \PaperTableStyle
    \caption{DPA4C model and shared training configurations.}
    \label{si:tab-model-variants}
    \begin{threeparttable}
        \begin{tabular*}{\linewidth}{@{\extracolsep{\fill}}l *{5}{c} @{}}
            \toprule
            \textbf{Hyperparameter} & \textbf{DPA4C-Nano} & \textbf{DPA4C-Mini} & \textbf{DPA4C-Neo} & \textbf{DPA4C-Air} & \textbf{DPA4C-Plus} \\
            \midrule
            Scalar channels \(C_0\)               & 8          & 32         & 64         & 64         & 128         \\
            Maximum degree \(L\)                  & 2          & 2          & 2          & 3          & 3           \\
            Shared radial modes \(R\)             & 0          & 0          & 0          & 4          & 4           \\
            Feature-state width \(S\)\tnote{a}    & 40         & 76         & 108        & 115        & 223         \\
            Feature-vector width \(D_{\mathrm{out}}\)\tnote{a} & 70         & 144        & 208        & 219        & 541         \\
            MLP hidden dim.                    & 96         & 192        & 256        & 256        & 384         \\
            MLP hidden layers                  & 3          & 3          & 3          & 3          & 3           \\
            \midrule
            Radial basis                           & Bessel     & Bessel     & Bessel     & Bessel     & Bessel      \\
            No. radial bases                       & 16         & 16         & 16         & 16         & 16          \\
            Activation func.                       & SiLU       & SiLU       & SiLU       & SiLU       & SiLU        \\
            Parameter precision                    & Float32    & Float32    & Float32    & Float32    & Float32     \\
            MLP residual time step             & False      & False      & False      & False      & False       \\
            \midrule
            Compile                                & True       & True       & True       & True       & True        \\
            bf16 AMP                               & False      & False      & False      & False      & False       \\
            \midrule
            Optimizer                              & HybridMuon & HybridMuon & HybridMuon & HybridMuon & HybridMuon  \\
            Muon mode                              & Slice      & Slice      & Slice      & Slice      & Slice       \\
            Magma Lite                             & True       & True       & True       & True       & True        \\
            Weight decay                           & $1\times10^{-3}$ & $1\times10^{-3}$ & $1\times10^{-3}$ & $1\times10^{-3}$ & $1\times10^{-3}$ \\
            \midrule
            Cutoff (\Ang)                          & 6          & 6          & 6          & 6          & 6           \\
            No. atom types                         & 118        & 118        & 118        & 118        & 118         \\
            Trainable parameters\tnote{b}          & \makecell{29,809\\(35,473)} & \makecell{145,513\\(200,169)} & \makecell{342,089\\(538,697)} & \makecell{433,673\\(630,281)} & \makecell{1,456,961\\(2,188,865)} \\
            \bottomrule
        \end{tabular*}
        \begin{tablenotes}[flushleft]
            \PaperTableNotesStyle
            \item[a] The feature-state width \(S\) and invariant-feature-vector width \(D_{\mathrm{out}}\) follow from \(C_0\) and \(L\), as specified in the main text and evaluated in Supplementary Table~\ref{si:tab-widths}.
            \item[b] Counts include the feature path, MLP and 118-element type-map parameters; parenthesized values include the trainable charge and spin embeddings used for OMol25.
        \end{tablenotes}
    \end{threeparttable}
\end{table}

\begin{table}[!htbp]
    \PaperTableStyle
    \caption{DPA4C training hyperparameters on OMat24.}
    \label{si:tab-omat24-training}
    \begin{threeparttable}
        \begin{tabular*}{\linewidth}{@{\extracolsep{\fill}}l *{5}{c} @{}}
            \toprule
            \textbf{Hyperparameter}       & \textbf{DPA4C-Nano}     & \textbf{DPA4C-Mini}     & \textbf{DPA4C-Neo}      & \textbf{DPA4C-Air}      & \textbf{DPA4C-Plus}     \\
            \midrule
            LR scheduler                  & Cosine                 & Cosine                 & Cosine                 & Cosine                 & Cosine                 \\
            Max.
            LR & $5\times10^{-3}$ & $4\times10^{-3}$ & $3\times10^{-3}$ & $3\times10^{-3}$ & $2\times10^{-3}$ \\ Min.
            LR & $1\times10^{-6}$ & $1\times10^{-6}$ & $1\times10^{-6}$ & $1\times10^{-6}$ & $1\times10^{-6}$ \\ Warmup ratio & 0.003 & 0.003 & 0.003 & 0.003 & 0.003 \\ Warmup start factor & 0.2 & 0.2 & 0.2 & 0.2 & 0.2 \\ Batch size (per GPU)\tnote{a} & $\lceil 50000/N\rceil$ & $\lceil 40000/N\rceil$ & $\lceil 30000/N\rceil$ & $\lceil 25000/N\rceil$ & $\lceil 15000/N\rceil$ \\ Training epochs & 12 & 12 & 12 & 12 & 12 \\ No.
            GPUs & 1 & 1 & 1 & 1 & 1 \\ \midrule Loss & MAE & MAE & MAE & MAE & MAE \\ Loss weights $(E,F,V)$ & 20, 20, 5 & 20, 20, 5 & 20, 20, 5 & 20, 20, 5 & 20, 20, 5 \\ Gradient max.
            norm            & 5                      & 5                      & 5                      & 5                      & 5                      \\
            \bottomrule
        \end{tabular*}
        \begin{tablenotes}[flushleft]
            \PaperTableNotesStyle
            \item[a] \(N\) denotes the number of atoms in each system; \(\lceil\cdot\rceil\) rounds up to the nearest integer.
        \end{tablenotes}
    \end{threeparttable}
\end{table}

\begin{table}[!htbp]
    \PaperTableStyle
    \caption{DPA4C training hyperparameters on MatPES.}
    \label{si:tab-matpes-training}
    \begin{threeparttable}
        \begin{tabular*}{\linewidth}{@{\extracolsep{\fill}}l *{5}{c} @{}}
            \toprule
            \textbf{Hyperparameter} & \textbf{DPA4C-Nano} & \textbf{DPA4C-Mini} & \textbf{DPA4C-Neo} & \textbf{DPA4C-Air} & \textbf{DPA4C-Plus} \\
            \midrule
            LR scheduler                                & WSD & WSD & WSD & WSD & WSD \\
            Max.
            LR & $5\times10^{-3}$ & $3\times10^{-3}$ & $2\times10^{-3}$ & $2\times10^{-3}$ & $1.5\times10^{-3}$ \\ Min.
            LR & $1\times10^{-6}$ & $1\times10^{-6}$ & $1\times10^{-6}$ & $1\times10^{-6}$ & $1\times10^{-6}$ \\ Warmup ratio & 0.003 & 0.003 & 0.003 & 0.003 & 0.003 \\ Warmup start factor & 0.2 & 0.2 & 0.2 & 0.2 & 0.2 \\ Decay ratio & 0.65 & 0.65 & 0.65 & 0.65 & 0.65 \\ Decay type & Cosine & Cosine & Cosine & Cosine & Cosine \\ Batch size (per GPU)\tnote{a} & $\lceil 10000/N\rceil$ & $\lceil 10000/N\rceil$ & $\lceil 10000/N\rceil$ & $\lceil 10000/N\rceil$ & $\lceil 10000/N\rceil$ \\ Training epochs & 500 & 500 & 500 & 500 & 500 \\ No.
            GPUs & 1 & 1 & 1 & 1 & 1 \\ \midrule Loss & MAE & MAE & MAE & MAE & MAE \\ Loss weights $(E,F,V)$ & 20, 20, 5 & 20, 20, 5 & 20, 20, 5 & 20, 20, 5 & 20, 20, 5 \\ Gradient max.
            norm                          & 5 & 5 & 5 & 5 & 5 \\
            \bottomrule
        \end{tabular*}
        \begin{tablenotes}[flushleft]
            \PaperTableNotesStyle
            \item[a] \(N\) denotes the number of atoms in each system; \(\lceil\cdot\rceil\) rounds up to the nearest integer.
        \end{tablenotes}
    \end{threeparttable}
\end{table}

\begin{table}[!htbp]
    \PaperTableStyle
    \caption{DPA4C training hyperparameters on OMol25.}
    \label{si:tab-omol25-training}
    \begin{threeparttable}
        \begin{tabular*}{\linewidth}{@{\extracolsep{\fill}}l *{5}{c} @{}}
            \toprule
            \textbf{Hyperparameter} & \textbf{DPA4C-Nano} & \textbf{DPA4C-Mini} & \textbf{DPA4C-Neo} & \textbf{DPA4C-Air} & \textbf{DPA4C-Plus} \\
            \midrule
            LR scheduler                  & WSD                    & WSD                    & WSD                    & WSD                      & WSD                    \\
            Max.
            LR & $5\times10^{-3}$ & $4\times10^{-3}$ & $3\times10^{-3}$ & $2.5\times10^{-3}$ & $1\times10^{-3}$ \\ Min.
            LR & $1\times10^{-6}$ & $1\times10^{-6}$ & $1\times10^{-6}$ & $1\times10^{-6}$ & $1\times10^{-6}$ \\ Warmup ratio & 0.003 & 0.003 & 0.003 & 0.003 & 0.003 \\ Warmup start factor & 0.2 & 0.2 & 0.2 & 0.2 & 0.2 \\ Decay ratio & 0.65 & 0.65 & 0.65 & 0.65 & 0.65 \\ Decay type & Cosine & Cosine & Cosine & Cosine & Cosine \\ Batch size (per GPU)\tnote{a} & $\lceil 100000/N\rceil$ & $\lceil 100000/N\rceil$ & $\lceil 100000/N\rceil$ & $\lceil 50000/N\rceil$ & $\lceil 18000/N\rceil$ \\ Training epochs & 25 & 25 & 25 & 25 & 25 \\ No.
            GPUs & 1 & 1 & 1 & 2 & 4 \\ \midrule Loss & MAE & MAE & MAE & MAE & MAE \\ Loss weights $(E,F,V)$ & 10, 5, 0 & 10, 5, 0 & 10, 5, 0 & 10, 5, 0 & 10, 5, 0 \\ Gradient max.
            norm            & 5                      & 5                      & 5                      & 5                        & 5                      \\
            \bottomrule
        \end{tabular*}
        \begin{tablenotes}[flushleft]
            \PaperTableNotesStyle
            \item[a] \(N\) denotes the number of atoms in each system; \(\lceil\cdot\rceil\) rounds up to the nearest integer.
        \end{tablenotes}
    \end{threeparttable}
\end{table}

\FloatBarrier

\subsection{Independent OMat24 evaluation of NEP89}
\label{si:nep89-omat24-evaluation}

The released NEP89 model \texttt{nep89\_20250409.txt}~\cite{liang2026nep89} was evaluated on the OMat24 validation set without subsampling.
Inference used the standalone \texttt{nep} executable from GPUMD \texttt{v5.5-59-g2f9d6e14}, built from source revision \texttt{2f9d6e14f3d78c87\allowbreak{}ddc92c41dfbbe5a8\allowbreak{}f1f0524f}.
The released model configuration contains 976,331 trainable parameters, reported as 0.976M in Table~\ref{tab:omat24-accuracy}.

NEP89 incorporates D3(BJ) interactions and was trained with D3 contributions added to the OMat24 subset~\cite{liang2026nep89}.
To reproduce this convention, PBE-D3(BJ) energy, force and virial contributions were evaluated for every validation structure using \texttt{dftd3} 1.3.1 through ASE 3.28.0~\cite{ase-paper,grimme2010consistent,grimme2011effect}.
The calculator configuration was \texttt{DFTD3(method="PBE", damping="d3bj")}.
The resulting contributions were added to the original OMat24 reference labels.
No separate dispersion correction was applied to the NEP89 predictions, and no post-hoc element-wise reference-energy shift was fitted or applied.

The per-atom energy MAE follows Eq.~\eqref{eq:energy-mae}.
The total-energy residual of each structure was divided by its atom count before taking the absolute value and averaging over structures.
Force errors were averaged over all atoms and Cartesian components.
Stress tensors were constructed as \(\bm\sigma=-\bXi/V\), and their errors were averaged over the nine components of the flattened \(3\times3\) tensors.
The reported units are meV/atom, meV/\Ang{} and meV/\Ang\(^{3}\), respectively.
Software versions and their roles in this evaluation are summarized in Supplementary Table~\ref{si:tab-nep89-software}.

\begin{table}[!htbp]
    \PaperTableStyle
    \caption{Software used for the independent OMat24 evaluation of NEP89.}
    \label{si:tab-nep89-software}
    \begin{tabular*}{\linewidth}{@{\extracolsep{\fill}} l c l @{}}
        \toprule
        \textbf{Software} & \textbf{Version or revision} & \textbf{Purpose} \\
        \midrule
        GPUMD             & \texttt{v5.5-59-g2f9d6e14}   & NEP89 inference \\
        ASE               & 3.28.0                        & Structures and calculator interface \\
        \texttt{dftd3}    & 1.3.1                         & PBE-D3(BJ) reference contributions \\
        \bottomrule
    \end{tabular*}
\end{table}

\FloatBarrier
\section{Single-GPU performance benchmarks}
\label{si:h20-benchmarks}

\subsection{Whole-step throughput, capacity and deployment ablations}
\label{si:benchmark-tables}

Supplementary Table~\ref{si:tab-inference-capacity} reports the system-size scan, Table~\ref{si:tab-node-tiling} isolates node tiling, and Table~\ref{si:tab-whole-step-ablation} gives the operator timings underlying the main-text throughput analysis.
Node tiling is isolated by evaluating the same five structural profiles with the tile disabled and with the default tile of 131,072 destination atoms, keeping the graph, model, precision and molecular-dynamics input fixed.
Throughput and the largest completed scan point are recorded for both paths.
The segmented energy reduction and the warp-per-center graph fill of Supplementary Note~\ref{si:compressed-algorithm} are timed at one million atoms.
The graph count pass remains thread-per-center because it writes no edge vectors and does not benefit from the same store coalescing.
The combined step-time changes add the energy-reduction and complete-graph-construction savings to the common post-optimization step time, giving 3.435~ms before rounding.
The run-to-run spread is approximately 1\%.

\begin{table}[!htbp]
    \PaperTableStyle
    \caption{Single-H20 whole-step throughput and system capacity.}
    \label{si:tab-inference-capacity}
    \begin{threeparttable}
        \begin{tabular*}{\linewidth}{@{\extracolsep{\fill}}L{2.5cm}C{2.45cm}C{2.45cm}C{2.65cm}C{2.65cm}@{}}
            \toprule
            \textbf{Model} & \makecell{\textbf{Throughput at}\\\textbf{1,000,000 atoms}\\\textbf{(M atoms/s)}\tnote{a}} & \makecell{\textbf{Saturated}\\\textbf{throughput}\\\textbf{(M atoms/s)}\tnote{b}} & \makecell{\textbf{Largest}\\\textbf{completed}\\\textbf{system (atoms)}\tnote{c}} & \makecell{\textbf{First failed}\\\textbf{system (atoms)}\tnote{c}} \\
            \midrule
            DPA4C-Nano & 16.7242 & 16.4830 & 14,051,520 & 14,522,880 \\
            DPA4C-Mini & 10.3745 & 10.1919 & 14,051,520 & 14,522,880 \\
            DPA4C-Neo  & 7.1156  & 7.0181  & 14,051,520 & 14,522,880 \\
            DPA4C-Air  & 3.8223  & 3.8059  & 14,051,520 & 14,522,880 \\
            DPA4C-Plus & 2.1640  & 2.1617  & 14,051,520 & 14,522,880 \\
            NEP89      & 8.6527  & 8.5674  & 10,455,280 & 11,036,032 \\
            \bottomrule
        \end{tabular*}
        \begin{tablenotes}[flushleft]
            \PaperTableNotesStyle
            \item[] DPA4C and NEP89 are measured in NVT simulations of the same diamond-carbon supercells using LAMMPS/Kokkos and GPUMD, respectively.
            \item[a] Throughput includes graph construction, model evaluation, force and virial assembly and integration; the values at one million atoms are individual scan points.
            \item[b] Saturated throughput summarizes the high-occupancy regime of each system-size scan.
            \item[c] The capacity columns report the largest completed system and the next scanned size that failed.
        \end{tablenotes}
    \end{threeparttable}
\end{table}

\begin{table}[!htbp]
    \PaperTableStyle
    \caption{Node-tiling ablation for the five DPA4C variants.}
    \label{si:tab-node-tiling}
    \begin{threeparttable}
        \begin{tabular*}{\linewidth}{@{\extracolsep{\fill}}L{2.3cm}C{2.6cm}C{2.6cm}C{2.8cm}C{2.8cm}@{}}
            \toprule
            \textbf{Model} & \makecell{\textbf{Throughput}\\\textbf{before}\\\textbf{(M atoms/s)}} & \makecell{\textbf{Throughput}\\\textbf{after}\\\textbf{(M atoms/s)}} & \makecell{\textbf{Largest}\\\textbf{completed}\\\textbf{before (atoms)}} & \makecell{\textbf{Largest}\\\textbf{completed}\\\textbf{after (atoms)}} \\
            \midrule
            DPA4C-Nano & 15.738 & 15.699 & 11,036,032 & 14,051,520 \\
            DPA4C-Mini & 9.988  & 9.813  & 9,524,736  & 14,051,520 \\
            DPA4C-Neo  & 6.919  & 6.783  & 8,000,000  & 14,051,520 \\
            DPA4C-Air  & 3.773  & 3.740  & 8,000,000  & 14,051,520 \\
            DPA4C-Plus & 2.139  & 2.146  & 5,510,880  & 14,051,520 \\
            \bottomrule
        \end{tabular*}
        \begin{tablenotes}[flushleft]
            \PaperTableNotesStyle
            \item[] Before evaluates the complete destination-node axis in one pass; after uses the default tile of 131,072 destination atoms.
            Each throughput pair uses the same model, graph, precision and molecular-dynamics input; capacity is the largest completed atom count.
        \end{tablenotes}
    \end{threeparttable}
\end{table}

\begin{table}[!htbp]
    \PaperTableStyle
    \caption{Whole-step operator ablations at one million atoms.}
    \label{si:tab-whole-step-ablation}
    \begin{threeparttable}
        \begin{tabular*}{\linewidth}{@{\extracolsep{\fill}}
                L{3.5cm}C{2.4cm}C{2.8cm}C{1.35cm}C{1.35cm}C{0.9cm}@{}} \toprule \textbf{Optimization} & \textbf{Model} & \textbf{Quantity} & \textbf{Before} & \textbf{After} & \textbf{Unit} \\ \midrule Segmented energy reduction & All variants & Component time & 1,758 & 17 & \(\mu\)s \\ Warp-per-center graph fill & All variants & Component time & 13,816 & 12,053 & \(\mu\)s \\ Complete graph construction & All variants & Component time & 24,106 & 22,412 & \(\mu\)s \\ \midrule Both operators & DPA4C-Nano & Whole-step time & 63.19 & 59.76 & ms \\ Both operators & DPA4C-Mini & Whole-step time & 100.36 & 96.93 & ms \\ Both operators & DPA4C-Neo & Whole-step time & 143.28 & 139.85 & ms \\ Both operators & DPA4C-Air & Whole-step time & 266.45 & 263.02 & ms \\ Both operators & DPA4C-Plus & Whole-step time & 466.24 & 462.81 & ms \\ \bottomrule \end{tabular*}
        \begin{tablenotes}[flushleft]
            \PaperTableNotesStyle
            \item[] The first three rows are direct component timings.
            Complete graph construction includes the unchanged count pass.
            The ``Both operators'' rows combine the 1.74-ms energy-reduction and 1.69-ms graph-construction savings with the shared post-optimization step time; they isolate the operator contributions and are not simultaneous paired measurements of the complete step.
        \end{tablenotes}
    \end{threeparttable}
\end{table}

\FloatBarrier
\subsection{Single-GPU crystal scans and empirical-potential references}
\label{si:classical-reference}

Supplementary Fig.~\ref{si:fig-v100-crystal-scan} compares complete NVT steps for diamond carbon and FCC copper on one Tesla V100-SXM2-16GB GPU.
Each periodic system is a near-cubic conventional-cell supercell.
Diamond carbon uses the eight-atom cell with lattice constant \(3.567~\Ang\), and FCC copper uses the four-atom cell with lattice constant \(3.615~\Ang\).
At the DPA4C and NEP89 model cutoff of \(6~\Ang\), the perfect crystals contain 158 and 78 neighbors per atom, respectively.
The empirical potentials retain their native cutoffs.
They are computational references rather than accuracy-matched baselines.

For every material--model pair, three independent Slurm allocations start at 128 requested atoms and double the requested count until the first recognized GPU out-of-memory failure.
Exactly three target-space bisections then refine the interval between the last successful and first failed requests.
Every request is converted to a near-cubic conventional-cell replication whose edge counts differ by at most one, and the exact realized atom count is retained in the result record.
Each point uses NVT dynamics at 300~K, a 1-fs time step, 10 warm-up steps and 100 timed steps.
The LAMMPS paths use a \(1.0~\Ang\) neighbor skin and check the neighbor list every step.
GPUMD uses its native neighbor handling.
The throughput curves report the arithmetic mean of the three complete-step measurements at each realized size.
The capacity values in Supplementary Tables~\ref{si:tab-v100-carbon-scan} and~\ref{si:tab-v100-copper-scan} are the median largest completed and first OOM systems.
All three repeats gave the same bracket for every model.
Timeouts and execution failures without a CUDA, Kokkos or engine OOM signature are not accepted as capacity boundaries.

\begin{figure}[!htbp]
    \centering
    \includegraphics[width=\linewidth]{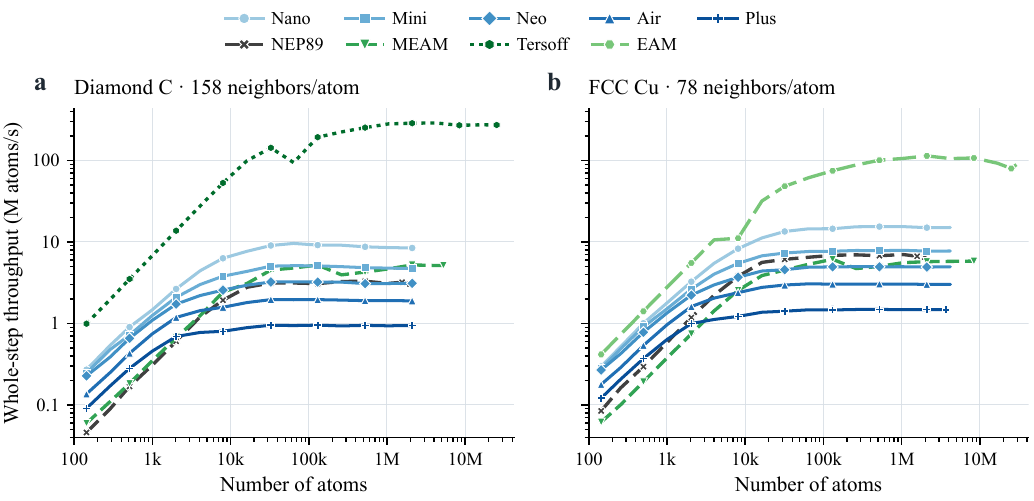}
    \caption{%
        Whole-step throughput across crystal sizes on a 16-GB NVIDIA Tesla V100-SXM2 GPU.
        \textbf{a}, Diamond carbon; \textbf{b}, FCC copper.
        Curves connect arithmetic means from three independent scans at each exact realized atom count.
        DPA4C and the empirical potentials use LAMMPS/Kokkos, whereas NEP89 uses native GPUMD; each empirical potential retains its native cutoff.
        The element-specific potentials provide computational references only, and predictive accuracy is outside this comparison.
    }
    \label{si:fig-v100-crystal-scan}
\end{figure}

\begin{table}[!htbp]
    \PaperTableStyle
    \caption{Diamond-carbon throughput and capacity on a 16-GB NVIDIA Tesla V100-SXM2 GPU.}
    \label{si:tab-v100-carbon-scan}
    \begin{threeparttable}
        \begin{tabular*}{\linewidth}{@{\extracolsep{\fill}}L{2.3cm}C{2.5cm}C{2.5cm}C{2.5cm}C{2.5cm}@{}}
            \toprule
            \textbf{Model} & \makecell{\textbf{Saturated}\\\textbf{throughput}\tnote{a}\\\textbf{(M atoms/s)}} & \makecell{\textbf{Atoms at}\\\textbf{saturated}\\\textbf{throughput}} & \makecell{\textbf{Largest}\\\textbf{completed}\\\textbf{system}} & \makecell{\textbf{First OOM}\\\textbf{system}} \\
            \midrule
            DPA4C-Nano & 9.6009   & 64,000    & 2,097,152  & 2,370,192  \\
            DPA4C-Mini & 5.1178   & 64,000    & 2,097,152  & 2,370,192  \\
            DPA4C-Neo  & 3.2380   & 64,000    & 2,097,152  & 2,370,192  \\
            DPA4C-Air  & 1.9627   & 32,768    & 2,097,152  & 2,370,192  \\
            DPA4C-Plus & 0.9545   & 130,000   & 2,097,152  & 2,370,192  \\
            NEP89      & 3.2980   & 524,800   & 1,968,624  & 2,097,152  \\
            \midrule
            MEAM       & 5.2302   & 2,097,152 & 5,268,024  & 5,767,200  \\
            Tersoff    & 288.7906 & 4,199,040 & 29,407,840 & 31,554,496 \\
            \bottomrule
        \end{tabular*}
        \begin{tablenotes}[flushleft]
            \PaperTableNotesStyle
            \item[] All capacities are operational bounds under the stated engine-native protocols, not hardware-independent limits.
            \item[a] The value is the largest three-allocation mean among successful scan points; throughput covers the complete NVT step.
        \end{tablenotes}
    \end{threeparttable}
\end{table}

\begin{table}[!htbp]
    \PaperTableStyle
    \caption{FCC-copper throughput and capacity on a 16-GB NVIDIA Tesla V100-SXM2 GPU.}
    \label{si:tab-v100-copper-scan}
    \begin{threeparttable}
        \begin{tabular*}{\linewidth}{@{\extracolsep{\fill}}L{2.3cm}C{2.5cm}C{2.5cm}C{2.5cm}C{2.5cm}@{}}
            \toprule
            \textbf{Model} & \makecell{\textbf{Saturated}\\\textbf{throughput}\tnote{a}\\\textbf{(M atoms/s)}} & \makecell{\textbf{Atoms at}\\\textbf{saturated}\\\textbf{throughput}} & \makecell{\textbf{Largest}\\\textbf{completed}\\\textbf{system}} & \makecell{\textbf{First OOM}\\\textbf{system}} \\
            \midrule
            DPA4C-Nano & 15.4357  & 1,048,576 & 4,203,216  & 4,719,120  \\
            DPA4C-Mini & 7.9040   & 1,048,576 & 4,203,216  & 4,719,120  \\
            DPA4C-Neo  & 4.9783   & 262,400   & 4,203,216  & 4,719,120  \\
            DPA4C-Air  & 3.0673   & 65,000    & 4,203,216  & 4,719,120  \\
            DPA4C-Plus & 1.4894   & 520,200   & 3,920,400  & 4,203,216  \\
            NEP89      & 7.0122   & 1,048,576 & 1,972,156  & 2,099,520  \\
            \midrule
            MEAM       & 6.1312   & 131,072   & 8,388,608  & 9,410,548  \\
            EAM        & 113.9039 & 2,099,520 & 29,356,080 & 31,522,396 \\
            \bottomrule
        \end{tabular*}
        \begin{tablenotes}[flushleft]
            \PaperTableNotesStyle
            \item[] All capacities are operational bounds under the stated engine-native protocols, not hardware-independent limits.
            \item[a] The value is the largest three-allocation mean among successful scan points; throughput covers the complete NVT step.
        \end{tablenotes}
    \end{threeparttable}
\end{table}

The calculations use LAMMPS 4 Jul 2026 with double-precision Kokkos, OpenMPI 5.0.10 and NVIDIA driver 550.163.01.
DPA4C uses the \texttt{deepmd/kk} pair style, a full neighbor list and Newton pair off.
MEAM, Tersoff and EAM use their Kokkos pair styles, half neighbor lists and Newton pair on.
NEP89 uses the native CUDA implementation in GPUMD revision \texttt{2f9d6e14}.
Carbon MEAM selects the C record in \texttt{library.meam} without a separate parameter file, and carbon Tersoff selects the C--C--C record in \texttt{SiC.tersoff}.
Copper EAM uses \texttt{Cu\_mishin1.eam.alloy}~\cite{mishin2001copper}.
Copper MEAM combines the Cu record in \texttt{library.meam} with \texttt{Cu.meam}.

Supplementary Listing~\ref{si:lst-v100-dpa4c} gives the DPA4C input shared by both crystals.
\texttt{lattice\_style}, \texttt{lattice\_constant}, \texttt{mass}, \texttt{element} and the replication counts are set from the selected crystal, and \texttt{model} is the packaged DPA4C variant.
The four empirical-potential inputs are reproduced in Supplementary Listings~\ref{si:lst-v100-carbon-meam}--\ref{si:lst-v100-copper-meam}.
LAMMPS is launched with one process using \texttt{-k on g 1 -sf kk}.
DPA4C adds \texttt{-pk kokkos neigh full newton off}, whereas the empirical potentials add \texttt{-pk kokkos neigh half newton on}.
Supplementary Listing~\ref{si:lst-v100-nep89} gives the GPUMD settings.
\texttt{model.xyz} contains the same generated crystal and exact realized atom count used by LAMMPS.

\noindent\begin{minipage}{\linewidth}
\begin{lstlisting}[
        style=LAMMPSInput,
        caption={DPA4C input for the single-V100 crystal scans.},
        label={si:lst-v100-dpa4c}
    ]
units           metal
boundary        p p p
atom_style      atomic
newton          off
atom_modify     map yes

lattice         ${lattice_style} ${lattice_constant}
region          box block 0 ${nx} 0 ${ny} 0 ${nz} units lattice
create_box      1 box
create_atoms    1 box
mass            1 ${mass}

pair_style      deepmd ${model}
pair_coeff      * * ${element}

neighbor        1.0 bin
neigh_modify    every 1 delay 0 check yes

velocity        all create 300.0 12345 mom yes rot yes dist gaussian
fix             integration all nvt temp 300.0 300.0 0.1
timestep        0.001

thermo          100
run             10
run             100
\end{lstlisting}
\end{minipage}

\noindent\begin{minipage}{\linewidth}
\begin{lstlisting}[
        style=LAMMPSInput,
        caption={Carbon MEAM input for the single-V100 crystal scan.},
        label={si:lst-v100-carbon-meam}
    ]
units           metal
boundary        p p p
atom_style      atomic
newton          on

lattice         diamond 3.567
region          box block 0 ${nx} 0 ${ny} 0 ${nz} units lattice
create_box      1 box
create_atoms    1 box
mass            1 12.011

pair_style      meam
pair_coeff      * * library.meam C NULL C

neighbor        1.0 bin
neigh_modify    every 1 delay 0 check yes

velocity        all create 300.0 12345 mom yes rot yes dist gaussian
fix             integration all nvt temp 300.0 300.0 0.1
timestep        0.001

thermo          100
run             10
run             100
\end{lstlisting}
\end{minipage}

\noindent\begin{minipage}{\linewidth}
\begin{lstlisting}[
        style=LAMMPSInput,
        caption={Carbon Tersoff input for the single-V100 crystal scan.},
        label={si:lst-v100-carbon-tersoff}
    ]
units           metal
boundary        p p p
atom_style      atomic
newton          on

lattice         diamond 3.567
region          box block 0 ${nx} 0 ${ny} 0 ${nz} units lattice
create_box      1 box
create_atoms    1 box
mass            1 12.011

pair_style      tersoff
pair_coeff      * * SiC.tersoff C

neighbor        1.0 bin
neigh_modify    every 1 delay 0 check yes

velocity        all create 300.0 12345 mom yes rot yes dist gaussian
fix             integration all nvt temp 300.0 300.0 0.1
timestep        0.001

thermo          100
run             10
run             100
\end{lstlisting}
\end{minipage}

\noindent\begin{minipage}{\linewidth}
\begin{lstlisting}[
        style=LAMMPSInput,
        caption={Copper EAM input for the single-V100 crystal scan.},
        label={si:lst-v100-copper-eam}
    ]
units           metal
boundary        p p p
atom_style      atomic
newton          on

lattice         fcc 3.615
region          box block 0 ${nx} 0 ${ny} 0 ${nz} units lattice
create_box      1 box
create_atoms    1 box
mass            1 63.546

pair_style      eam/alloy
pair_coeff      * * Cu_mishin1.eam.alloy Cu

neighbor        1.0 bin
neigh_modify    every 1 delay 0 check yes

velocity        all create 300.0 12345 mom yes rot yes dist gaussian
fix             integration all nvt temp 300.0 300.0 0.1
timestep        0.001

thermo          100
run             10
run             100
\end{lstlisting}
\end{minipage}

\noindent\begin{minipage}{\linewidth}
\begin{lstlisting}[
        style=LAMMPSInput,
        caption={Copper MEAM input for the single-V100 crystal scan.},
        label={si:lst-v100-copper-meam}
    ]
units           metal
boundary        p p p
atom_style      atomic
newton          on

lattice         fcc 3.615
region          box block 0 ${nx} 0 ${ny} 0 ${nz} units lattice
create_box      1 box
create_atoms    1 box
mass            1 63.546

pair_style      meam
pair_coeff      * * library.meam Cu Cu.meam Cu

neighbor        1.0 bin
neigh_modify    every 1 delay 0 check yes

velocity        all create 300.0 12345 mom yes rot yes dist gaussian
fix             integration all nvt temp 300.0 300.0 0.1
timestep        0.001

thermo          100
run             10
run             100
\end{lstlisting}
\end{minipage}

\noindent\begin{minipage}{\linewidth}
\begin{lstlisting}[
        style=LAMMPSInput,
        caption={NEP89 GPUMD settings for the single-V100 crystal scans.},
        label={si:lst-v100-nep89}
    ]
potential       nep89_20250409.txt
velocity        300
time_step       1
ensemble        nvt_nhc 300 300 100
dump_thermo     1000
run             10
run             100
\end{lstlisting}
\end{minipage}

\FloatBarrier
\section{Distributed V100 scaling}
\label{si:multigpu-scaling}

The distributed measurements use NVIDIA V100-SXM2-16GB GPUs, with 16 GPUs per node and one MPI process per GPU.
A complete node contains four NUMA-local groups of four GPUs: 0--3, 4--7, 8--11 and 12--15.
Every pair within a group is connected by two NVLinks.
Communication between groups uses the system interconnect.
The site mapping binds ranks to the corresponding CPU and GPU locality.
The first cross-node allocation is therefore 32 GPUs.

The V100 calculations use LAMMPS 4 Jul 2026, OpenMPI 5.0.10, CUDA 12.9 and NVIDIA driver 550.163.01.
The DeePMD runtime is version \texttt{3.2.0b1.
}\allowbreak\texttt{dev239+}\allowbreak\texttt{g048536d1a}, with PyTorch \texttt{2.13.0+}\allowbreak\texttt{cu126} and Python 3.13.15.
The launch environment sets \texttt{OMP\_NUM\_THREADS=2} for DPA4C, although this Kokkos path reports one active OpenMP thread per MPI process.
The standard DPA4-Mini path uses eight.
These settings remain fixed per GPU across their respective scaling series.

Each model follows its supported LAMMPS execution contract.
DPA4C uses \texttt{deepmd/kk}, a full neighbor list, Kokkos \texttt{newton off}, device communication and explicit device-aware reverse communication in the pair adapter.
DPA4-Mini uses the standard \texttt{deepmd} pair style without Kokkos, a full neighbor list and Newton pair forces.
The Newton setting follows the adapter and pair-style contract and is excluded from architectural interpretation.

DPA4C strong scaling uses the fixed 2,000,376-atom system defined in Methods.
Nano, Mini, Neo, Air and Plus use 400+4,000, 250+2,500, 150+1,500, 100+1,000 and 100+500 warm-up and timed steps, respectively.
The longer schedules for the smaller variants compensate for their lower per-step cost.
The per-GPU workload reaches about 3,907 atoms at 512 GPUs and 1,953 atoms at 1,024 GPUs.
Supplementary Fig.~\ref{si:fig-v100-strong-detail} gives the corresponding speedup and the LAMMPS timing fraction outside pair evaluation.
The latter includes neighbor construction, communication, fixes and other complete-step overheads.
It does not isolate MPI communication.

\begin{figure}[!htbp]
    \centering
    \includegraphics[width=\linewidth]{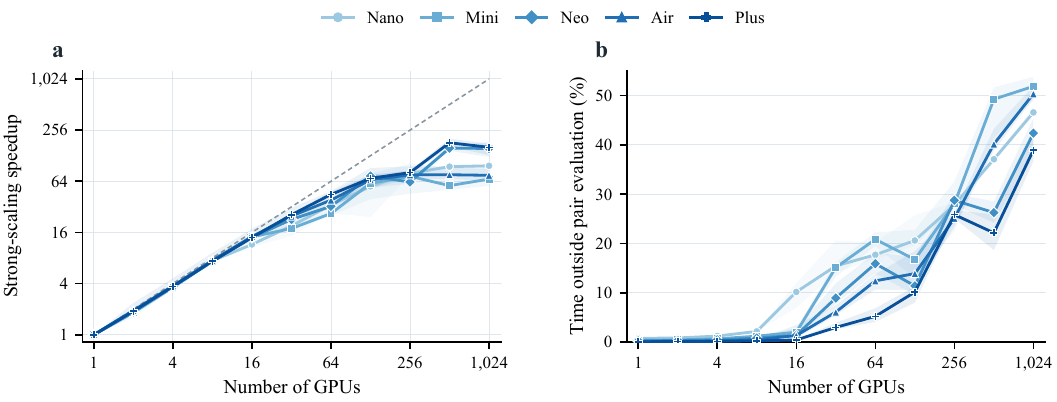}
    \caption{%
        Supplementary detail for DPA4C strong scaling of the fixed 2,000,376-atom system on 16-GB NVIDIA Tesla V100-SXM2 GPUs.
        \textbf{a}, Speedup relative to one GPU; the grey dashed line denotes linear speedup.
        \textbf{b}, Fraction of the complete LAMMPS step spent outside pair evaluation.
        Lines show arithmetic means and pale bands show the full range across ten independent allocations.
    }
    \label{si:fig-v100-strong-detail}
\end{figure}

The weak-scaling conventional-cell repetitions follow a balanced integer processor grid, giving every rank the same cubic \(63\times63\times63\)-cell local domain.
The global box is rectangular when the GPU count has no cubic factorization.
For example, 1,024 GPUs use an \(8\times8\times16\) processor grid and a \(504\times504\times1008\)-cell box.

Supplementary Table~\ref{si:tab-v100-weak-endpoints} gives the one- and 1,024-GPU endpoints.
DPA4-Mini uses its own cubic 5,832-atom-per-GPU local workload because its memory footprint does not admit the DPA4C workload.
Its 1,024-GPU system contains 5,971,968 atoms.
Its strong-scaling reference fixes 8,000 atoms on 1--32 GPUs, and both DPA4-Mini series use 100 warm-up and 500 timed steps.
The shared efficiency axis reports the fraction of each model's own single-GPU performance retained at fixed local work.
The unequal atom counts preclude an absolute-throughput comparison between the model families.

\begin{table}[!htbp]
    \PaperTableStyle
    \caption{Weak-scaling endpoints on 16-GB NVIDIA Tesla V100-SXM2 GPUs.}
    \label{si:tab-v100-weak-endpoints}
    \begin{threeparttable}
        {\small
            \begin{tabular*}{\linewidth}{@{\extracolsep{\fill}}lcccccc@{}}
                \toprule
                \textbf{Model} & \makecell{\textbf{Atoms}\\\textbf{per GPU}} & \makecell{\textbf{1-GPU}\\\textbf{rate}\\\textbf{(M atoms/s)}} & \makecell{\textbf{1,024-GPU}\\\textbf{rate}\\\textbf{(M atoms/s)}} & \makecell{\(\bm\eta_{\mathbf{w}}\)\\\textbf{(\%)}} & \makecell{\textbf{MD speed}\\\textbf{(ns/day)}} & \makecell{\textbf{Peak memory}\tnote{a}\\\textbf{(GiB)}} \\
                \midrule
                DPA4C-Nano & 2,000,376 & 8.995 & 7,673.4 & 83.3 & 0.324 & 13.5 / 13.6 \\
                DPA4C-Mini & 2,000,376 & 4.871 & 4,351.5 & 87.2 & 0.184 & 13.8 / 13.9 \\
                DPA4C-Neo  & 2,000,376 & 3.056 & 2,831.4 & 90.5 & 0.119 & 14.0 / 14.1 \\
                DPA4C-Air  & 2,000,376 & 1.908 & 1,755.6 & 89.9 & 0.074 & 14.1 / 14.2 \\
                DPA4C-Plus & 2,000,376 & 0.943 & 880.7   & 91.2 & 0.037 & 14.6 / 14.7 \\
                \midrule
                DPA4-Mini  & 5,832     & 0.0271 & 19.429 & 70.1 & 0.281 & 11.1 / 13.6 \\
                \bottomrule
            \end{tabular*}
        }
        \begin{tablenotes}[flushleft]
            \PaperTableNotesStyle
            \item[] Throughput and MD speed are medians of three independent allocations.
            \item[] \(\eta_{\mathrm{w}}\) is the 1,024-GPU weak-scaling efficiency of Eq.~\eqref{eq:weak-scaling}.
            \item[a] Peak memory lists the one- and 1,024-GPU per-process medians, separated by a slash.
        \end{tablenotes}
    \end{threeparttable}
\end{table}

A separate 5,000-step production trajectory at the 1,024-GPU DPA4C-Nano endpoint advances 5~ps in 1,342.73~s, sustaining 7.628 billion atoms/s and 0.322~ns/day in agreement with the median endpoint in Supplementary Table~\ref{si:tab-v100-weak-endpoints}.

Supplementary Fig.~\ref{si:fig-v100-comparison-detail}a reports DPA4-Mini strong scaling for a fixed cubic 8,000-atom system on 1, 2, 4, 8, 16 and 32 GPUs.
Its median efficiencies are 72.7, 64.9, 57.3, 45.8 and 34.6\%, respectively.
The final two points extend the curve to 500 and 250 atoms per GPU.
Supplementary Fig.~\ref{si:fig-v100-comparison-detail}b gives the full repeat ranges for the weak-scaling data shown in Fig.~\ref{fig:multigpu-scaling}c.

\begin{figure}[!htbp]
    \centering
    \includegraphics[width=\linewidth]{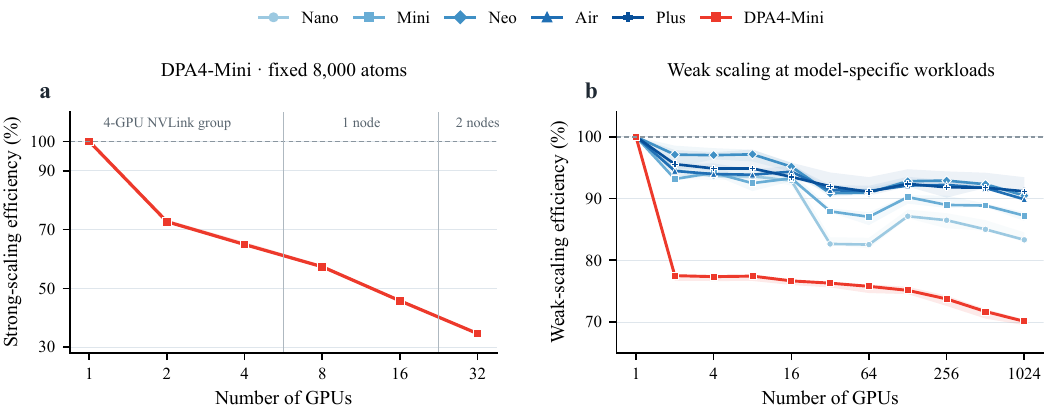}
    \caption{%
        Detailed DPA4C and DPA4-Mini scaling on 16-GB NVIDIA Tesla V100-SXM2 GPUs.
        \textbf{a}, DPA4-Mini strong-scaling efficiency at a fixed 8,000 atoms; region labels and vertical lines mark the four-GPU NVLink group, the remainder of the single node and the two-node allocation at 32 GPUs.
        \textbf{b}, Individual DPA4C and DPA4-Mini weak-scaling efficiencies shown in Fig.~\ref{fig:multigpu-scaling}c, with the full repeat ranges added here; DPA4C uses 2,000,376 and DPA4-Mini 5,832 atoms per GPU.
        Blue colors and markers denote DPA4C Nano through Plus as in Fig.~\ref{fig:multigpu-scaling}, and red squares denote DPA4-Mini.
        Lines show medians and pale bands show the full range of three allocations.
    }
    \label{si:fig-v100-comparison-detail}
\end{figure}